\documentclass[12pt]{article}

\usepackage[utf8]{inputenc}
\usepackage[T1]{fontenc}
\usepackage[margin=1in]{geometry}
\usepackage{amsmath,amssymb,amsfonts,amsthm}
\usepackage{algorithm}
\usepackage{algpseudocode}
\usepackage{graphicx}
\usepackage{xcolor}
\usepackage{url}
\usepackage{bbm}
\usepackage{natbib}
\usepackage{subfigure}
\usepackage{booktabs}
\usepackage{authblk}
\usepackage{setspace}
\usepackage{appendix}
\usepackage{hyperref}
\usepackage{comment}

\graphicspath{{Fig_pdf/}}

\newtheorem{theorem}{Theorem}
\newtheorem{proposition}[theorem]{Proposition}
\newtheorem{corollary}{Corollary}
\newtheorem{lemma}{Lemma}
\newtheorem{assumption}{Assumption}

\newtheorem{example}{Example}
\newtheorem{remark}{Remark}


\DeclareMathOperator{\proj}{proj}
\newcommand{\RR}{\mathbb{R}}
\newcommand{\Sp}[1]{Sp}

\newcommand{\RankS}[3]{S^{\##1}_{#2,#3}} 
\newcommand{\ThreS}[3]{S^{(#1)}_{#2,#3}}

\newcommand{\GenS}[2]{\tilde S_{#1,#2}}
\newcommand{\Evid}[3]{O^{#1}_{#2,#3}} 
\newcommand{\RankEvid}[3]{O^{\##1}_{#2,#3}} 
\newcommand{\LLR}{GLR}

\title{Online Multivariate Changepoint Detection: Leveraging Links With Computational Geometry}

\author[1]{Liudmila Pishchagina}
\author[2]{Gaetano Romano}
\author[2]{Paul Fearnhead}
\author[1]{Vincent Runge}
\author[1,3,4]{Guillem Rigaill\thanks{Corresponding author: guillem.rigaill@inrae.fr}}

\affil[1]{Université Paris-Saclay, CNRS, Univ Evry, Laboratoire de Mathématiques et Modélisation d'Evry, 91037, Evry-Courcouronnes, France.}
\affil[2]{School of Mathematical Sciences, Lancaster University, Lancaster LA1 4YF, UK}
\affil[3]{Université Paris-Saclay, CNRS, INRAE, Univ Evry, Institute of Plant Sciences Paris-Saclay (IPS2), 91405, Orsay, France.}
\affil[4]{Université Paris-Saclay, AgroParisTech, INRAE, UMR MIA Paris-Saclay, 91120, Palaiseau, France.}
\date{}

\begin{document}

\maketitle

\begin{abstract}
The increasing volume of data streams poses significant computational challenges for detecting changepoints online. Likelihood-based methods are effective, but a naive sequential implementation becomes impractical online due to high computational costs. 
We develop an online algorithm that exactly calculates the likelihood ratio test for a single changepoint in $p$-dimensional data streams by leveraging a fascinating connection with computational geometry. This connection straightforwardly allows us to exactly recover sparse likelihood ratio statistics: that is assuming only a subset of the dimensions are changing.
Our algorithm is straightforward, fast, and apparently quasi-linear. A dyadic variant of our algorithm is provably quasi-linear, being $\mathcal{O}(n\log(n)^{p+1})$ for $n$ data points and $p$ less than $3$, but slower in practice. These algorithms are computationally impractical when $p$ is larger than $5$, and we provide an approximate algorithm suitable for such $p$ which is $\mathcal{O}(np\log(n)^{\tilde{p}+1}), $ for some user-specified $\tilde{p} \leq 5$. We derive statistical guarantees for the proposed procedures in the Gaussian case, 
and confirm the good computational and statistical performance, and usefulness, of the algorithms on both empirical data and NBA data.
\end{abstract}

\textbf{Keywords:} computational geometry, convex hull, dynamic programming, online changepoint detection

\section{Introduction}
In recent years, many methods have been proposed for detecting one or multiple changepoints offline or online in data streams. The reason for such a keen interest in changepoint detection methods lies in its importance for various real-world applications, including bioinformatics \cite[]{olshen2004circular, Picard2005}, econometrics \cite[]{bai2003computation, Aue_monitoring}, medicine \cite[]{Bosc2003, Staudacher2005ANM}, climate and oceanography \cite[]{DucrRobitaille2003,Reeves2007}, finance \cite[]{Andreou, Fryzlewicz_2014}, autonomous driving \cite[]{galceran2017multipolicy}, entertainment \cite[]{Rybach, Radke}, computer vision \cite[]{ranganathan2012pliss} or neuroscience \cite[]{jewell2020fast}. 
Whether this is online (i.e. observations are processed as they become available) or offline (i.e. observations are processed only once the whole data is available) the ever-increasing size of the data stream raises computational challenges. Ideally, a changepoint procedure should be linear or quasi-linear in the size of the data and, for online applications, should also be able to process new observations sequentially, and as quickly as they arrive. A common idea for both offline and online parametric (multiple) changepoint model estimation, is to optimise a likelihood function or calculate likelihood ratio statistics \cite[]{siegmund1995using, Auger}. From a computational perspective this optimisation can either be done (1) heuristically using, for example, the Page-CUSUM statistic \cite[]{page1954continuous} in an online context or Binary Segmentation and its variants \cite[]{scott1974cluster,Fryzlewicz_2014} in an offline context, or (2) exactly using dynamic programming techniques as in the Segment Neighbourhood, or Optimal Partitioning algorithms \cite[]{Auger,jackson2005algorithm}. 
Importantly, these later algorithms used with an appropriate penalty enjoy good statistical properties \cite[]{Yao,lavielle2000least, Lebarbier2005,garreau2018consistent} and have shown good properties in numerous applications \cite[]{lai2005comparative,liehrmann2021increased}.

Significant speed-ups are possible by reducing the set of changepoint locations that need to be checked at each iteration of the dynamic programming recursion, a technique commonly known as pruning \cite[]{Killick, rigaill2015pruned, Maidstone, romano2022fast}. Two types of rules have been proposed that prune changepoint locations that will never contribute to the full optimisation cost, achieving significant speed-ups without sacrificing optimality. Inequality pruning rules, as in the PELT algorithm \cite[]{Killick}, apply to a wide range of changepoint models but are efficient only if the number of true changepoints increases quickly with the size of the data. 
Functional pruning, as in the pDPA, FPOP, or FOCuS algorithms \cite[]{rigaill2015pruned, Maidstone, romano2022fast}, applies mostly to univariate models, but is efficient even when the number of true changepoints is small compared  
to the size of the data.
Interestingly, these algorithms are sequential in nature and empirically quasi-linear. This quasi-linearity was recently proven for the online detection of a single changepoint \cite[]{romano2022fast}.
The pDPA, FPOP, and FOCuS algorithms rely on a functionalisation of the problem, that is they consider the likelihood as a function of the last segment parameter. A direct extension of these ideas to a higher dimension ($p>1$) has proven difficult because it involves computing the intersection and subtraction of many convex sets in $\mathbb{R}^p$ \cite[]{runge2020finite,pishchagina2023geometric}.

In this paper, we study online models with one changepoint, where the likelihood can be algebraically written as a log-likelihood from the $p$-variate natural exponential family. 
This includes, for example, the Gaussian, Poisson, categorical, and Pareto type-I distribution (assuming the minimum value is known) and also the non-parametric e-detectors of \cite{shin2022detectors} (see Section \ref{sec4_2_NBA}). We extend the functional pruning approach to such $p$-variate models using a linearisation argument embedding all likelihoods into a bigger space of dimension $p+2$. This idea allows us to demonstrate a link between the functional problem and problems studied in computational geometry.

In particular, we show that the solutions of this extension correspond to identifying the supporting hyperplanes of a particular
half-space intersection problem. 
For models with one changepoint, we demonstrate that this problem further simplifies and is dual to a lower dimensional convex hull problem of dimension $p+1$.
In this latter case, we further prove, assuming independent and identically distributed (i.i.d.) data, that the number of points on this hull at time $n$ is asymptotically $\mathcal{O}(\log^p(n))$. Leveraging this leads to an empirically log-linear per iteration algorithm for calculating the likelihood ratio test statistics for a change at~$t$.

At any time-step of this algorithm, we can exactly compute the maximum likelihood ratio-statistics and maximum sparse likelihood ratio-statistics allowing only $s$ coordinates to change with $s \leq p$. 
In  Supplementary Material \ref{app:stat_control}, we derive non-asymptotic bounds of these statistics to control the false alarm rate and derive a bound on the expected detection delay. While the computational cost is quasi-linear, our exact algorithms become impractical for large $p$, due to the $\log^p(n)$ scaling. For such cases we  

provide an approximation storing only an order of $\mathcal{O}(p\log^{\tilde{p}}(n))$ candidates on the hull, where $\tilde{p} \leq p$,  chosen by the user, represents the dimension of a projective linear space.

The paper has the following structure. Section \ref{sec1_Likelihood} describes our maximum likelihood framework. Section \ref{sec2_Functional_Pruning} presents the mathematical formulation of the functional pruning problem and how it is related to a half-space intersection problem and a convex hull problem. 
In Section \ref{sec3_MdFocus} 
we develop and introduce our proposed method, called MdFOCuS, together with an approximation that scales to larger values of $p$. An important feature of both algorithms is that they calculate sets of putative changepoint locations. Given such a set it is straightforward to maximise the test statistic of interest, including under constraints on the number of components that change, or allowing different models for different components. We provide statistical guarantees of the resulting test in the Gaussian setting and 
compare MdFOCuS with competitor algorithms in terms of statistical and computational efficiency -- showing its increased accuracy when the pre-change parameters are unknown.
In Section \ref{sec4_2_NBA} we apply our methodology to track the performances of an NBA team and discuss how our theory could accelerate the recently proposed non-parametric e-detectors \cite[]{shin2022detectors}. All proofs, and links to code for this paper, are given in the Supplementary Material.

\section{Maximum Likelihood Methods for Online Single Changepoint Detection}\label{sec1_Likelihood}
\subsection{Problem set-up}\label{sec1_1_Problem}

We consider observations $\{y_t\}_{t=1,2,\dots}$ in $\RR^{p'}$ and wish to detect the possible presence of a changepoint and estimate its location. If there is a change, we denote the time of the change as $\tau$. We will consider likelihood-based approach that postulates a model for the data in the segments before and after the changepoint.  Our model will be that the data point $y_t$ is i.i.d. 
from a distribution with density $f_Y(y_t|\theta_1)$ before the change and $f_Y(y_t|\theta_2)$ after the change. 

At time $n$, twice the log-likelihood for the time series $\{y_t\}_{t=1,\dots,n}$, in terms of the pre-changepoint parameter, $\theta_1$ in $\RR^{p''}$, the post-changepoint parameter, $\theta_2$ in $\RR^{p''}$, and the location of a change, $\tau$, is
\begin{equation}
\label{eq:likelihood}
\ell_{\tau,n}(\theta_1, \theta_2)  = 2 \left (  \sum_{t=1}^\tau \log f_Y(y_t|\theta_1) +  \sum_{t=\tau+1}^n \log f_Y(y_t|\theta_2) \right).
\end{equation}

Depending on the application one may assume that the pre-changepoint parameter $\theta_1$ is either known or unknown. In either case, we can get the log-likelihood for a changepoint at $\tau$ by maximising this equation over either the unknown parameter, $\theta_2$, or parameters, $\theta_1$ and $\theta_2$. In practice, we do not know the changepoint location, so we maximise this log-likelihood over $\tau$. We can define the maximum likelihood estimator for $\tau$, for the pre-changepoint parameter being either known or unknown respectively, as 
\begin{align*}
\hat{\tau}(\theta_1) = & \arg\max_{\tau} (\max_{\theta_2}  \ell_{\tau,n}(\theta_1, \theta_2))\,, \\
    \hat{\tau}(.)  = & \arg\max_{\tau} (\max_{\theta_1,\theta_2} \ell_{\tau,n}(\theta_1, \theta_2)). 
\end{align*}
We then obtain a Generalized Likelihood Ratio (GLR) statistic \cite[]{siegmund1995using} for a change $\tau$ by comparing it with twice the log-likelihood for a model with no change. That is $2\sum_{t=1}^n \log f_Y(y_t|\theta_1)$ if $\theta_1$ is  known and $\max_{\theta_1}2\sum_{t=1}^n  \log f_Y(y_t|\theta_1)$ if it is not.
Often it is appropriate \cite[see, for example,][]{chen2020highdimensional} to consider a sparse version of the likelihood that allows only $s$ coordinates to jump, that is adding the constraint $s=||\theta_1-\theta_2||_0$, where $||\mathbf{v}||_0$ is equal to the number of non-zero entries of $\mathbf{v}$. The estimator of $\tau$ in this case, assuming the pre-change mean is known, is  
\begin{align}\label{eq:restricted_max}
\hat{\tau}^{s}(\theta_1) = & \arg\max_{\tau} \left(\max_{\substack{\theta_2 \\ ||\theta_1-\theta_2||_0=s}}  \ell_{\tau,n}(\theta_1, \theta_2)\right)\,, 
\end{align}
with $\hat{\tau}^{s}(.)$ defined analogously for the case where the pre-change mean is unknown. Clearly, $\hat{\tau}^{p}(\theta_1) = \hat{\tau}(\theta_1)$ and $\hat{\tau}^{p}(.) = \hat{\tau}(.)$. 

Offline, optimising Equation \eqref{eq:likelihood} is simple. Indeed, assuming we have access to all the data, it suffices to compute, for every possible changepoint location from $1$ to $n-1$, the likelihood (which can be done efficiently using summary statistics) and then pick the best value. Online, the problem is substantially more difficult as one needs to recompute the maximum for every new observation. Applying the previous offline strategy for every new observation is inefficient as it takes a computational cost of order $n$ to process the $n$th observation. 
It is infeasible for an online application, where $n$ grows unbounded, and for this reason, various heuristic strategies have been proposed \cite[see][for a recent review]{wang2022sequential}. From a computational perspective, we roughly identify two main types of heuristics. 
The first idea is to restrict the set of possible $(\theta_1, \theta_2)$ pairs to be discrete. The simplest case is to consider just one pair of $(\theta_1, \theta_2)$ as in the Page CUSUM strategy \cite[]{page1954continuous}, but one typically considers a grid of values \cite[]{chen2020highdimensional}. The second idea is to restrict the set of changepoint locations one considers. For example, to consider a grid of window lengths $w_1,\ldots,w_M$ and at time $n$ compute the GLR statistic only for changes at $\tau=n-w_1,\ldots,n-w_M$ \cite[]{eiauer1978use,meier2021mosum}. But this naturally introduces a trade-off between computational efficiency and statistical power: the larger the grid of windows, the larger the computational overhead~\cite[]{wang2022sequential}.

Recently, the FOCuS algorithm has been introduced that calculates the GLR test exactly but with a per-iteration cost that increases only logarithmically with the amount of data \cite[]{romano2022fast,ward2022poissonfocus} - however, this algorithm is only applicable for detecting changes in a univariate feature of a data stream. A key insight from the FOCuS algorithm is that to maximise the likelihood, one only needs to consider changes that correspond to points on a particular convex hull in the plane. This link to geometrical features of the data is one that we leverage to develop algorithms for detecting changes in multivariate features.
\subsection{Detection of a single change in the parameter of a distribution in the exponential family}\label{sec_1_2_ExpFamily}

We briefly introduce some notation. We use $\mathbf{1}_p$ to denote a vector of length $p$ in which all elements are set to one. For any vector $\mu = (\mu^1, \dots, \mu^p)$ in $\RR^p$, we define its $q$-norm as $||\mu||_q := \left( \sum_{k=1}^p |\mu^k|^q \right)^{1/q}$, where $q\geq 1$ in $\mathbb R$. For two vectors, $\mu_1 = (\mu^1_1, \dots, \mu^p_1)$ and $\mu_2 = (\mu^1_2, \dots, \mu^p_2)$ in $\RR^p$, we define their scalar product as $\langle \mu_1,\mu_2\rangle = \sum_{k=1}^p \mu_1^k\mu_2^k$. The symbol $|\cdot|$ signifies the cardinality of $\cdot$. 

We are now ready to describe our model. We assume that the density $f_Y(y_t|\theta_k)$, for observation $y_t$ in segment $k \in \{1,2\}$ which has segment-specific parameter $\theta_k$, can be written in the form
\begin{equation}
\label{eq:density}  
f_Y(y_t|\theta_k) = \exp \left \{- \frac{1}{2} \left(A'(r(\theta_{k})) - 2\langle  s(y_t), r(\theta_{k}) \rangle +B'(s(y_t))\right)\right\}\,,
\end{equation}
where $s$ and $r$ are functions mapping respectively the observations, $y_t$, and the parameters, $\theta_{k}$, to $\RR^p$, $A'$ is some convex function from $\RR^p$ to $\RR$, and $B'$ from $\RR^p$ to $\RR$. Defining the natural parameter $\eta_k$ as  $\eta_k = r(\theta_k),$ and $x_t$ as $x_t = s(y_t)$ we rewrite this density as a function of $\eta_k$ and $x_t$:
\begin{equation}
\label{eq:naturalDensity}  
f_X(x_t|\eta_k)  =  \exp\left\{ - \frac{1}{2} \left(A'(\eta_k) - 2\langle x_t, \eta_k \rangle +B'(x_t)\right)\right\}\,.
\end{equation}
This class encompasses many common exponential family models, see Table \ref{tab1_ExpFamily} in the Supplementary Material. There, we also show how a change in mean and variance of a Gaussian signal in $\RR$ can be expressed in this form. 

We can then define the log-likelihood of our (transformed) data $x_1,\ldots,x_t$, under a model with a change at $\tau$ in terms of $\eta_1$ and $\eta_2$ as
\begin{align}
\label{eq:loglikelihood_eta}
\ell_{\tau,n}(\eta_1, \eta_2) =  -\sum_{t=1}^\tau [A'(\eta_1) - 2\langle x_t, \eta_1 \rangle + B'(x_t)]  - \sum_{t=\tau+1}^n [A'(\eta_2) - 2\langle x_t, \eta_2\rangle + B'(x_t)]. 
\end{align}
The log-likelihood estimator for $\tau$, for the pre-change parameter being either known or unknown respectively, is
\begin{align*}
    \hat{\tau}(\eta_1) =  \arg\max_{\tau} (\max_{\eta_2}  \ell_{\tau,n}(\eta_1, \eta_2))\,, \mbox{ or }
    \hat{\tau}(.)  =  \arg\max_{\tau} (\max_{\eta_1,\eta_2} \ell_{\tau,n}(\eta_1, \eta_2)). 
\end{align*} 
The GLR statistic for a change $\tau$ is obtained  by comparing with $\sum_{t=1}^n [A'(\eta_1) - 2\langle x_t, \eta_1 \rangle + B'(x_t)]$ if $\eta_1$ is known and with $\max_{\eta_1}{\sum_{t=1}^n [A'(\eta_1) - 2\langle x_t, \eta_1 \rangle + B'(x_t)]}$ if it is not.

For any fixed $\eta_1$ and any $\eta_2$ the addition of $\sum_{t=1}^n \left[A'(\eta_2) - 2\langle x_t, \eta_2\rangle + B '(x_t)\right]$ to \eqref{eq:loglikelihood_eta} does not change $\arg\max_\tau \ell_{\tau,n}(\eta_1, \eta_2)$. Moreover, this addition eliminates the dependence on $n$. Hence, when searching for the maximum likelihood change, we can redefine $\ell_{\tau, n}(\eta_1, \eta_2)$ as follows:
\begin{equation}
\label{eq:likelihood_transf}
\ell_{\tau}(\eta_1, \eta_2) = \tau \left[ A'(\eta_2) - A'(\eta_1)\right] - 2\left\langle \sum_{t=1}^\tau x_t, \eta_2 - \eta_1\right\rangle,
\end{equation}
and be sure that the optimal change optimises $\ell_{\tau}(\eta_1, \eta_2)$ at least for a pair $(\eta_1, \eta_2)$. Sometimes it might make sense to regularise the parameters. This can be done by adding a function of the parameter $\Omega(\eta_1, \eta_2)$, such as  $\Omega(\eta_1, \eta_2) = \lambda (||\eta_1||_1+||\eta_2||_1)$ for some $\lambda > 0$, to \eqref{eq:loglikelihood_eta}.  We will not study this further, but by the previous addition trick, any such regularisation term could be subtracted, and the rest of our argument would hold.

Throughout, we will primarily focus on the case where $\eta_1$ is known.
Our results for the case where it is unknown 
are derived as a by-product: considering all possible $\eta_1$ simultaneously and realising they all lead to the same geometrical problem. Considering $\eta_1$ to be fixed we thus introduce the following definitions:
$\mu=\eta_2-\eta_1,\quad A(\mu) =A'(\mu +\eta_1) - A'(\eta_1)$,
and rewrite the function $\ell_{\tau}(\eta_1, \eta_2)$ as a function of $\mu$:
\begin{equation}
\label{eq:likelihood_change_variables} 
\ell_{\tau}(\mu) = \tau A(\mu) - 2\left\langle \sum_{t=1}^\tau x_t, \mu \right\rangle.
\end{equation}

\section{Linear Relaxation of the Functional Pruning Problem}\label{sec2_Functional_Pruning}

\subsection{Mathematical Formulation}\label{sec2_1_General_formulation}

Functional pruning ideas give the potential of an efficient approach to recursively calculate the  GLR statistic for a change at time $n+1$ given calculations at time~$n$ \cite[]{romano2022fast}. The idea is to restrict the set of possible changepoints prior to $n+1$ that we need to consider. Working out which changepoint locations we need to consider is obtained by viewing the GLR statistic as a function of the change parameter $\mu = \eta_2-\eta_1$, and finding the set of changepoint locations which contribute to this, i.e. for which the GLR statistic is maximum for some value of $\mu$.

In more detail, at step $n$ of a functional pruning algorithm, we consider a list of possible changepoints $\tau$ of the data from $x_1$, $x_2$ \ldots to $x_n$. Each changepoint is represented by a function $f_{\tau}$ of the last segment parameter $\mu$ and a set of $\mu$ values for which it is maximising the likelihood: $Set_{\tau}= \{\mu \; | \; \forall \ \tau'\neq \tau, \ f_\tau(\mu) > f_{\tau'}(\mu) \}$. To proceed to the next data point $n+1$ the update is informally done as follows. 
\begin{enumerate}
\item We add $n$ to the list of changepoints, with this represented by a function $f_n$.
\item We compare all other functions to this new function and accordingly restrict the set of $\mu$ for which they are optimal. 
\item We discard all changepoints whose set $Set_{\tau}$ is empty.
\end{enumerate}

The left column of Figure \ref{fig:linearisation} provides a graphical representation of these steps. The pruning of the third step is valid because, whatever the data points after $n+1$ might be, these functions (or equivalently changepoints) can never maximise the log-likelihood. Formally, define functions $f_\tau$, for $\tau$ in some set $\mathcal{T}$, such that 
\begin{equation}
\label{eq:FP_equation}
f_\tau(\mu) = a_\tau A(\mu) - 2 \langle b_\tau, \mu \rangle,
\end{equation}
where $a_\tau$ is in $\mathbb{R}$, and $b_\tau$ is in $\RR^p$ and $A$ is a convex function from $\mathcal{D}_A$, a subset of $\RR^p$. We define $Im(A)$ as the image of the function $A$. 
Function $f_\tau$ corresponds to the log-likelihood for a segmentation with a change at $\tau$ and with parameter $\mu$. This function will depend on $\mu$ through the log-likelihood contribution of the data in the segment after $\tau$, with this data defining the co-efficients $a_\tau$ and $b_\tau$. 
We then define
\begin{equation*}
F_{\mathcal T}(\mu) = \max_{\tau \in \mathcal T} \{  f_\tau(\mu)\}\,,
\end{equation*}
and our goal is to recover all the indices $\tau$ maximising $F_{\mathcal T}$  for at least one value of $\mu$. Define:
\begin{align}
\label{eq:Set_Of_FPLin_Minimizers}
\mathcal F_{\mathcal T} =  \{\tau \; | \; \exists \mu \in \mathcal{D}_A \subset \RR^p,  \forall \tau' \in \mathcal T \text{ with } \tau' \neq \tau : \ f_\tau(\mu) > f_{\tau'}(\mu) \}.
\end{align}
This is defined using a strict inequality. As function $A$ is strictly convex, (and assuming that for all $\tau \neq \tau'$ we cannot have simultaneously $a_\tau=a_{\tau'}$ and $b_\tau=b_{\tau'}$) equality can be obtained only over a set of zero measure, and can be ignored. 
%
\subsection{Linearisation of the Functional Pruning Problem}
\label{sec2_2_Linearization}
The size of the set $\mathcal{F}_{\mathcal{T}}$ depends on the nature of the function $A$. In essence, to consider any possible function $A$, we introduce a new parameter $\lambda$ to replace $A(\mu)$. This way we get a problem that does not depend on the form of $A$ and, as we will see, is easier to solve, albeit at the cost of an additional dimension. To be specific, for any function $\mu \mapsto f_\tau(\mu)$ we associate an (augmented) function $(\lambda, \mu) \mapsto g_\tau(\lambda, \mu)$ defined as:
\begin{equation*}
g_\tau(\lambda, \mu) = a_\tau \lambda - 2 \langle b_\tau, \mu \rangle.
\end{equation*}
As for the $f_\tau$, we define the maximum $G_{\mathcal{T}}(\lambda, \mu)$ as
\begin{equation*}
G_{\mathcal{T}}(\lambda, \mu) = \max_{\tau \in \mathcal{T}} \{ g_\tau(\lambda, \mu)\}.
\end{equation*}
Again our goal will be to recover the index $\tau$ strictly maximising $G_{\mathcal{T}}$:
\begin{align*}
\mathcal{G}_{\mathcal{T}} =  \{ \tau \; | \; \exists (\lambda,\mu) \in Im(A)\times\mathcal{D}_A,
      \forall \tau' \in \mathcal T \text{ with } \tau' \neq \tau :  \ g_\tau(\lambda, \mu) > g_{\tau'}(\lambda, \mu) \}. 
\end{align*}
In the previous definition, the parameter $\lambda$ is constrained to be in the image of the function $A$. 
The values attained by $f_\tau$ are also attained by $g_\tau$ on a parametric curve of the $(\lambda, \mu)$ space given by equation $\lambda = A(\mu)$, so that $f_\tau(\mu)=g_\tau(A(\mu),\mu)$. 
This trivially leads to the following result that the set of $\tau$ strictly maximising $\mathcal{G}_{\mathcal{T}}$ must contain the set strictly maximising $\mathcal{F}_{\mathcal{T}}$.
\begin{theorem}
\label{th:inclusion}
$\mathcal{F}_{\mathcal{T}} \subseteq \mathcal{G}_{\mathcal{T}}\,.$ 
\end{theorem}
Without any further assumptions on the function $A$, the set $\mathcal{G}_{\mathcal{T}}$ could be much larger than $\mathcal{F}_{\mathcal{T}}$. The following theorem demonstrates that equality holds for an important class of functions $A(\mu)$.
\begin{theorem}\label{th:equality}
If $A(\mu) = \|\mu\|^q$ with $q > 1$, then $\mathcal{F}_{\mathcal{T}} = \mathcal{G}_{\mathcal{T}}$.
\end{theorem}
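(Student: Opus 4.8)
By Theorem~\ref{th:inclusion} we already have $\mathcal F_{\mathcal T}\subseteq\mathcal G_{\mathcal T}$, so the entire content of the statement is the reverse inclusion $\mathcal G_{\mathcal T}\subseteq\mathcal F_{\mathcal T}$. The plan is to fix $\tau\in\mathcal G_{\mathcal T}$, take a witness $(\lambda^\ast,\mu^\ast)\in Im(A)\times\mathcal D_A=[0,\infty)\times\RR^p$ at which $g_\tau$ strictly dominates every $g_{\tau'}$, and then explicitly manufacture a single $\mu\in\RR^p$ at which $f_\tau$ strictly dominates every $f_{\tau'}$; producing such a $\mu$ exhibits $\tau\in\mathcal F_{\mathcal T}$.

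The structural remark I would record first is that each pairwise condition $g_\tau(\lambda,\mu)>g_{\tau'}(\lambda,\mu)$ reads $(a_\tau-a_{\tau'})\lambda-2\langle b_\tau-b_{\tau'},\mu\rangle>0$, which is linear and positively homogeneous of degree one in $(\lambda,\mu)$; together with the constraint $\lambda\ge 0$ this makes the witness set for $\tau$ a convex cone, so I am free to rescale $(\lambda^\ast,\mu^\ast)$ by any positive factor. The corresponding $f$-condition is the same expression with $\lambda$ replaced by $\|\mu\|^q$. The natural attempt is therefore the one-parameter family $\mu=t\mu^\ast$, $t>0$: substituting and factoring out $t$ turns $f_\tau(t\mu^\ast)>f_{\tau'}(t\mu^\ast)$ into $(a_\tau-a_{\tau'})\,t^{q-1}\|\mu^\ast\|^q-2\langle b_\tau-b_{\tau'},\mu^\ast\rangle>0$, i.e.\ exactly the $g$-condition evaluated at $(t^{q-1}\|\mu^\ast\|^q,\mu^\ast)$. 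Because $q>1$, the map $t\mapsto t^{q-1}$ is an increasing bijection of $[0,\infty)$, so I can pick $t$ with $t^{q-1}\|\mu^\ast\|^q=\lambda^\ast$ and recover the satisfied inequality for every $\tau'$ at once, the index set $\mathcal T$ being finite.

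This direct substitution settles the generic case $\mu^\ast\neq 0,\ \lambda^\ast>0$; the remaining work is the two degenerate configurations, which I expect to be the only real obstacle. When $\lambda^\ast=0$ (which forces $\langle b_\tau-b_{\tau'},\mu^\ast\rangle<0$) the prescribed $t$ is zero, so instead I would send $t\downarrow 0$ along $\mu=t\mu^\ast$: after dividing by $t$, the term $(a_\tau-a_{\tau'})t^{q-1}\|\mu^\ast\|^q$ vanishes (again because $q>1$) while the surviving linear term is strictly positive, so small $t$ suffices. When $\mu^\ast=0$ (hence $\lambda^\ast>0$ and $a_\tau>a_{\tau'}$ for every $\tau'$) I would instead push $\|\mu\|\to\infty$ along a fixed direction and invoke superlinearity of $\|\mu\|^q$: the $\|\mu\|^q$ term dominates the linear term and all inequalities hold at large radius. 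A careful write-up would also note that $(\lambda^\ast,\mu^\ast)=(0,0)$ can never be a witness, since it violates every strict inequality, so these three cases are exhaustive.

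These boundary cases are precisely where the hypothesis $q>1$ is indispensable: it is what makes $t\mapsto t^{q-1}$ surjective onto $[0,\infty)$ in the generic case, what kills the leading term as $t\downarrow 0$, and what renders $\|\mu\|^q$ superlinear as $\|\mu\|\to\infty$. For $q=1$ the exponent $t^{q-1}$ is constant and cannot be tuned to match an arbitrary $\lambda^\ast$, a clear signal that the equality genuinely needs $q>1$ rather than it being a notational convenience.
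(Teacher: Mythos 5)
Your proof is correct and follows essentially the same route as the paper's: both exploit the positive homogeneity of the $g_\tau$'s to move along the ray through the witness and choose the scale $t=\bigl(\lambda^{\ast}/\|\mu^{\ast}\|^{q}\bigr)^{1/(q-1)}$ (the paper's $\alpha_0 = \lambda_0^{1/(q-1)}/\|\mu_0\|^{q/(q-1)}$) at which the augmented coordinate coincides with $\|\cdot\|^{q}$, which is precisely where $q>1$ is used. The only divergence is in the degenerate configurations $\lambda^{\ast}=0$ or $\mu^{\ast}=0$: the paper dispenses with them by noting that the witness set is open, so a witness can always be perturbed to have $\lambda^{\ast}>0$ and $\mu^{\ast}\neq 0$, whereas you treat them head-on by sending $t\downarrow 0$ and $\|\mu\|\to\infty$ respectively --- both treatments are valid.
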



In particular, with $q = 2$ we get the Gaussian loss function with $\eta_1=0$. Assuming $\eta_1=0$ is not restrictive because if $\eta_1$ is not $0$ we can consider the translated data $(x_t - \eta_1)$.
A more general result, with milder conditions for function $A$, is proposed in Supplementary Material~\ref{append1_index_Set}. One sufficient condition consists of the strong convexity of the function $A$ with $A(0) = 0$.  

%
\subsection{Half-space intersections and functional pruning}\label{sec2_3_HalfspaceIntersections}
All $g_\tau$ functions are linear in $(\lambda, \mu)$. Considering yet another variable $\kappa$ we can associate to every $g_\tau$ a function $h_\tau$ and a half-space $H_\tau$ of $\mathbb{R}^{p+2}$: 
\begin{equation*}
H_\tau = \{ \kappa, \lambda, \mu \ | \ h_\tau(\kappa, \lambda, \mu) = a_\tau\lambda - 2 \langle b_\tau, \mu\rangle -\kappa \leq 0 \}.
\end{equation*}
From here we straightforwardly get that our linearised functional problem is included in a half-space intersection problem. This result is formalised in the following theorem.
\begin{theorem}\label{th:halfspace_intersection}
If $\tau$ is in $\mathcal{G}_{\mathcal{T}}$
then 
\begin{enumerate}
\item there exists $(\kappa, \lambda, \mu)$ such that $h_\tau(\kappa, \lambda, \mu) =0$ and for all $\tau' \neq \tau$, $h_{\tau'}(\kappa, \lambda, \mu) < 0$;
\item $h_\tau(\kappa, \lambda, \mu)=0$ defines a supporting hyperplane of the set $\cap_\tau H_\tau$.
\end{enumerate}
\end{theorem}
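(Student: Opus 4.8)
The plan is to observe that both claims are a direct unpacking of the definition of $\mathcal{G}_{\mathcal{T}}$, using the fact that $h_\tau(\kappa,\lambda,\mu) = g_\tau(\lambda,\mu) - \kappa$ differs from $g_\tau$ only through the fresh coordinate $\kappa$, whose coefficient is the nonzero constant $-1$. The guiding picture is that $\kappa$ acts as a height: since $H_\tau = \{\kappa \ge g_\tau(\lambda,\mu)\}$, the intersection satisfies $\cap_\tau H_\tau = \{(\kappa,\lambda,\mu) \mid \kappa \ge G_{\mathcal{T}}(\lambda,\mu)\}$, i.e.\ it is exactly the region lying above the graph of $G_{\mathcal{T}}$. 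The facets of the lower boundary of this region are precisely the pieces on which an individual $g_\tau$ achieves the maximum, which are the indices in $\mathcal{G}_{\mathcal{T}}$; identifying those facets with supporting hyperplanes is the content of the theorem.

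For Part 1, I would start from the hypothesis $\tau \in \mathcal{G}_{\mathcal{T}}$, which by definition furnishes a witness $(\lambda^*,\mu^*) \in Im(A)\times\mathcal{D}_A$ with $g_\tau(\lambda^*,\mu^*) > g_{\tau'}(\lambda^*,\mu^*)$ for every $\tau' \neq \tau$. I would lift this witness into $\mathbb{R}^{p+2}$ by setting $\kappa^* = g_\tau(\lambda^*,\mu^*)$. Direct substitution then gives $h_\tau(\kappa^*,\lambda^*,\mu^*) = g_\tau(\lambda^*,\mu^*) - \kappa^* = 0$, and for each $\tau' \neq \tau$, $h_{\tau'}(\kappa^*,\lambda^*,\mu^*) = g_{\tau'}(\lambda^*,\mu^*) - g_\tau(\lambda^*,\mu^*) < 0$, which is exactly the required pair of conditions. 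This step is a pure substitution and presents no difficulty.

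For Part 2, I would first note that $\cap_\tau H_\tau \subseteq H_\tau = \{h_\tau \le 0\}$, so the entire intersection lies in one of the closed half-spaces bounded by the hyperplane $\{h_\tau = 0\}$; this is the one-sidedness demanded of a supporting hyperplane. To establish contact, I reuse the lifted witness $(\kappa^*,\lambda^*,\mu^*)$ of Part 1: it satisfies $h_\tau = 0 \le 0$ and $h_{\tau'} < 0 \le 0$ for all $\tau' \neq \tau$, hence lies in every $H_{\tau'}$ and in $H_\tau$, so it belongs to $\cap_\tau H_\tau$ while also lying on $\{h_\tau = 0\}$. Finally, because $h_\tau$ is a nondegenerate affine function (its $\kappa$-coefficient is $-1 \ne 0$), every neighbourhood of this point contains points with $h_\tau > 0$, which lie outside $H_\tau$ and therefore outside the intersection; the witness is thus a boundary point of $\cap_\tau H_\tau$, and $\{h_\tau = 0\}$ is a supporting hyperplane.

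The argument is almost entirely definitional, so I do not anticipate a genuine obstacle. The only point demanding care is the final verification that the contact point is a boundary rather than an interior point of the polyhedral intersection, which is exactly where the strict inequalities $h_{\tau'} < 0$ and the nondegeneracy of $h_\tau$ are used. I would also confirm that restricting $(\lambda,\mu)$ to $Im(A)\times\mathcal{D}_A$ in the definition of $\mathcal{G}_{\mathcal{T}}$ causes no trouble: the half-spaces $H_\tau$ are defined over the full ambient space $\mathbb{R}^{p+2}$, the witness still lives there, and so its membership in $\cap_\tau H_\tau$ is unaffected by that restriction.
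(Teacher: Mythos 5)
Your proof is correct and follows exactly the argument the paper intends: the paper treats this theorem as a straightforward consequence of the definition of $\mathcal{G}_{\mathcal{T}}$ (it gives no separate proof, and the implicit argument used later in the proof of Theorem~\ref{th:convexhull} is precisely your lifting $\kappa^* = g_\tau(\lambda^*,\mu^*)$ of the witness $(\lambda^*,\mu^*)$). Your additional verification that the contact point is a boundary point of $\cap_\tau H_\tau$ is a harmless and correct elaboration of what the paper leaves implicit.
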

The set $\cap_\tau H_\tau$ defines an unbounded polyhedron as taking any $\kappa \geq \max_{\tau} \{c_\tau\}$, $\lambda=0$, and $\mu=0$ we solve all inequalities. 
In Figure \ref{fig:linearisation} we illustrate our linearised functional problem. 

Computing the intersection of $|\mathcal{T}|$ half-spaces is a well-studied problem in computational geometry. It is closely related to the convex hull problem by projective duality \cite[]{avis1995good}, and several algorithms have been proposed  \cite[see for example][]{preparata1979finding,anderson_1984,Barber1996,ORourke_1998}. See \cite{seidel2017convex} for a recent short introduction to this topic.
\begin{figure}[!ht]
    \centering

    \begin{subfigure} 
        \centering
        \includegraphics[width=\linewidth]{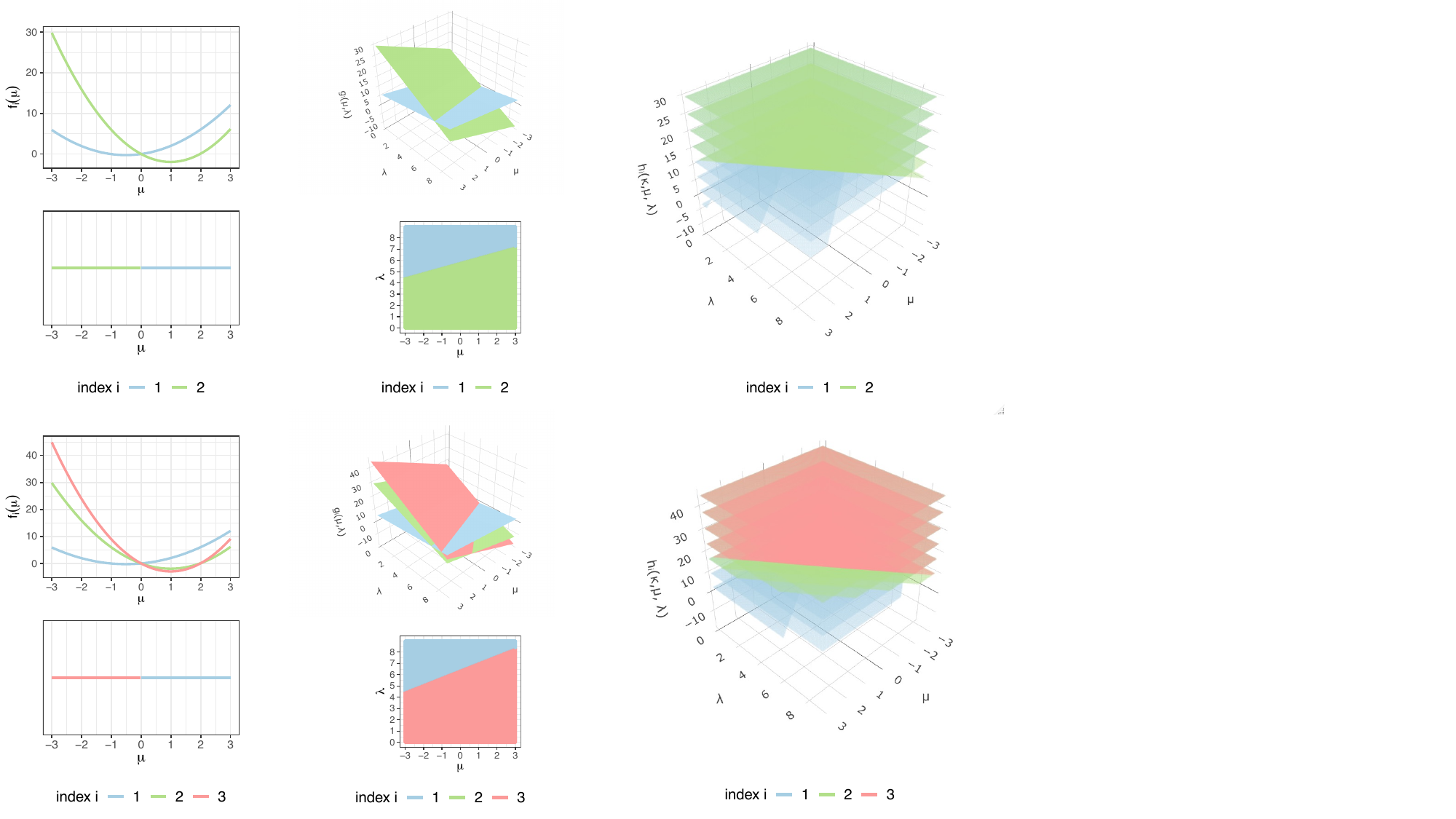}
    \end{subfigure}

    \begin{subfigure} 
        \centering
        \includegraphics[width=\linewidth]{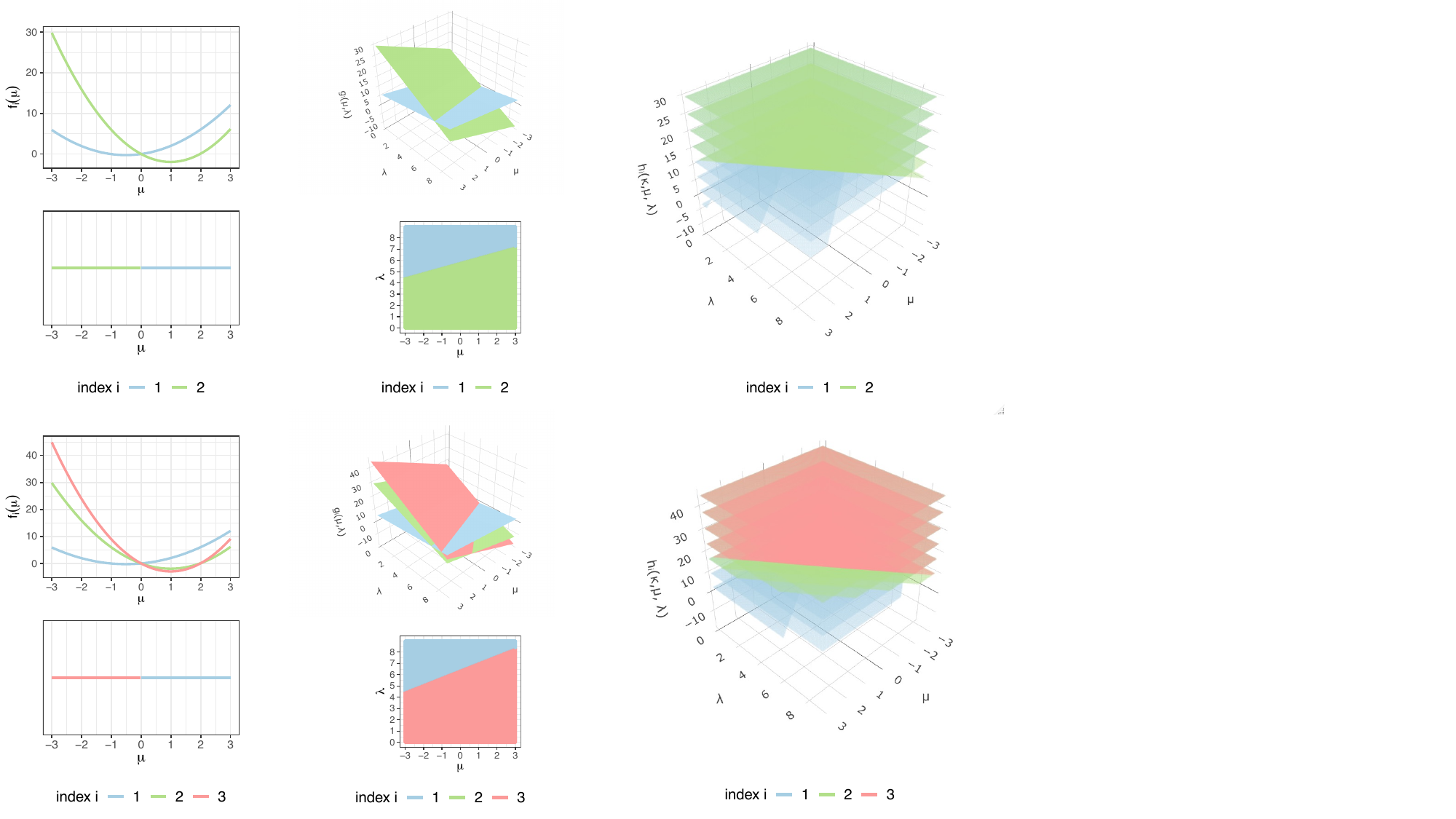}
    \end{subfigure}

    \caption{Example of the functional pruning problem and linearised functional problem over times: $f_i$ and $\proj \{\max_{i} \{f_i(\mu)\}\}$ functions (left); $g_i$ and $ \proj \{\max_{i} \{g_i(\mu,\lambda)\}\}$ functions (middle); the level surfaces of $h_i$ functions, where each surface is for a fixed $\kappa$ value (right) for changepoint candidates $i \in \mathcal{T}$. We simulate time series $\{x_t\}_{t=1,2,..} \sim \mathcal{N}(0,1)$ and consider two time-steps, $n = 3$ (top two rows) and $n = 4$ (bottom two rows). We define the coefficients $a_i$ as $i$, $b_i$ as $\sum_{t=1}^i x_t$. The change $i=2$ associated with quadratics $f_2$ and surface $g_2$ is pruned at $n = 4$.}\label{fig:linearisation}
\end{figure}
\subsection{Projective-duality} 
\label{sec:Projective_duality}

The following theorem states that in our case the half-space intersection problem in dimension $p+2$ is dual to a convex hull problem in dimension $p+1$. 

\begin{theorem}
\label{th:convexhull}
An index $\tau$ being in $\mathcal{G}_{\mathcal{T}}$ is equivalent to the point $(a_\tau, b_\tau)$ in $\mathbb{R}^{p+1}$ being a vertex of the convex hull of $\{ (a_\tau, b_\tau) \}_{\tau \in \mathcal{T}}$.
\end{theorem}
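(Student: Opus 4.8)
The plan is to reduce the claim to the elementary fact from convex geometry that, for a finite point set $P \subset \RR^{p+1}$, a point $v \in P$ is a vertex of $\mathrm{conv}(P)$ if and only if $v$ is the \emph{unique} maximiser over $P$ of some linear functional. The bridge to our setting is that $g_\tau$ is linear in the augmented variable $(\lambda,\mu)$: writing $g_\tau(\lambda,\mu) = \langle (a_\tau,b_\tau),\,(\lambda,-2\mu)\rangle$, membership of $\tau$ in $\mathcal{G}_{\mathcal{T}}$ says precisely that $(a_\tau,b_\tau)$ strictly maximises the linear functional with direction $d=(\lambda,-2\mu)$ over the finite set $P=\{(a_{\tau'},b_{\tau'})\}_{\tau'\in\mathcal{T}}$.

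First I would record the reparametrisation. The map $(\lambda,\mu)\mapsto(\lambda,-2\mu)$ is a linear bijection of $\RR^{p+1}$, so as $(\lambda,\mu)$ ranges over a subset of $\RR^{p+1}$ the direction $d$ ranges over the image of that subset; if $(\lambda,\mu)$ were free, $d$ would sweep out all of $\RR^{p+1}$. Granting for the moment that $d$ can point in every direction, the two implications are standard. For the forward direction, if $(a_\tau,b_\tau)$ uniquely maximises $\langle\cdot,d\rangle$ over $P$, then it cannot be written as a convex combination of the remaining points (any such combination would have functional value strictly below the maximum), so it is an extreme point and hence a vertex. For the converse, if $(a_\tau,b_\tau)$ is a vertex it admits a supporting hyperplane of $\mathrm{conv}(P)$ meeting the hull only at that point, and the outward normal supplies a direction $d$ for which $(a_\tau,b_\tau)$ is the unique maximiser, i.e. $\tau\in\mathcal{G}_{\mathcal{T}}$.

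The main obstacle is that in the definition of $\mathcal{G}_{\mathcal{T}}$ the direction is not free: one has $\lambda\in Im(A)$ and $\mu\in\mathcal{D}_A$, so $d$ is confined to $Im(A)\times(-2\,\mathcal{D}_A)$, which for $A(\mu)=\|\mu\|^2$ is only a half-space such as $\{\lambda\ge 0\}$. I must therefore check that this restricted set of directions still exposes every vertex produced by the forward argument, and that the strict-inequality requirement survives. This is where the problem-specific structure enters: I would use that $A$ is (strongly) convex with $A(0)=0$ and $\mathcal{D}_A=\RR^p$, together with the geometry of the candidate points, to argue that the normal cone of each relevant vertex meets the interior of the admissible direction set. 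An arguably cleaner route that avoids computing normal cones by hand is to pass through the half-space picture already established: by Theorem~\ref{th:halfspace_intersection} each $\tau\in\mathcal{G}_{\mathcal{T}}$ corresponds to a supporting hyperplane of $\cap_\tau H_\tau$ on which only $h_\tau$ is active, and one then invokes the classical projective (polar) duality between intersections of half-spaces in $\RR^{p+2}$ and convex hulls in $\RR^{p+1}$. Since $H_\tau=\{z:\langle(-1,a_\tau,-2b_\tau),z\rangle\le 0\}$ has the same $\kappa$-coefficient $-1$ for every $\tau$, the dual points dehomogenise (after the harmless rescaling of the $\mu$-block) onto $(a_\tau,b_\tau)$, and the faces of the intersection touched by a single constraint correspond exactly to vertices of $\mathrm{conv}\{(a_\tau,b_\tau)\}$.

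I expect the genuine work to lie entirely in reconciling the admissible direction set $Im(A)\times\mathcal{D}_A$ with the full set of exposing directions; the linear-functional/vertex equivalence and the rescaling are routine. A fully rigorous statement will likely require either the structural assumptions on $A$ highlighted after Theorem~\ref{th:equality} or the appeal to projective duality above, which automatically accounts for the unbounded directions of the polyhedron $\cap_\tau H_\tau$.
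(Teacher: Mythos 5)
Your forward implication is correct and is essentially the paper's own argument: both you and the paper exploit the linearity of $g_\tau$, rewrite $g_\tau(\lambda,\mu)=\langle (a_\tau,b_\tau),(\lambda,-2\mu)\rangle$, and invoke the standard fact that a point of a finite set which strictly maximises some linear functional is a vertex of the convex hull (the paper routes this through Theorem~\ref{th:halfspace_intersection} and the same rescaling, exactly as in your alternative route). So the inclusion $\mathcal{G}_{\mathcal{T}}\subseteq\{\text{vertices}\}$ is fully established by your proposal.

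The genuine gap is the converse, and you have located it precisely — but your proposed repair cannot succeed, because under the paper's constrained definition of $\mathcal{G}_{\mathcal{T}}$ the converse is in fact false. Take $p=1$, the Gaussian case $A(\mu)=\mu^2$ (so $Im(A)=[0,\infty)$, $\mathcal{D}_A=\RR$), and $a_\tau=\tau$, $b_\tau=\sum_{t\le\tau}x_t$ with $x_1=0$, $x_2=1$, $x_3=-1$, giving the three points $(1,0)$, $(2,1)$, $(3,0)$. All three are vertices of their convex hull, yet any direction $(\lambda,\nu)$ exposing $(1,0)$ must satisfy $\lambda\cdot 1>\lambda\cdot 3$, i.e. $\lambda<0$, which is excluded by the constraint $\lambda\in Im(A)$; hence $1\notin\mathcal{G}_{\mathcal{T}}$ (consistently, $f_1(\mu)=\mu^2$ never strictly exceeds $f_3(\mu)=3\mu^2$, so $1\notin\mathcal{F}_{\mathcal{T}}$ either, as Theorem~\ref{th:equality} demands). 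This shows that the normal cone of a vertex can lie entirely outside $Im(A)\times\mathcal{D}_A$, so no argument from strong convexity of $A$ can make the admissible directions expose every vertex; your duality route hits the same wall, since duality converts vertices into supporting hyperplanes but says nothing about admissibility of their normals. You should know that the paper's own proof shares exactly this defect: its chain of equivalences quantifies over unconstrained $(\lambda,\mu)$, and the step from "exposed by some direction" back to "$\tau\in\mathcal{G}_{\mathcal{T}}$" is never addressed, so the paper too only proves the inclusion $\mathcal{G}_{\mathcal{T}}\subseteq\{\text{vertices}\}$. Fortunately that inclusion is all that the downstream results (Theorem~\ref{th:changepoint_andhull}, the pruning rule, and the candidate-count bound via Theorem~\ref{th:boundnumbervertexes}) actually use, and it is the part both you and the paper establish rigorously.
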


We illustrate the points on the convex hull of Theorem \ref{th:convexhull} for $p=1$ and $p=2$ in Figure \ref{Figure_CH_and_RW}. As will be seen in Section \ref{sec3_MdFocus}, this is relevant for the online detection of a single changepoint, see \eqref{eq:likelihood_change_variables}. 
Importantly, computing the convex hull of $|\mathcal T|$ points is a well-studied problem in computational geometry.
Over the years many algorithms have been developed \cite[]{Chand1970AnAF,Preparata1977,Miller1988,Barber1996,kenwright2023convex}.
In the rest of this paper, we will use the \texttt{QuickHull} algorithm of \cite{Barber1996} and its implementation in the \texttt{qhull} library \url{http://www.qhull.org/}.
The \texttt{QuickHull} algorithm works in any dimension and is reasonably fast for our application up to $p = 5$.  

\begin{figure}[!t]%
    \centering
    \begin{subfigure}
        \centering
        \includegraphics[scale=0.65]{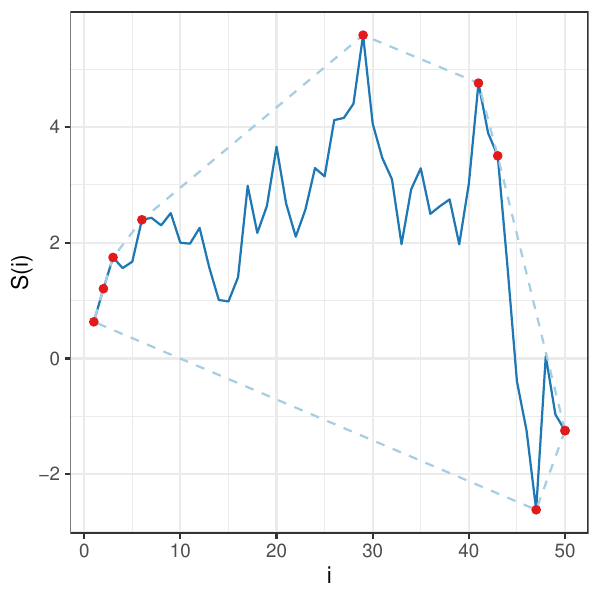}
    \end{subfigure}%
    \hfill
    \begin{subfigure}
    \centering
        \includegraphics[scale=0.25]{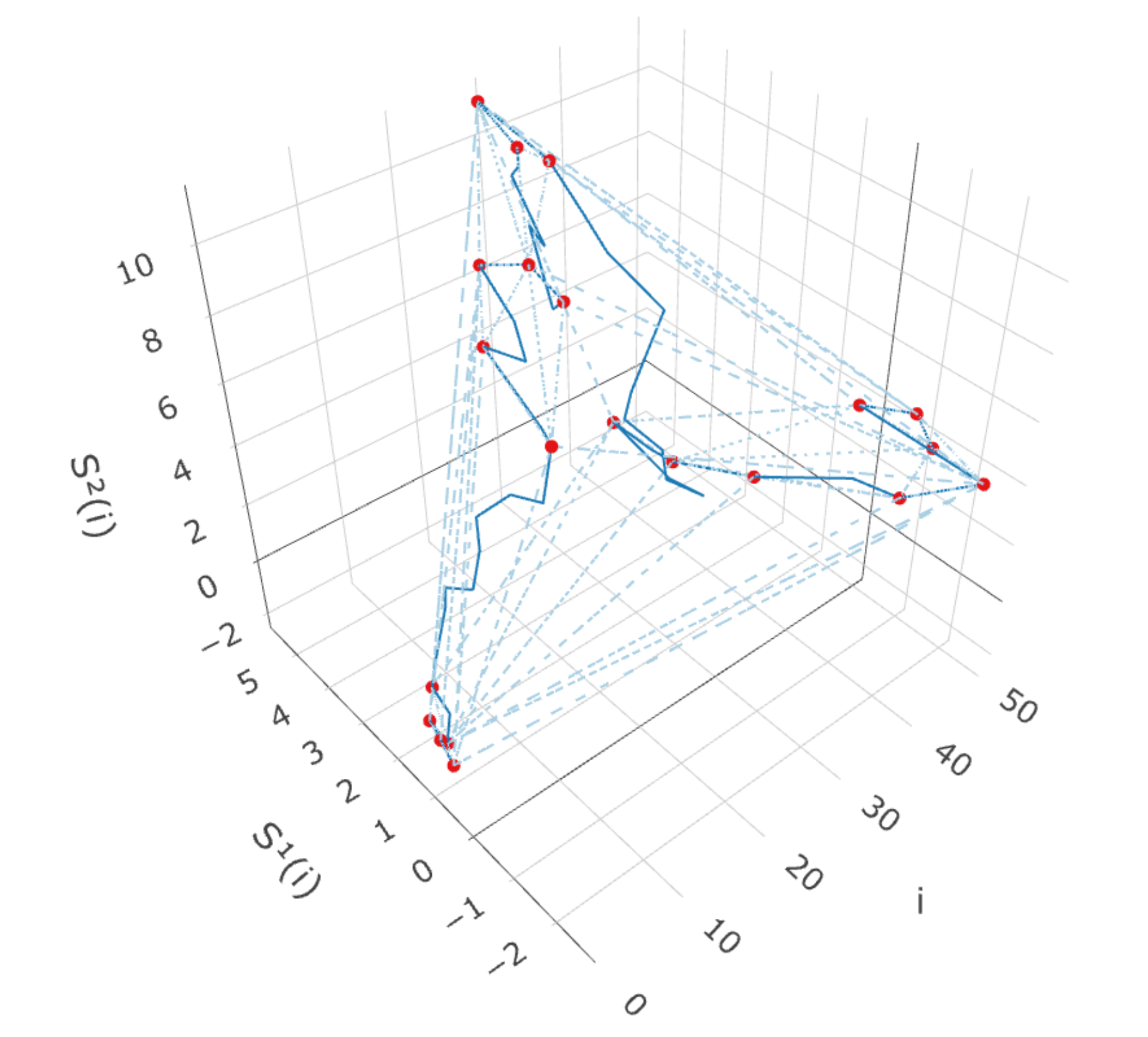}
    \end{subfigure}

    \caption{Plot of $S(i) =\sum_{t=1}^i x_t$  as a function of $i$ (dark blue) and the convex hull of  $\left\{\left( a_i=i,b_i=S(i)\right)\right\}_{i \in \{1,\dots, n-1\}}$ (dashed blue line) for dimensions $p=1$ (left) and $2$ (right). The vertices of the convex hull $\tau \in \mathcal{T}$ are marked in red. We simulated time series $\{x_t\}_{t \in \{1,\dots,n\}}$ with $n = 51$ data points with $x_t  \sim \mathcal N_p(0, I_p)$.}\label{Figure_CH_and_RW}
\end{figure}
\section{On The Detection of a Single Changepoint Online} \label{sec3_MdFocus}
\subsection{From the Functional Description to  a Convex Hull} \label{sec_3_1_ConvexHull}
We now apply the results of Section \ref{sec2_Functional_Pruning} to the problem of detecting a single changepoint described in Section \ref{sec_1_2_ExpFamily}, to develop an efficient online algorithm. To do this, it is helpful to define a one-to-one mapping $P$ between any changepoint $\tau$ and a point in dimension $p+1$: 
\begin{equation}
\label{eq:point_definition}
    P(\tau) = \left(\tau, \sum_{t=1}^{\tau} x_t\right).
\end{equation}
The following theorem, based on Theorem \ref{th:convexhull} essentially states that the changepoint maximising the likelihood \eqref{eq:loglikelihood_eta}
for some value $\eta_1$ and $\eta_2$ corresponds to a point on a particular convex hull.
\begin{theorem}\label{th:changepoint_andhull}
Assume the log-likelihood, $\ell_{\tau,n}(\eta_1, \eta_2)$ of a change at $\tau$ in $\{1, \ldots, n-1\}$ can be written as in  \eqref{eq:loglikelihood_eta}. Then, for any $\eta_1$ and $s \leq p$ taking $\hat{\tau} = \hat{\tau}^{s}(\eta_1)$ or $\hat{\tau}^{s}(.)$ we have that $P(\hat{\tau})$  
is on the convex hull of  $\{P(\tau)\}_{\tau \in \{1, \ldots, n-1\}}$.
\end{theorem}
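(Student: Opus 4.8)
The plan is to reduce the statement to the chain already established in the excerpt: the inclusion $\mathcal{F}_{\mathcal{T}} \subseteq \mathcal{G}_{\mathcal{T}}$ of Theorem~\ref{th:inclusion} together with the hull characterisation of $\mathcal{G}_{\mathcal{T}}$ of Theorem~\ref{th:convexhull}. The first thing I would record is the dictionary between the single-change problem and the abstract functional problem: for the likelihood \eqref{eq:loglikelihood_eta}, the reduced objective \eqref{eq:likelihood_change_variables} is exactly of the form \eqref{eq:FP_equation} with $a_\tau = \tau$ and $b_\tau = \sum_{t=1}^{\tau} x_t$, so that $(a_\tau, b_\tau)$ coincides with $P(\tau)$. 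The point worth stressing is that these coefficients do not depend on $\eta_1$; only the convex function $A = A_{\eta_1}$ does. Consequently the convex hull of $\{(a_\tau,b_\tau)\}$ is the same object whether $\eta_1$ is known or unknown, which is what lets me treat the two cases uniformly.

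The core of the argument is to exhibit, for the estimator $\hat\tau$, a single direction $\hat\mu$ at which $f_{\hat\tau}$ dominates every competitor. I would let $(\hat\eta_1,\hat\eta_2)$ denote the parameters attaining the maximum defining $\hat\tau$, with $\hat\eta_1 = \eta_1$ in the known case and respecting $\|\hat\eta_1 - \hat\eta_2\|_0 = s$ in the sparse case. The key observation is that the same pair $(\hat\eta_1,\hat\eta_2)$ remains feasible when paired with any other index $\tau'$, so optimality of $\hat\tau$ yields $\ell_{\hat\tau,n}(\hat\eta_1,\hat\eta_2) \geq \ell_{\tau',n}(\hat\eta_1,\hat\eta_2)$ for every $\tau'$. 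Invoking the addition trick that produces \eqref{eq:likelihood_transf}, the two maps $\tau \mapsto \ell_{\tau,n}(\hat\eta_1,\hat\eta_2)$ and $\tau \mapsto \ell_\tau(\hat\eta_1,\hat\eta_2)$ differ by a constant independent of $\tau$; hence $\hat\tau$ also maximises $\ell_\tau(\hat\eta_1,\hat\eta_2)$ over $\tau$, i.e.\ $f_{\hat\tau}(\hat\mu) \geq f_{\tau'}(\hat\mu)$ for all $\tau'$, with $\hat\mu = \hat\eta_2 - \hat\eta_1$ and $A = A_{\hat\eta_1}$.

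From this dominance I would deduce membership in $\mathcal{F}_{\mathcal{T}}$ for the functional problem with $A = A_{\hat\eta_1}$: the witness $\hat\mu$ lies in $\mathcal{D}_A$, and following the paper's generic-position convention (ties occur only on a measure-zero set) the dominance can be taken strict, giving $\hat\tau \in \mathcal{F}_{\mathcal{T}}$. The sparsity constraint is harmless, since it merely forces $\hat\mu$ into a coordinate subset $\mathcal{S} \subseteq \mathcal{D}_A$, which is still a valid witness for the unrestricted $\mathcal{F}_{\mathcal{T}}$. Applying Theorem~\ref{th:inclusion} and then Theorem~\ref{th:convexhull} gives $\hat\tau \in \mathcal{G}_{\mathcal{T}}$ and that $P(\hat\tau) = (a_{\hat\tau}, b_{\hat\tau})$ is a vertex, hence a point, of the convex hull of $\{P(\tau)\}_{\tau}$. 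The unknown-$\eta_1$ case follows from the identical computation with $\hat\eta_1$ also optimised: the only $\eta_1$-dependence sits in $A$, whereas $\mathcal{G}_{\mathcal{T}}$ and the hull depend only on the $\eta_1$-free coefficients $(a_\tau, b_\tau)$.

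The step I expect to require the most care is the passage from the profile maximiser of the true likelihood to a single direction $\hat\mu$ dominating the reduced functions $f_\tau$. The maximiser of $\ell_{\tau,n}$ over $\eta_2$ is computed from the post-change data $x_{\tau+1},\dots,x_n$ and is \emph{not} the maximiser appearing after the addition trick, so the argument must run through the fixed optimal pair $(\hat\eta_1,\hat\eta_2)$ rather than through separately profiled objectives. The secondary point needing explicit justification is that the sparse and unknown-$\eta_1$ constraints genuinely collapse onto one and the same hull, precisely because they perturb only $A$ and the domain of $\mu$ and never the coefficients $(a_\tau, b_\tau)$.
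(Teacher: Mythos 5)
Your proposal is correct, and for the dense case ($s=p$) and the unknown-$\eta_1$ case it follows essentially the paper's own route: reduce \eqref{eq:loglikelihood_eta} via the addition trick to the functional problem \eqref{eq:FP_equation} with $a_\tau=\tau$ and $b_\tau=\sum_{t=1}^\tau x_t$, use the jointly optimal pair $(\hat\eta_1,\hat\eta_2)$ as a fixed witness so that $\hat\tau$ dominates at $\hat\mu=\hat\eta_2-\hat\eta_1$, and then chain Theorem~\ref{th:inclusion} with Theorem~\ref{th:convexhull}; the paper likewise handles the unknown pre-change parameter by fixing $\eta_1$ at its joint maximiser and noting, as you do, that the coefficients $(a_\tau,b_\tau)$ are $\eta_1$-free. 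Where you genuinely diverge is the sparse case $s<p$. The paper argues by projection: it forms $x'_t\in\RR^{s}$ from the changing coordinates, applies the dense result in dimension $s$ to conclude that $P'(\hat\tau)$ is a vertex of the hull of the projected points, and then lifts this back using the geometric fact that a vertex of a projected point set must be the image of a vertex of the original set. You instead stay in dimension $p$: the sparse-optimal $\hat\mu$, which lies in a coordinate subspace of $\mathcal{D}_A$, is already a valid witness for the unrestricted $p$-dimensional problem, because the inequality $\ell_{\hat\tau,n}(\hat\eta_1,\hat\eta_2)\geq\ell_{\tau',n}(\hat\eta_1,\hat\eta_2)$ uses the same feasible pair for every $\tau'$. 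Your route is slightly more economical, avoiding the projection-and-lift step entirely; the paper's route makes the subsequent remark (extension to thresholded statistics and to arbitrary coordinate subsets) immediate, since the identical projection argument applies verbatim to any fixed subset. One shared caveat: $\mathcal{F}_{\mathcal{T}}$ in \eqref{eq:Set_Of_FPLin_Minimizers} is defined with strict inequality while your witness only yields non-strict domination; your appeal to the measure-zero tie convention is exactly the implicit appeal in the paper's own proof, so this does not constitute a gap relative to the paper's standard of rigor.
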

\begin{remark}
This result also extends to other test statistics that maximise the sparse log-likelihood for some $s$. This includes statistics that threshold the contribution from each component of the time series (see Section \ref{app:stat_control_definition} in Supplementary Material).
\end{remark}

Theorem \ref{th:changepoint_andhull} is coherent with the notion of pruning as in the FOCuS algorithm \cite[]{romano2022fast}. 
If $P(\tau_0)$ is strictly inside the hull, i.e. not on the boundary of the hull $\{ P(\tau)\}_{\tau \in \{1, \ldots, n-1\}}$ at time $n$, then it is inside the hull at any future time $n' \geq n$. Thus, based on Theorem~\ref{th:changepoint_andhull}, the likelihood associated with the change at time $\tau_0$ will never be optimal after time $n$.

\subsection{Bound on The Number of Changepoint Candidates}\label{sec3_2_Bound}
From a computational perspective, Theorem \ref{th:changepoint_andhull} is useful only if the set of points on the convex hull is significantly smaller than $n$. We now use recent results \cite[]{kabluchko2017convex} to bound the expected number of faces and vertices of the convex hull $\{P(\tau) \}_{\tau \in \{1, \ldots, n-1\}}$, assuming that the components of $x_\tau$, $x_\tau^d$ (with $d= 1,\dots, p$), are i.i.d. 
with a continuous distribution. In turn, using Theorem \ref{th:convexhull}, this gives us an upper bound on the expected number of changepoints that can maximise the likelihood function \eqref{eq:loglikelihood_eta} for some value of $\eta_1$ and $\eta_2$. This bound will be useful to derive guarantees on the computational complexity of the procedure (see Algorithms \ref{MdFOCuS_algo} and \ref{MdFOCuS_algo_dyadic}).

\begin{theorem}
\label{th:boundnumbervertexes}
Assuming all $x_\tau^d$ (with $d= 1, \dots, p$) are i.i.d. 
with a continuous distribution, then the expected number of faces (denoted as $U_{n}^p$) and vertices (denoted as $V_{n}^p$) of the convex hull of $\{P(\tau) \}_{\tau \in \{1, \ldots, n-1\}}$ are the following:
\begin{equation*}
\mathbb{E}(U_n^p) = \frac{2p!}{(n-1)!} \begin{bmatrix}
n \\
p+1\\
\end{bmatrix}, 
\quad \mathbb{E}(V_n^p)= \frac{2}{(n-1)!} \sum_{l=0}^{\infty} \begin{bmatrix}
n \\
p+1- 2l\\
\end{bmatrix},    
\end{equation*}
where $\begin{bmatrix}
n \\
m\\ 
\end{bmatrix}$
are the Stirling numbers of the first kind \cite[]{Adamchik1997OnSN}.
\end{theorem}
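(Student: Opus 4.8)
The plan is to recognise the point set $\{P(\tau)\}$ of \eqref{eq:point_definition} as the trajectory of a random walk in $\RR^{p+1}$ and then to invoke the distribution-free face-number formulas of \cite{kabluchko2017convex} for convex hulls of random walks. Writing $\tilde x_t = (1, x_t) \in \RR^{p+1}$, we have $P(\tau) = \sum_{t=1}^{\tau} \tilde x_t$, so $P(0), P(1), \dots, P(n-1)$ are the partial sums of a walk with increments $\tilde x_1,\dots,\tilde x_{n-1}$. Crucially $P(0) = (0,\mathbf{0}) = 0$, so this configuration is already origin-anchored, with $n$ points and $n-1$ steps; this is exactly what will produce the prefactor $(n-1)!$ and the Stirling indices $\begin{bmatrix} n \\ \cdot \end{bmatrix}$ appearing in the statement. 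I would record at the outset that the theorem's index range $\{1,\dots,n-1\}$ differs from this anchored walk by the single point $P(0)=0$, a convention point to be reconciled at the end.

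Second, I would verify the two hypotheses under which the formulas of \cite{kabluchko2017convex} hold and are distribution-free: exchangeability of the increments and general position. Exchangeability is immediate, because the $x_t$ are i.i.d.\ and the appended first coordinate is the same constant $1$ for every $t$, so the joint law of $(\tilde x_1,\dots,\tilde x_{n-1})$ is permutation invariant. For general position I would use that the components $x_\tau^d$ are i.i.d.\ with a continuous distribution, so $x_\tau$ has an absolutely continuous law on $\RR^p$; hence any $k \le p+1$ of the increments $(1,x_{t_1}),\dots,(1,x_{t_k})$ are a.s.\ linearly independent (equivalently, the $x_{t_i}$ are affinely independent in $\RR^p$), and no $p+2$ of the partial sums $P(\tau)$ a.s.\ lie on a common hyperplane. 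The one point needing care is that the first coordinate of every increment is deterministic; I would emphasise that this degeneracy does not break general position, since that condition only constrains groups of at most $d = p+1$ vectors, and these stay independent thanks to the continuous $x_t$-part.

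Third, with the hypotheses in force, I would quote the main theorem of \cite{kabluchko2017convex}, which expresses the expected number of $j$-dimensional faces of the hull in terms of Stirling numbers of the first kind, independently of the increment distribution. Specialising to the top faces ($j = p$, the facet count $U_n^p$) yields the single-term expression $\tfrac{2p!}{(n-1)!}\begin{bmatrix} n \\ p+1 \end{bmatrix}$, while specialising to $0$-faces ($j=0$, the vertex count $V_n^p$) yields the alternating-index sum $\tfrac{2}{(n-1)!}\sum_{l\ge 0}\begin{bmatrix} n \\ p+1-2l \end{bmatrix}$. The even shifts $2l$ and the summation reflect the type-$B$ reflection-arrangement and Weyl-chamber counting underlying the proof in \cite{kabluchko2017convex}, whereas the facet count collapses to a single term because it is the top-dimensional case; substituting $d = p+1$ then gives the two stated identities.

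Finally, I would run the internal consistency checks afforded by the Euler and Dehn--Sommerville relations, which hold a.s.\ since general position makes the hull simplicial: for $p=1$ the polygon identity forces $U_n^1 = V_n^1 = 2\begin{bmatrix} n \\ 2 \end{bmatrix}/(n-1)! = 2H_{n-1}$, and for $p=2$ the relations $2E = 3F$ and $V-E+F=2$ give $U_n^2 = 2V_n^2 - 4$, both satisfied by the formulas above. I expect the main obstacle to be bookkeeping rather than analysis: confirming that the deterministic-drift first coordinate still places us squarely inside the exchangeable, general-position regime of \cite{kabluchko2017convex}, and matching their normalisation (faces of an origin-anchored walk of $n-1$ steps) to exactly the prefactors and Stirling indices claimed here, including the reconciliation of the single extra point $P(0)=0$.
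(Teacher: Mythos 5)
Your plan follows exactly the route the paper takes: view $P(1),\dots,P(n-1)$ as partial sums of a random walk in $\RR^{p+1}$ with increments $(1,x_t)$, check the exchangeability condition $(Ex)$ and the general-position condition $(GP)$ of \cite{kabluchko2017convex}, and then read off the facet and vertex counts from their Stirling-number formula (their formula (2)). The exchangeability step and the final application are handled the same way in the paper, and your attention to the normalisation question (whether the formula is for the walk anchored at $P(0)=0$, and how that matches the prefactor $(n-1)!$ and the index $n$ in the Stirling numbers) is a legitimate point of care that the paper itself passes over silently.

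The one place where your plan has a genuine gap is the verification of $(GP)$. The condition required by \cite{kabluchko2017convex} is that every collection of $p+1$ \emph{partial sums} $P(\tau_1),\dots,P(\tau_{p+1})$ is almost surely linearly independent. What you propose to verify is different: linear independence of up to $p+1$ \emph{increments} $(1,x_{t_i})$, together with the statement that no $p+2$ partial sums lie on a common (affine) hyperplane. Neither of these is the needed hypothesis, and linear independence of increments does not transfer to partial sums for free. The paper devotes its only substantive argument (Lemma \ref{lem:RW_conditions}) to exactly this bridge: given indices $\tau_1<\cdots<\tau_{p+1}$, pass to the differences $M_d = P(\tau_d)-P(\tau_{d-1})$, which are sums of increments over disjoint time blocks, so linear independence of the selected partial sums is equivalent to that of $M_1,\dots,M_{p+1}$; in the transposed matrix the $p$ columns coming from the data are independent with continuous laws and hence a.s.\ span a $p$-dimensional subspace, and the remaining column, which is deterministic and integer-valued, lies in that span with probability zero. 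This handles precisely the degeneracy you flag (the deterministic first coordinate), but by an argument your proposal asserts rather than supplies; as written, checking your two stated conditions would not establish $(GP)$, so this step needs to be filled in along the paper's lines or by an equivalent conditioning argument.
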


\begin{corollary}
\label{col:boundnumbervertexes}
We have the following equivalents for the expected numbers of faces $U_n^p$ and vertices $V_n^p$:
\begin{equation*}
  \mathbb{E}(U_n^p) \sim 2 \left(\log(n)\right)^p\,, \quad \mathbb{E}(V_n^p)\sim \frac{2}{p!}  \left(\log(n)\right)^p\,, \quad n \rightarrow \infty.  
\end{equation*}
\end{corollary}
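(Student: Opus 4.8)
The plan is to reduce both equivalents to a single classical fact about the unsigned Stirling numbers of the first kind, namely that for each fixed $m \ge 1$,
\[
\begin{bmatrix} n \\ m \end{bmatrix} \sim \frac{(n-1)!}{(m-1)!}\,(\log n)^{m-1}, \qquad n \to \infty,
\]
and then substitute directly into the exact formulas of Theorem \ref{th:boundnumbervertexes}. For the faces I would take $m = p+1$, obtaining
\[
\mathbb{E}(U_n^p) = \frac{2p!}{(n-1)!}\begin{bmatrix} n \\ p+1 \end{bmatrix} \sim \frac{2p!}{(n-1)!}\cdot\frac{(n-1)!}{p!}\,(\log n)^p = 2(\log n)^p.
\]
For the vertices I would note that in the finite sum $\sum_{l\ge 0}\begin{bmatrix} n \\ p+1-2l\end{bmatrix}$ the top index $p+1-2l$ is largest at $l=0$, so that term dominates: the asymptotic shows the $l=1$ term is $O((\log n)^{p-2})$ and each subsequent term is smaller still (the sum terminating once $p+1-2l\le 0$). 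Hence $\sum_{l\ge 0}\begin{bmatrix} n \\ p+1-2l\end{bmatrix}\sim \begin{bmatrix} n \\ p+1\end{bmatrix}$ and
\[
\mathbb{E}(V_n^p) = \frac{2}{(n-1)!}\sum_{l\ge 0}\begin{bmatrix} n \\ p+1-2l\end{bmatrix} \sim \frac{2}{(n-1)!}\cdot\frac{(n-1)!}{p!}\,(\log n)^p = \frac{2}{p!}(\log n)^p.
\]

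The only genuine work is the Stirling asymptotic, which I would derive self-containedly rather than merely cite. The starting point is the rising-factorial identity $x(x+1)\cdots(x+n-1) = \sum_m \begin{bmatrix} n \\ m\end{bmatrix} x^m$. Reading off the coefficient of $x^m$, and using that the $k=0$ factor must always supply the $x$, gives
\[
\frac{1}{(n-1)!}\begin{bmatrix} n \\ m\end{bmatrix} = e_{m-1}\!\left(1,\tfrac12,\ldots,\tfrac{1}{n-1}\right) = \sum_{1\le i_1<\cdots<i_{m-1}\le n-1}\frac{1}{i_1\cdots i_{m-1}},
\]
the elementary symmetric polynomial of degree $m-1$ in the reciprocals $1,\tfrac12,\ldots,\tfrac1{n-1}$. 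I would then invoke Newton's identities to express $e_{m-1}$ as a polynomial in the power sums $p_j = \sum_{k=1}^{n-1} k^{-j}$. Since $p_1 = H_{n-1}\sim \log n$ diverges while every $p_j$ with $j\ge 2$ stays bounded (converging to $\zeta(j)$), the unique monomial of maximal growth is $p_1^{\,m-1}/(m-1)!$; every other monomial carries at least one bounded factor $p_j$ with $j\ge 2$ and hence a strictly lower power of $p_1$, contributing only $O((\log n)^{m-2})$. This yields $e_{m-1}\sim (\log n)^{m-1}/(m-1)!$ and the claimed asymptotic.

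The main obstacle is making the step ``$e_{m-1}$ is dominated by $p_1^{\,m-1}/(m-1)!$'' rigorous, i.e. controlling the subleading terms of Newton's identities so they are genuinely $O((\log n)^{m-2})$. Because $m = p+1$ is fixed there are only finitely many such terms, each a product of a bounded quantity with a lower power of $H_{n-1}$, so the bound is routine; as a sanity check one can verify the base cases $\begin{bmatrix} n \\ 1\end{bmatrix} = (n-1)!$ and $\begin{bmatrix} n \\ 2\end{bmatrix} = (n-1)!\,H_{n-1}$ by hand, both of which match the formula exactly. The remaining bookkeeping---truncating the vertex sum at $l = \lfloor (p+1)/2\rfloor$ and confirming that the subleading face contributions are negligible---is then immediate. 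Alternatively, the displayed Stirling asymptotic is standard and could simply be cited, shortening the argument to the two substitutions above.
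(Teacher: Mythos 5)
Your proof is correct, but it follows a genuinely different route from the paper's. The paper's entire proof is a citation: it observes that the stated equivalents follow from Remark 1.4 of \cite{kabluchko2017convex}, the same source that supplies the exact formulas of Theorem \ref{th:boundnumbervertexes}. You instead derive the asymptotics self-containedly from those exact formulas, by proving the classical Stirling estimate $\begin{bmatrix} n \\ m \end{bmatrix} \sim \frac{(n-1)!}{(m-1)!}\left(\log n\right)^{m-1}$ for fixed $m$ via the identity $\frac{1}{(n-1)!}\begin{bmatrix} n \\ m \end{bmatrix} = e_{m-1}\bigl(1,\tfrac{1}{2},\ldots,\tfrac{1}{n-1}\bigr)$ together with Newton's identities: the only unbounded power sum is $p_1 = H_{n-1} \sim \log n$, while $p_j \to \zeta(j)$ for $j \geq 2$, so the monomial $p_1^{m-1}/(m-1)!$ dominates and every other term is $O\bigl((\log n)^{m-2}\bigr)$. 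Substituting $m = p+1$ gives the face count at once, and noting that the $l \geq 1$ terms of the vertex sum are of order $(\log n)^{p-2}$ after division by $(n-1)!$ gives the vertex count. Both arguments rest on the same exact formulas, and your mechanism is essentially what underlies the paper's own Supplementary Material \ref{append3_Stirling}, where the Stirling numbers are expanded in the harmonic sums $\sigma_k$ with leading term $(n-1)!\,\sigma_1^{m-1}/(m-1)!$; what your version buys is transparency and independence from the external reference, at the cost of the page of bookkeeping (the symmetric-function identity and the control of subleading Newton terms) that the one-line citation avoids.
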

The proof of Corollary \ref{col:boundnumbervertexes} follows from Remark 1.4 of \cite{kabluchko2017convex}. 
The theorem and corollary provide an expected upper limit on the number of points, or candidate locations for changepoints, that need to be evaluated to compute the CUSUM likelihood ratio test \eqref{eq:loglikelihood_eta} exactly without resorting to approximations. This limit increases as $\mathcal{O}(\log(n)^p)$ and so will give small bounds on the number of candidate changepoints for small $p$. We empirically illustrate the validity of Theorem~\ref{th:boundnumbervertexes} for i.i.d. Gaussian and Poisson data in supplementary material \ref{append4a_GaussianMd} and \ref{append5_PoissonMd}.

\subsection{MdFOCuS: a Simple Online Algorithm}\label{sec3_3_Algo}

In this section, we propose an algorithm to compute the GLR statistics at every time step $n$ for a known and unknown pre-change parameter $\eta_1$. In both cases, based on Theorem \ref{th:changepoint_andhull} we only need to consider the changepoints $\tau$ such that $P(\tau)$ is on the hull of $\left\{P(\tau')\right \}_{\tau' \in \{1,\dots, n-1\}}. $ 

Let us call $\mathcal{T}_n$ the set of index $\tau$ such that $P(\tau) = (\tau, \sum_{t=1}^\tau x_t)$ is on the convex hull of all $\{P(\tau)\}_{\tau \in\{1, \ldots, n-1\}}$. Restating Theorem~\ref{th:convexhull}, the changepoint $\hat{\tau}$ maximising the likelihood \eqref{eq:loglikelihood_eta} 
is in $\mathcal{T}_n$. Hence, at step $n$, we only need to compute the likelihood (and likelihood ratio statistics) of changepoints $\tau$ in $\mathcal{T}_n$. Furthermore, as already explained after Theorem \ref{th:changepoint_andhull} we can prune all $\tau$ that are not in $\mathcal{T}_n$ as they will never be in $\mathcal{T}_{n'}$ for any $n'\geq n$.

Computing the convex hull from scratch at every time step $n$ is inefficient: it requires at least $\mathcal{O}(n)$ operations and leads to an overall complexity at least quadratic in $n$. Ideally, we would like to update the hull using an efficient online or dynamic convex hull algorithm as the one in \cite{Clarkson1989}. There exist implementations of such an algorithm in dimensions $p=2$ \cite[]{Melkman1987} and $p=3$ \cite[]{cgal:hs-ch3-23b}. However, for simplicity, here we rely on an iterative use of the offline {\texttt{QuickHull}} algorithm \cite[]{Barber1996} that also works for larger dimensions and is available in the {\texttt{geometry R}} package \cite[]{geometry_R_package}.

In detail, for any $n' > n$ we have $\mathcal{T}_{n'} \subset \mathcal{T}_{n} \cup \{n, \ldots, n'-1\}$. Therefore if we have already computed $\mathcal{T}_{n}$, we only apply {\texttt{QuickHull}} to the points $P(\tau)$ with $\tau$ in $\mathcal{T}_{n} \cup \{n, \ldots, n'-1\}$. A first idea would be to apply the {\texttt{QuickHull}} algorithm at every time step $n, n+1, \ldots$, that is compute
$\mathcal{T}_{n}$ as the hull of all $P(\tau)$ with $\tau$ in $\mathcal{T}_{n-1} \cup \{n\}$. 
In practice, we discovered that running the {\texttt{QuickHull}} algorithm intermittently, rather than at every iteration, resulted in faster run times. Specifically, we maintain a set of linearly increasing indices, denoted as $\mathcal{T}'_n$, and only periodically execute {\texttt{QuickHull}} to reconstruct the true $\mathcal{T}_n$.
More precisely, to decide when to run {\texttt{QuickHull}}, we update a maximum size variable for $\mathcal{T}'_n$ (denoted \texttt{maxSize}). Then, at step $n+1$: \begin{itemize}
    \item if $|\mathcal{T}'_{n}|$, is smaller than \texttt{maxSize}  we set $\mathcal{T}'_{n+1}$ to $\mathcal{T}'_{n} \cup \{n\}$,
    \item otherwise we compute $\mathcal{T}'_{n+1}$ as the convex hull of all $P(\tau)$ with $\tau$ in $\mathcal{T}'_{n} \cup \{n\}$. 
\end{itemize} 
In the latter case we also update \texttt{maxSize} to $\texttt{maxSize}=\lfloor \alpha|\mathcal{T}'_{n+1}| + \beta \rfloor$, with $\alpha \geq 1$ and $\beta \geq 0$. It is important to note that, in this way, we always retain a few more possible changepoints than necessary:  $\mathcal{T}_{n+1} \subseteq \mathcal{T}'_{n+1}$.

If $\beta=1$ and $\alpha=1$ we run the {\texttt{QuickHull}} algorithm at every time step. Larger values of $\alpha$ mean we run it less often, but at the cost of having a longer list of indices in $\mathcal{T}'_n$. We empirically evaluated different values of $\alpha$ with $\beta=1$, and found $\alpha=2$ gave good practical performance; see Supplementary Material \ref{append4_Runtime_alpha}.

A formal description of our procedure, called MdFOCuS is presented in Algorithm~\ref{MdFOCuS_algo}. The algorithm covers both known and unknown pre-change parameters $\eta_1$, with the only difference being the calculation of the likelihood (lines 7 and 8). Therefore, the two variants store the same number of candidate changepoints and have the same time and memory complexities. In detail, computing the full likelihood ratio statistics for any particular changepoint has a time complexity of $\mathcal{O}(p)$, and therefore lines 7 and 8 in Algorithm~\ref{MdFOCuS_algo} have a complexity of  $\mathcal{O}(|\mathcal{T}|p)$. Recovering the maximum has complexity $\mathcal{O}(|\mathcal{T}|)$. Overall, this is $\mathcal{O}(|\mathcal{T}|p)$. 

\begin{algorithm}[ht]
\caption{MdFOCuS algorithm}
\label{MdFOCuS_algo}
\begin{algorithmic}[1]
\State {\bf Inputs:} $\{x_t\}_{t=1,2,\dots}$ $p$-variate independent time series; $thrs$ threshold; $\eta_1$ pre-change parameter (if known); $\texttt{maxSize} > p+1$, $\alpha \geq 1$, $\beta\geq 0$, pruning parameters.
\State {\bf Outputs:} stopping time $n$ and maximum likelihood change : $\hat{\tau}(.)$ or $\hat{\tau}(\eta_1)$
    \State $\LLR \gets -\infty,\quad \mathcal T \gets  \emptyset, \quad n \gets  1$ 
\While{($\LLR < thrs$)}  
     \State $n \gets n+1$
     \State $\mathcal T \gets \mathcal T \cup \{n-1\}$
     \State $\hat{m} \gets \max_{\tau \in \mathcal T} \{\max_{\eta_1, \eta_2} \ell_{\tau,n} (\eta_1,\eta_2)\}$ \Comment{ $\eta_1$ is fixed if pre-change known}
    \State $\LLR \gets \hat{m} - \max_{\eta_1, \eta_2} \ell_{n,n} (\eta_1,\eta_2)$  \Comment{ $\eta_1$ is fixed if pre-change known}
     \If{($|\mathcal{T}| > \texttt{maxSize}$)}
        \State $ \mathcal T \gets \Call{QuickHull}{\{P(\tau)\}_{\tau \in \mathcal T}}$ \Comment{Find the indices on the hull of all $P(\tau)$}
        \State $\texttt{maxSize} \gets   \lfloor \alpha|\mathcal{T}| + \beta \rfloor $
      \EndIf
\EndWhile
\State \Return $n$ and  $\LLR$
\end{algorithmic}
\end{algorithm}

\label{sec-EmpiricalRunTimeGauss} 
A link to a simple \texttt{R} implementation of Algorithm \ref{MdFOCuS_algo} for a change in the mean of $p$-dimensional Gaussian signal can be found in the Supplementary Materials. Our code is based on the {\texttt{convhulln}} function implementing the {\texttt{QuickHull}} algorithm in the {\texttt{geometry}} {\texttt{R}} package, and for $p=3$ it takes 2 minutes to process $n=5\times 10^5$ data points on an
Intel(R) Xeon(R) Gold 6252 CPU @ 2.10GHz processor choosing $\alpha=2$ and $\beta=1$. As can be seen in Figure \ref{fig:runtime_heuristic} (left), such run time is comparable to that of the {\texttt{ocd}} {\texttt{R}} package \cite[]{chen2020highdimensional}.
 
We also implemented Algorithm \ref{MdFOCuS_algo} for a change in the mean of a multi-dimensional Gaussian signal in R/C++ using the {\texttt{qhull}} library.  We use the Gaussian distribution as an example of a distribution with continuous density and independent dimensions.
To be specific, for $p$ from $1$ to $6$, we simulated time series $x_{1:n}$, with $x_t \sim \mathcal N_p(0, I_p)$ i.i.d. We varied the time series length from $2^{10}+1 (\approx 10^3)$ to $2^{23}+1(\approx8\times10^6)$ and simulated $100$ data sets for each length.
The average run times as a function of $n$ are presented on the log-scale in Figure \ref{GM_Runtime}. As expected, it is faster than our {\texttt{R}} implementation: for example, for $p=3$ it processes in 2 minutes for $n=10^6$ rather than $5\times 10^5$. The run times for a known or unknown $\eta_1$ are almost identical. Finally, we made a simple linear regression to model the logarithm of the run time as a function of $\log(n)$. The estimated slope coefficients are reported in Figure \ref{GM_Runtime} and are all smaller than $1.2$. 
\begin{figure*}[!t]%
     \centering
     \includegraphics[width=0.6\linewidth]{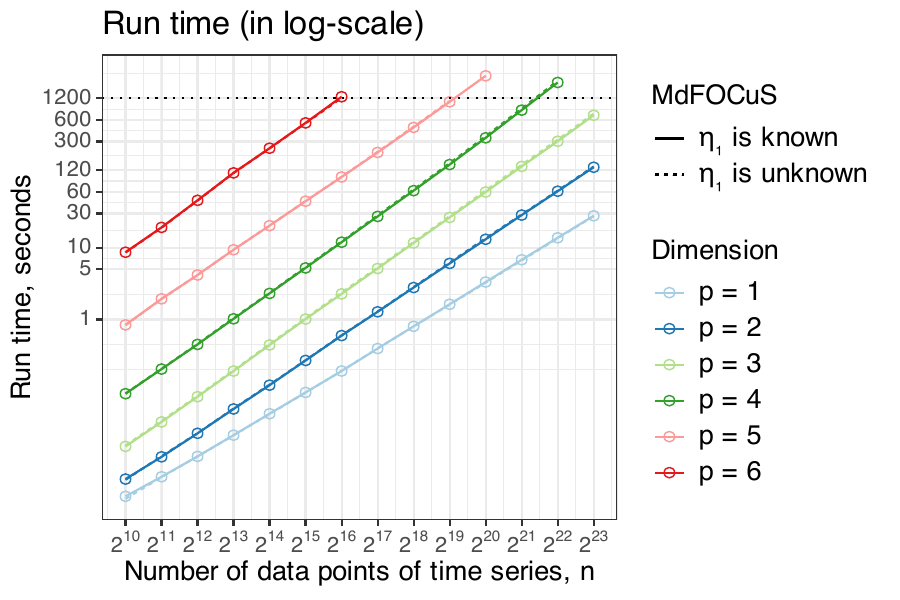}
     \caption{Run times in seconds of MdFOCuS with known (dashed line) and unknown (full line) pre-change parameter $\eta_1$, both with $\alpha = 2$ and $\beta = 1$, in dimension $p=1,\dots, 6$ for i.i.d Gaussian data. Run times are averaged over $100$ data sets. We considered a maximum running time of $20$ minutes (horizontal dotted black line) for the algorithms, and thus do not report run times for $p\geq4$ for large $n$. The slope of the curves estimated are 
     for known $\eta_1$: $\approx 1.005$ ($p=1$), $\approx 1.116$ ($p=2$), $\approx 1.180$ ($p=3$), $\approx 1.196$ ($p=4$), $\approx 1.143$ ($p=5$) and $\approx 1.202$ ($p=6$); 
     and for unknown $\eta_1$: $\approx 1.006$  ($p=1$), $\approx 1.118$ ($p=2$), $\approx 1.183$ ($p=3$), $\approx 1.202$ ($p=4$), $\approx 1.144$ ($p=5$) and $\approx 1.197$ ($p=6$).}
\label{GM_Runtime}
\end{figure*}



\subsection{A provably quasi-linear online implementation}\label{sec3_3_Dyadic}

We showed how Algorithm \ref{MdFOCuS_algo} is exact and empirically fast, but it is non-trivial to derive its worst or expected time complexity as the time at which we run {\texttt{QuickHull}} is stochastic. In Supplementary Material \ref{dyalic_section} we propose Algorithm~\ref{MdFOCuS_algo_dyadic} that is closely related but easier to analyse. This algorithm updates the set $\mathcal{T}$ 
in a deterministic manner on predetermined chunks of the observations. This deterministic choice of chunks allows for a precise quantification of the expected complexity of the algorithm 
using Corollary \ref{col:boundnumbervertexes}
and the assumption that the convex hull algorithm is at worst quadratic.
To be specific, here are our assumptions.
\begin{assumption}
\label{assump:Vn_complexity}
In any dimension $p\ge1$, there are positive constants $c_1$ and $c_2$ for which the expected number of vertices, $\mathbb{E}[V_n^p]$, can be bounded as follows:
\begin{equation*}
\mathbb{E}[V_n^p] \leq c_1 \log(n)^p + c_2.   
\end{equation*}
\end{assumption}
This is true under the conditions of 
Theorem \ref{th:boundnumbervertexes} and Corollary \ref{col:boundnumbervertexes}. Next, define $Time(n)$ to be the worst-case computational cost of finding the convex hull of $n$ points. 
\begin{assumption}
\label{assump:Time_complexity}
For our convex hull algorithm 
there exist two positive constants $c_3$ and $c_4$ such that  $Time(n) \leq c_3n^2 + c_4$. 
\end{assumption}
The assumption on the quadratic complexity is true for {\texttt{QuickHull}} for $p\le2$ \cite[]{Barber1996} and using the algorithm of \cite{Chazelle1993} it is true for $p\le3$. 

In essence, Algorithm~\ref{MdFOCuS_algo_dyadic} computes the convex hull using a divide-and-conquer strategy. Out of simplicity, let us consider the offline setting and assume $n-1$ is a power of $2$, that is $n-1=2^{\log_2(n-1)}$. Rather than running the hull algorithm on all points $\{P(\tau)\}_{\tau \in\{1, \ldots, n-1\}}$ we first run the algorithm on smaller chunks of size $2^{q_{min}}$ : $\tau \in \{2^{q_{min}}j+1, \ldots, 2^{q_{min}}j+ 2^{q_{min}} \},$ with $q_{min}$ a constant chosen by the user and $j$ a number smaller than $(n-1)/2^{q_{min}}$, and then dyadically merge the results. To be specific for each scale $q$ with $q_{min} < q < \log_2(n-1)$ we get the points on chunk  $\{2^{q+1}j+1, \ldots, 2^{q+1}j+ 2^{q+1} \}$, by merging the results of the two smaller chunks $\{2^{q+1}j+1, \ldots, 2^{q+1}j+ 2^{q} \},$ and  $\{2^{q+1}j+2^{q}+1, \ldots, 2^{q+1}j+ 2^{q+1} \}$. This merging is computationally efficient because, under Assumption \ref{assump:Vn_complexity}, the number of points on the hull of a chunk of size $2^q$ is expected to be small (of order $(\log(2^q))^p= (q\log(2))^p$).

One difficulty in Algorithm~\ref{MdFOCuS_algo_dyadic} is to make this dyadic or divide-and-conquer merging in an online fashion. To the best of our efforts, this makes the formal description a bit tedious, and we refer the interested reader to the Supplementary Material~\ref{dyalic_section} and Algorithm \ref{dyadic_online} and \ref{MdFOCuS_algo_dyadic} for a precise description. We now state our bound on the expected complexity of Algorithms \ref{dyadic_online} and \ref{MdFOCuS_algo_dyadic}.

\begin{lemma}\label{lemma:bound_update_dyadic}
Under 
Assumptions \ref{assump:Vn_complexity} and 
\ref{assump:Time_complexity}, the expected time complexity of Algorithm \ref{dyadic_online} and \ref{MdFOCuS_algo_dyadic} 
is $\mathcal{O}\left(n \log(n)^{p+1}\right)$.
\end{lemma}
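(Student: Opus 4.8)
The plan is to charge the total expected running time of the online algorithm to the individual merge operations it performs, exploiting the fact that each dyadic block is built exactly once, so the total work equals that of the offline dyadic construction. First I would fix the combinatorial skeleton: processing points $1,\dots,n-1$ decomposes them into dyadic blocks, and at scale $q$ (merging two blocks of size $2^q$ into one of size $2^{q+1}$) the algorithm performs at most $O(n/2^{q+1})$ merges over the whole run. The base scale $q_{min}$ contributes $O(n/2^{q_{min}})$ direct hull computations on chunks of the constant size $2^{q_{min}}$, each costing $Time(2^{q_{min}})=O(1)$ by Assumption \ref{assump:Time_complexity}, hence $O(n)$ in total, which is lower order.

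The core step is to bound the expected cost of one merge at scale $q$. By correctness of divide-and-conquer the hull of a block equals the hull of the union of its two children's hulls, so a merge feeds $V=V_1+V_2$ points into the hull routine, where $V_1,V_2$ are the children's hull sizes. Translating a block $\{a+1,\dots,a+2^q\}$ by $(-a,-S_a)$ with $S_a=\sum_{t\le a}x_t$ maps its points to $\{(j,\sum_{t=a+1}^{a+j}x_t)\}_{j=1}^{2^q}$, which by i.i.d.\ increments has the law of a fresh random walk of length $2^q$; since translation preserves hull vertices, Assumption \ref{assump:Vn_complexity} applies to each child and gives $\mathbb{E}[V]=O((\log 2^q)^p)=O(q^p)$. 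Now Assumption \ref{assump:Time_complexity} bounds the merge cost by $c_3V^2+c_4$, so I need $\mathbb{E}[V^2]$ rather than just $\mathbb{E}[V]$. The crucial observation is that $V$ is deterministically capped by the block size, $V\le 2^{q+1}$, so $V^2\le 2^{q+1}V$ and therefore $\mathbb{E}[V^2]\le 2^{q+1}\,\mathbb{E}[V]=O(2^{q+1}q^p)$, giving an expected per-merge cost of $O(2^{q+1}q^p)$.

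Summing over scales then yields
\[
\sum_{q=q_{min}}^{\log_2(n-1)} \frac{n-1}{2^{q+1}}\cdot O\!\left(2^{q+1}q^p\right)=\sum_{q=q_{min}}^{\log_2(n-1)} O(n\,q^p)=O\!\left(n(\log n)^{p+1}\right),
\]
using $\sum_{q\le\log_2 n}q^p=O((\log n)^{p+1})$. To close the argument I would add the per-step statistic evaluation: at each time the candidate set is the union of the hulls of the currently active blocks, of total expected size $\sum_q O(q^p)=O((\log n)^{p+1})$, so scanning it to update the $\LLR$ costs $O((\log n)^{p+1})$ per step and $O(n(\log n)^{p+1})$ overall, of the same order as the merge cost (with the factor $p$ absorbed as a constant), while the online bookkeeping is lower order.

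I expect the main obstacle to be the second-moment bound $\mathbb{E}[V^2]$: Assumption \ref{assump:Vn_complexity} controls only the first moment of the random hull size, whereas the quadratic hull cost forces control of $V^2$. The resolution, bounding $V^2\le 2^{q+1}V$ through the deterministic block-size cap, is precisely what turns the convergent scale-sum that a naive concentration heuristic would suggest (which would misleadingly give $O(n)$) into the stated $O(n(\log n)^{p+1})$; arguing that this conservative bound is both necessary and sufficient is the delicate point. A secondary, more technical issue is to verify that the online schedule of Algorithm \ref{dyadic_online} performs each merge exactly once and that the per-step candidate set really is the union of the active block hulls, which is where the bookkeeping of Algorithms \ref{dyadic_online} and \ref{MdFOCuS_algo_dyadic} enters.
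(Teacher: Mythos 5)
Your proposal is correct and follows essentially the same route as the paper: the same dyadic amortization over scales, the same key trick of converting the second moment of the hull size to a first moment via the deterministic block-size cap (the paper's Lemma \ref{lemmma:bound_expectedsquare}, $\mathbb{E}[(V_n^p)^2]\le n\,\mathbb{E}[V_n^p]$, feeding into its merge-cost Lemma \ref{lemma:bound_fusion}), and the same scale summation yielding $\mathcal{O}(n\log(n)^{p+1})$. The only organizational difference is that you fold the paper's two auxiliary lemmas into a single merge-cost estimate and handle the per-step statistic scan inline, whereas the paper bounds the candidate-set size separately in Lemma \ref{lemma:bound_numberofcandidate} and combines it afterwards.
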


We empirically study the complexity of Algorithm~\ref{MdFOCuS_algo_dyadic} in Figure~\ref{GM_DyadicRuntime} of Supplementary Material~\ref{dyalic_section}, where we see that it is slower than Algorithm~\ref{MdFOCuS_algo}. 
It is interesting to compare the bound on the computational cost with that of an oracle algorithm that knows the set of points on the convex hull at each iteration and whose computational cost is only evaluating the test statistic for each point on the hull. This would have a complexity of order $n\log^p(n)$, as $\sum_{t=1}^n \log^p(t) \geq n/2 \log^p(n/2)$. This is just a factor of $\log n$ faster than the bound in Lemma \ref{lemma:bound_update_dyadic}. 

\subsection{Two variant of MdFOCuS}
\label{sec3_3_AlgoMR}

In this subsection, we describe two variants of the algorithm, arising from variations in (i) the test statistic construction and (ii) the convex hull update. These variants are available regardless of whether or not the pre‐change parameter is known.

The first variant will allow MdFOCuS to improve on scenarios where only a subset of the coordinates change: out of simplicity in lines 7 and 8 of Algorithm~\ref{MdFOCuS_algo} we only consider the likelihood ratio statistics \eqref{eq:loglikelihood_eta}  that assumes all coordinates are changing. However, based on Theorem \ref{th:changepoint_andhull}, the changepoints maximising the sparse likelihood ratio statistics \eqref{eq:restricted_max}, where only $s$ coordinates are changing, also correspond to points on the hull, and we can compute the test statistics in this case by replacing lines 7 and 8 with the corresponding sparse likelihoods. 

To compute all sparse max likelihood ratio statistics 
we sort the per-dimension ratio statistics (this can be done in $\mathcal{O}(p\log(p))$) and sum these from the largest to the smallest in $\mathcal{O}(p)$ to iteratively recover the best for all sparsity levels $s$. We then recover the maximum for each sparsity level in $\mathcal{O}(\mathcal{T})$. If we consider all sparsity levels the total complexity is $\mathcal{O}(|\mathcal{T}|p\log(p))$.

In practice, it makes sense to consider a subset of all sparsity levels. In our simulations we consider: $s \in \{2^0, 2^1, 2^2, \ldots, 2^{\lfloor \log_2(p)\rfloor}, p \}$.
In a similar fashion to the FOCuS approximation, we also consider the sum of the max likelihood ratio statistics for each coordinate. 
As can be seen in Figure \ref{fig:runtime_heuristic} , the empirical run time of MdFOCuS with sparse statistics is 
comparable to that of  {\texttt{ocd}} \cite[]{chen2020highdimensional}.

The second variant extends the algorithm to high-dimensional settings, where a full convex hull update is computationally prohibitive. In fact, under the condition of Theorem \ref{th:boundnumbervertexes}, Algorithm \ref{MdFOCuS_algo} is expected to store $\mathcal{O}(\log^p(n))$ candidates and the $\log^p(n)$ term means that the algorithm is impractical for $p$ larger than $5,$ as exemplified in Figure \ref{GM_Runtime}. In Supplementary Material~\ref{heuristic_section}, we propose an approximate algorithm by recovering only a subset of the points on the hull. We obtain this subset by considering projections on at most $\tilde{p}$ variates, running the hull algorithm on each projection, and taking the union of the time-points (of the vertices) retained for each projection.
For each projection or group of variates, under the condition of Theorem \ref{th:boundnumbervertexes}, we store $\mathcal{O}(\log^{\tilde{p}}(n))$ candidates. Considering $O(p)$ such groups we store $\mathcal{O}(p\log^{\tilde{p}}(n))$ candidates, thus 
reducing the computational complexity.

This heuristic for $\tilde{p}=2$, implemented in \texttt{R}, runs on average in about 100 seconds for $p=100$ and $n=10000$. We further explore the run times of this heuristic in Figure~\ref{fig:runtime_heuristic}  
and show that its run time is comparable to {\texttt{ocd}}.
As shown in Section \ref{sec4_1_Simulation} it provides good statistical performance. 

\begin{figure*} [!t]
\includegraphics[width=\linewidth]{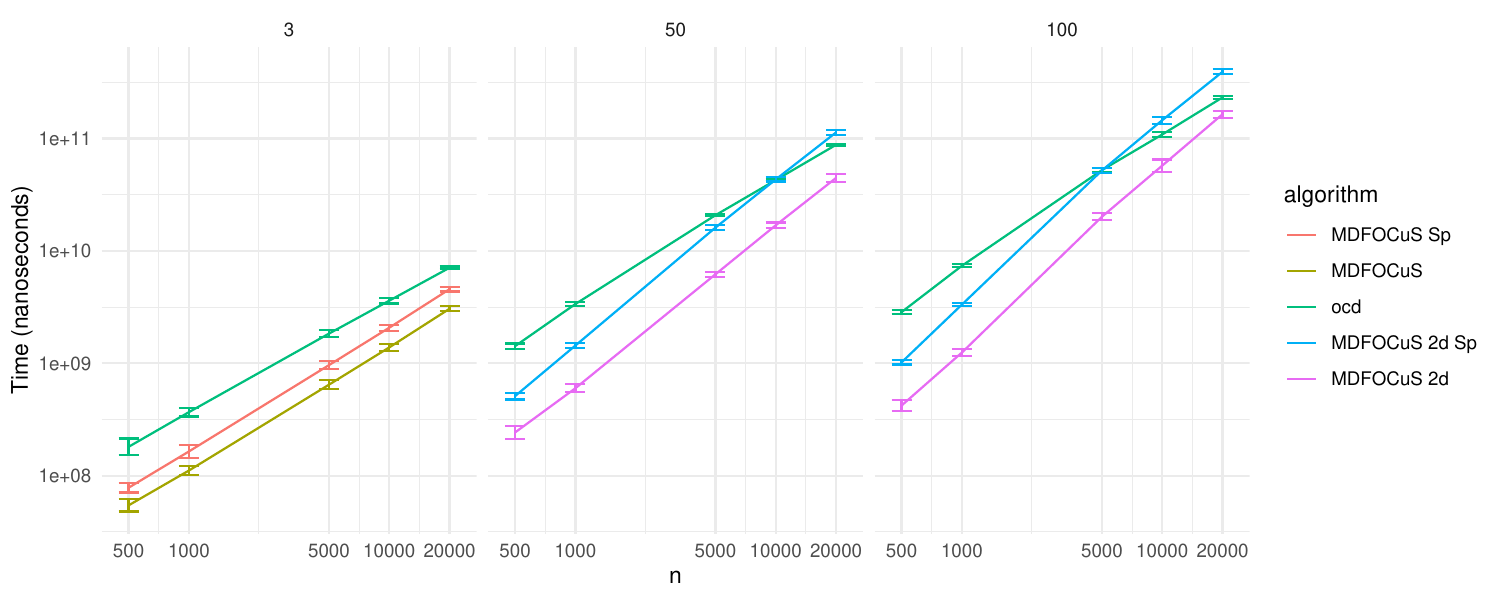}
\caption{Average run times of MdFOCuS, MdFOCuS with sparse statistics (Sp), and the 2$d$ hull approximation and \texttt{ocd} over 50 replicates for each scenario, for $p=3, 50$ and $100$ and $n$ varying from $500$ to $10000$ on an Intel Xeon Gold 6342 (32) @ 2.793GHz. }\label{fig:runtime_heuristic}
\end{figure*}

\subsection{Statistical Guarantees}\label{subsec:statguarantees}

In Supplementary Material~\ref{app:stat_control}, we provide several statistical guarantees for our procedure considering sparse or dense statistics in the multivariate Gaussian case. Our results also cover test statistics that threshold the contribution from each component.
We consider properties of our algorithm if there is no change and then if there is one. In the first scenario, we cover both the pre-change parameter being known or unknown,  while in the second, we only consider the pre-change parameter being known. We summarise the main results and their implications here.

We first consider the properties of our algorithm if there is no change. In this case, one may want to control either the false-positive rate, that is, the probability of ever detecting a change, or the average run length, the expected time until we detect a change. Our results give thresholds for both cases, but we will summarise them here just for an average run length of $\gamma$. At time $n$, we define the test statistic assuming a sparsity level $s$ and a candidate change at $\tau$ as $S^s_{\tau,n}$. 
Testing for $m$ different sparsity levels, and with the following threshold for sparsity $s$
\[
c_s(\gamma) = s + 2\zeta + 2\sqrt{s\zeta}, \mbox{ where } \zeta=4\log \gamma + s\log p + \log m + 5\log2
\]
our MdFOCus algorithm will have an average run length of at least $\gamma$ (this follows by applying the argument in Remark \ref{app-rem-ARL} to the result of Proposition \ref{coro:H0_all} with $N=2\gamma$ and $\alpha=1/2$). In the case where $s=p$ the $s\log(p)$ term in $\zeta$ can be omitted.

Next, we consider the power of detecting a change. This can be quantified in terms of the average detection delay, informally the expected time after a change that you would detect it, or bounds on the detection delay, the probability of detecting a change within some time-interval that it occurs. We provide results on both, but will summarise here only the latter. Let the change in mean be specified by the vector $\delta$. For a test at sparsity level $s$, denote by $\delta^{[s]}$ the vector of length $s$ that consists of the $s$ entries of $\delta$ with the largest absolute value. Finally, let $\tau^*$ denote the true changepoint and $T^{s}$ the time at which we detect a change using the $s$-sparse statistic. Then from Proposition \ref{coro:H1_ranked} and Remark \ref{rem:DDtos}, and using the threshold $c_s(\gamma)$, we have that with probability at least $(1-\alpha)$,
\begin{equation} \label{eq:DD}
T^{s}-\tau^* < \frac{2(c_s(\gamma)-s) + 2 \sqrt{s \log(3/\alpha)} - 8 \log(\alpha/3) }{\|\delta^{[s]}\|^2}.
\end{equation}

We can draw some insights from these results. Excluding $s=p$, first observe that $c_s(\gamma)$ is increasing with $s$. This means, unsurprisingly, that to minimise the bound on the detection delay $T^{s}-\tau^*$, we never want to use $s$ greater than the true sparsity of $\delta$, with the notable exception of $s=p$. More generally, it is helpful to consider two different regimes depending on the relative sizes of $\gamma$ and $p$. The first is a low-dimensional setting where $\log \gamma$ is much larger than $p$. In this case, the numerator of the right-hand side of (\ref{eq:DD}) is dominated by the $\log \gamma$ term in $c_s(\gamma)$, and will be approximately $8\log \gamma/\|\delta^{[s]}\|^2$. In this case, maximising $\|\delta^{[s]}\|^2$ is key to minimising the detection delay, and this can be guaranteed by using the dense test statistic, $s=p$. Any advantage of using smaller $s$, in the case that some components do not change, will be small.

The other regime is where $p$ is much larger than $\log \gamma$. In this case, using the dense statistics will give a detection delay with leading order (from the right-hand side of Equation \ref{eq:DD}) that is proportional to $\sqrt{p}/\|\delta\|^2$. By comparison, the $s$-sparse statistic will have a detection delay with leading order proportional to $(s\log p)/ \|\delta^{[s]}\|^2$. If only $s$ components change, and all change by the same amount, then this latter quantity will be lower order in $p$ when $s=o(\sqrt{p})$, as is consistent with other results on high-dimensional change detection \cite[e.g.][]{chen2020highdimensional}. Finally, we can see that the benefit of robustness of performance to the true level of sparsity that comes using $m=O(\log p)$ tests for differing sparsity levels is only a $\log\log p$ increase in the thresholds and hence in the detection delays.

\subsection{Empirical statistical performance}\label{sec4_1_Simulation}

In this section, we evaluate the empirical performance of the {\texttt{R}} implementation of MdFOCuS with the sparse statistics (as in Section \ref{sec3_3_AlgoMR}) on synthetic 3, 5, and 100-dimensional time series under the Gaussian change-in-mean setting, with an additional simulation study for a 50-dimensional time series presented in Supplementary Material~\ref{sec:approx_comparison}. 
We consider two scenarios: pre-change mean known and unknown. For each experiment, we report results aggregated from 300 data sets.

For MdFOCuS for dimension $p=3$ and $5$, we consider sparsity level $s=1,$ the full max-likelihood ratio (either $s=3$ or $s=5$ depending on the dimension), and the sum of the max likelihood ratio statistics per dimension as proposed in \cite{romano2022fast}. For higher dimensions ($p = 100$ and $50$), we use the approximate algorithm with $\tilde{p}=2$ that analyses 2$d$-projections of the data and set the sparsity levels as in Section \ref{sec3_3_AlgoMR}. We compare the performance of MdFOCuS with the high-dimensional online changepoint detection method {\texttt{ocd}} \cite[]{chen2020highdimensional} and a naive multivariate implementation of FOCuS, obtained by running FOCuS independently for each dimension and summing or taking the maximum of the resulting traces \citep[see Section~6 of][]{romano2022fast} at each iteration. Additionally, as a comparison, for the pre-change mean unknown case, we report results of {\texttt{ocd}} and the pre-change mean known  FOCuS with a plug-in estimate of the pre-change parameter, obtained over a probation (training) period of 250 observations.

Thresholds were selected under a Monte-Carlo simulation to achieve an average run length of $5000$ observations. 
An additional study controlling the false positive rate can be found in Supplementary Material~\ref{sec:approx_comparison}. For a change at $t = 1000$, we evaluate performances for a range of different change magnitudes and sparsity levels. 
To compare performances across different sparsity regimes, changes in the affected time series were normalised such that the change direction lies in the $p$-dimensional sphere, i.e., ($|| \eta_1 - \eta_2||^2_2 = \delta$), where $\delta$ is the overall change magnitude and $\eta_1, \eta_2$ are the pre- and post-change means, respectively.
Tables \ref{tab:dd3d} to Table \ref{tab:dd100} show results of the Average Detection Delays (ADDs) for 3, 5, and 100-dimensional time series, respectively. 

For all dimensions of time series, in both pre-change parameter known and unknown scenarios, we observe that MdFOCuS generally achieves smaller detection delays in cases where the change affects all the time series and over changes of larger magnitudes. In scenarios with sparsity set to 1, the simple FOCuS aggregation tends to be slightly better. This is expected as if we only have a change in one dimension, both the sparse statistics of FOCuS and MdFOCuS  will be the same, but MdFOCuS  will have a slightly larger threshold, resulting in a slower delay.

For the pre-change mean known case, over changes of smaller magnitudes, we find MdFOCuS  to be relatively close to the best competitor in terms of ADD, while it generally outperforms the other methods for larger changes. Furthermore, MdFOCuS is seen to have consistent performances across different sparsity levels. As {\texttt{ocd}} uses a grid for the possible size of change for each variate, its performance is affected by how close the true change size is to the nearest point on the grid. In particular, its performance deteriorates when the true change is outside the set of grid values. We observe this in its deterioration of performance for low magnitudes (0.125) and large magnitudes (1 or 2), in particular over dimensions $p=3, 5$. 

For the pre-change mean unknown scenario, we observe that MdFOCuS generally outperforms competitor methods for magnitudes larger than 0.250, with the exception of scenarios with a change in a single dimension, as explained above. Moreover, in dimensions $p=5$ and $100$, we can clearly see the advantages of performing the likelihood ratio test for a change when the pre-change mean is unknown over using a plug-in estimate for the pre-change parameter.
The only exception occurs in the case of FOCuS, when estimating the pre-change parameter from the training data, for $p=3$, which out-performs other methods for the smallest size of change. 
Whilst this might seem counter-intuitive, as the error from the estimation should propagate, in dimension $p=3$ for a small signal-to-noise ratio, the estimation is good enough to not impair the average run length and increase the thresholds, while benefiting from an easier optimisation (as we see in the pre-change mean known case). However, as we move to higher dimensions $p=5$ and $100$, it is unlikely to obtain a better estimate of the true pre-change parameter simultaneously in all dimensions. This results in larger thresholds to control the desired run length, and therefore in a global loss of power. Of course, having more training observations improves the results, in particular for $p=3$, however, this still has little to no effect in higher dimensions (see Supplementary Material~\ref{app:extra_training_data}).


\begin{table}[!h]
\centering
\caption{\label{tab:dd3d}Average detection delay with 3-dimensional time series for a change at time $t = 1000$. Pre-change parameter is known on the left, and unknown on the right. Best results per row are highlighted in bold. For the pre-change unknown case FOCuS (est 250) and ocd estimate the pre-change parameter from 250 training samples.}
\centering
\resizebox{\ifdim\width>\linewidth\linewidth\else\width\fi}{!}{
\fontsize{12}{14}\selectfont
\begin{tabular}[t]{rrrrrrrrr}
\toprule
\multicolumn{2}{c}{ } & \multicolumn{3}{c}{Pre-change Known} & \multicolumn{4}{c}{Pre-change Unknown} \\
\cmidrule(l{3pt}r{3pt}){3-5} \cmidrule(l{3pt}r{3pt}){6-9}
magnitude & sparsity & FOCuS  & MdFOCuS  & ocd & FOCuS (est 250) & FOCuS & MdFOCuS & ocd (est 250)\\
\midrule
0.125 & 1 & \textbf{786.67} & 806.54 & 1227.57 & \textbf{1470.76} & 1644.05 & 1726.24 & 1783.41\\
0.125 & 2 & \textbf{791.36} & 817.12 & 1421.54 & \textbf{1693.06} & 1770.33 & 1815.88 & 1987.04\\
0.125 & 3 & \textbf{859.03} & 886.83 & 1496.18 & \textbf{1723.76} & 1806.46 & 1853.35 & 2097.23\\
\addlinespace
0.250 & 1 & 250.58 & 255.78 & \textbf{227.18} & 470.69 & \textbf{340.46} & 364.50 & 430.33\\
0.250 & 2 & \textbf{253.56} & 256.76 & 268.86 & 556.86 & 369.36 & \textbf{363.55} & 516.28\\
0.250 & 3 & 273.92 & \textbf{266.27} & 300.69 & 630.94 & 404.19 & \textbf{395.40} & 559.66\\
\addlinespace
0.500 & 1 & 65.85 & 66.46 & \textbf{56.46} & 113.75 & 69.73 & \textbf{69.36} & 83.90\\
0.500 & 2 & 71.85 & 69.93 & \textbf{65.18} & 132.99 & 78.42 & \textbf{75.28} & 96.50\\
0.500 & 3 & 79.21 & 76.52 & \textbf{74.50} & 155.61 & 87.95 & \textbf{84.11 }& 103.97\\
\addlinespace
0.750 & 1 & 31.80 & 32.16 & \textbf{28.95 }& 50.73 & {32.62} & \textbf{33.22} & 39.46\\
0.750 & 2 & 34.66 & 33.31 & \textbf{32.52 }& 61.46 & 36.15 & \textbf{35.35} & 43.47\\
0.750 & 3 & 37.19 & \textbf{34.84} & 35.43 & 71.75 & 39.40 & \textbf{36.47} & 46.40\\
\addlinespace
1.000 & 1 & \textbf{18.51} & 19.04 & 18.87 & 29.18 & \textbf{19.06} & 19.53 & 24.16\\
1.000 & 2 & 20.30 & \textbf{19.53} & 20.45 & 34.55 & 20.94 & \textbf{20.42} & 25.45\\
1.000 & 3 & 21.68 & \textbf{20.27} & 21.52 & 40.03 & 22.49 & \textbf{20.72} & 26.23\\
\addlinespace
2.000 & 1 &\textbf{ 5.35} & 5.41 & 6.40 & 8.06 & \textbf{5.43 }& 5.53 & 7.12\\
2.000 & 2 & 5.97 & \textbf{5.72 }& 6.64 & 9.46 & 6.13 & \textbf{5.78 }& 7.25\\
2.000 & 3 & 6.51 & \textbf{6.09} & 6.80 & 10.62 & 6.71 & \textbf{6.16} & 7.41\\
\bottomrule
\end{tabular}}
\end{table}

\begin{table}[!h]
\centering
\caption{\label{tab:dd5}Average detection delay with 5-dimensional time series for a change at time $t = 1000$. Pre-change parameter is known on the left, and unknown on the right. Best results per row are highlighted in bold. For the pre-change unknown case FOCuS (est 250) and ocd estimate the pre-change parameter from 250 training samples.}
\centering
\resizebox{\ifdim\width>\linewidth\linewidth\else\width\fi}{!}{
\fontsize{12}{14}\selectfont
\begin{tabular}[t]{rrrrrrrrr}
\toprule
\multicolumn{2}{c}{ } & \multicolumn{3}{c}{Pre-change Known} & \multicolumn{4}{c}{Pre-change Unknown} \\
\cmidrule(l{3pt}r{3pt}){3-5} \cmidrule(l{3pt}r{3pt}){6-9}
magnitude & sparsity & FOCuS  & MdFOCuS & ocd  & FOCuS (est 250) & FOCuS & MdFOCuS & ocd (est 250)\\
\midrule
0.125 & 1 & \textbf{833.11} & 866.74 & 911.17 & 2219.97 & \textbf{1718.76} & 1802.03 & 2458.66\\
0.125 & 2 & \textbf{892.82} & 918.78 & 1106.68 & 2221.24 & \textbf{1874.47} & 2049.54 & 2441.70\\
0.125 & 3 & \textbf{925.08} & 940.54 & 1101.38 & 2255.65 & \textbf{1924.89} & 2036.49 & 2493.92\\
0.125 & 4 & \textbf{835.44} & 836.87 & 1059.06 & 2102.15 & \textbf{1825.65} & 1881.13 & 2296.26\\
0.125 & 5 & 953.85 &\textbf{ 948.59} & 1297.37 & 2228.96 &\textbf{ 2005.88} & 2062.98 & 2448.01\\
\addlinespace
0.250 & 1 & 259.75 & 267.12 & \textbf{229.64} & 871.35 & \textbf{360.65} & 368.47 & 1064.11\\
0.250 & 2 & 281.77 & 279.88 & \textbf{253.79} & 963.88 & \textbf{433.69} & 450.35 & 1084.21\\
0.250 & 3 & 297.41 & 289.83 & \textbf{287.10} & 1004.43 & \textbf{456.72} & 467.69 & 1077.31\\
0.250 & 4 & 260.23 & \textbf{242.46} & 265.40 & 918.63 & 413.22 & \textbf{382.42} & 852.33\\
0.250 & 5 & 304.13 & \textbf{288.07} & 309.16 & 1023.24 & 504.75 & \textbf{495.38} & 1000.05\\
\addlinespace
0.500 & 1 & 72.98 & 73.81 & \textbf{64.74} & 206.83 & \textbf{80.54} & 80.69 & 247.08\\
0.500 & 2 & 84.75 & 80.83 & \textbf{73.93 }& 243.50 & 95.86 & \textbf{91.35} & 253.56\\
0.500 & 3 & 91.31 & 84.41 & \textbf{83.54} & 267.82 & 103.57 & \textbf{96.62} & 252.69\\
0.500 & 4 & 80.89 & \textbf{71.59} & 75.94 & 250.83 & 94.23 & \textbf{81.13} & 199.82\\
0.500 & 5 & 95.41 &\textbf{ 86.77} & 89.48 & 293.72 & 110.49 & \textbf{100.25} & 235.87\\
\addlinespace
0.750 & 1 & 33.44 & 34.02 & \textbf{32.98} & 91.82 & \textbf{34.46 }& 35.00 & 108.72\\
0.750 & 2 & 39.21 & 37.79 & \textbf{36.22} & 113.10 & 40.34 & \textbf{39.06 }& 113.74\\
0.750 & 3 & 41.29 &\textbf{ 38.16} & 38.61 & 124.88 & 44.27 & \textbf{41.49} & 110.91\\
0.750 & 4 & 37.03 &\textbf{ 33.10} & 34.71 & 116.62 & 41.61 & \textbf{35.94 }& 87.09\\
0.750 & 5 & 45.65 & \textbf{40.26} & 42.24 & 137.55 & 50.88 & \textbf{44.03} & 104.84\\
\addlinespace
1.000 & 1 & \textbf{19.63} & 19.84 & 21.37 & 51.75 & \textbf{20.02} & 20.30 & 62.33\\
1.000 & 2 & 22.88 & \textbf{21.86 }& 22.88 & 63.86 & 23.55 & \textbf{22.49} & 62.99\\
1.000 & 3 & 24.67 & \textbf{22.83} & 23.65 & 71.35 & 25.38 & \textbf{23.47} & 62.24\\
1.000 & 4 & 22.45 & \textbf{19.29} & 20.32 & 67.41 & 23.76 & \textbf{20.20} & 49.88\\
1.000 & 5 & 27.18 & \textbf{23.63} & 25.07 & 81.02 & 28.62 & \textbf{24.79} & 61.01\\
\addlinespace
2.000 & 1 & \textbf{5.79} & 5.84 & 7.25 & 13.40 & \textbf{5.73} & 5.82 & 15.78\\
2.000 & 2 & 6.88 & \textbf{6.58 }& 7.47 & 16.57 & 6.86 & \textbf{6.53} & 15.66\\
2.000 & 3 & 7.21 & \textbf{6.55 }& 7.33 & 18.62 & 7.29 &\textbf{ 6.72 }& 15.74\\
2.000 & 4 & 6.64 & \textbf{5.72} & 6.10 & 17.92 & 6.92 & \textbf{5.90 }& 12.99\\
2.000 & 5 & 7.91 & \textbf{6.87} & 7.40 & 21.30 & 8.24 &\textbf{ 7.08} & 15.55\\
\bottomrule
\end{tabular}}
\end{table}

\begin{table}[!h]
\centering
\caption{\label{tab:dd100}Average detection delay with 100-dimensional time series for a change at time $t = 1000$. Pre-change parameter is known on the left, and unknown on the right. Best results per row are highlighted in bold. For the pre-change unknown case FOCuS (est 250) and ocd estimate the pre-change parameter from 250 training samples.}
\centering
\resizebox{\ifdim\width>\linewidth\linewidth\else\width\fi}{!}{
\fontsize{12}{14}\selectfont
\begin{tabular}[t]{rrrrrrrrr}
\toprule
\multicolumn{2}{c}{ } & \multicolumn{3}{c}{Pre-change Known} & \multicolumn{4}{c}{Pre-change Unknown} \\
\cmidrule(l{3pt}r{3pt}){3-5} \cmidrule(l{3pt}r{3pt}){6-9}
magnitude & sparsity & FOCuS  & MdFOCuS & ocd  & FOCuS (est 250) & FOCuS & MdFOCuS & ocd (est 250)\\
\midrule
0.125 & 1 & 1417.15 & 1391.54 & \textbf{1171.63} & 3459.08 & 2708.45 & \textbf{2704.65} & 3546.43\\
0.125 & 5 & 1881.10 & 1810.57 & \textbf{1627.00} & 3497.94 & \textbf{2859.51} & 2872.20 & 3595.24\\
0.125 & 10 & 2084.84 & 1978.85 & \textbf{1940.41} & 3501.48 & \textbf{2846.45} & 2874.17 & 3595.46\\
0.125 & 50 & \textbf{2252.48} & 2289.56 & 2386.98 & 3494.67 & 2888.46 & \textbf{2881.59} & 3591.96\\
0.125 & 100 & \textbf{2323.59} & 2338.00 & 2431.51 & 3507.66 & \textbf{2859.11} & 2887.60 & 3599.24\\
\addlinespace
0.250 & 1 & 389.11 & 380.09 & \textbf{335.95} & 2659.50 & 730.54 & \textbf{720.45} & 2841.37\\
0.250 & 5 & 641.18 & 555.57 & \textbf{505.39} & 3054.36 & 1520.80 & \textbf{1451.53} & 3232.02\\
0.250 & 10 & 789.27 & 673.15 & \textbf{595.10} & 3178.29 & 1879.64 & \textbf{1692.20} & 3334.21\\
0.250 & 50 & 1057.30 & 895.53 & \textbf{827.30} & 3323.17 & 2353.01 & \textbf{2309.73} & 3453.27\\
0.250 & 100 & 1080.26 & 916.64 & \textbf{891.09} & 3346.50 & \textbf{2341.24} & 2414.38 & 3476.58\\
\addlinespace
0.500 & 1 & 105.79 & 107.89 & \textbf{104.79} & 721.53 & \textbf{118.84} & 121.50 & 859.71\\
0.500 & 5 & 169.93 & 147.36 & \textbf{139.23} & 1297.63 & 214.14 & \textbf{185.07} & 1449.56\\
0.500 & 10 & 232.04 & 176.44 & \textbf{170.03} & 1653.97 & 296.51 & \textbf{217.84} & 1850.22\\
0.500 & 50 & 357.90 & 248.41 & \textbf{239.08} & 2480.88 & 578.17 & \textbf{348.83} & 2719.69\\
0.500 & 100 & 372.34 & 254.51 & \textbf{244.07} & 2611.86 & 679.19 & \textbf{365.79} & 2841.33\\
\addlinespace
0.750 & 1 & \textbf{47.78} & 48.67 & 57.56 & 310.89 & \textbf{50.08} & 50.68 & 461.51\\
0.750 & 5 & 75.34 & \textbf{66.80} & 73.47 & 580.05 & 85.34 & \textbf{73.90} & 715.73\\
0.750 & 10 & 104.45 & \textbf{80.81} & 84.16 & 798.60 & 117.25 & \textbf{88.30} & 925.56\\
0.750 & 50 & 184.09 & 116.47 & \textbf{114.29} & 1626.21 & 239.73 & \textbf{133.60} & 1836.28\\
0.750 & 100 & 193.77 & 120.24 & \textbf{115.21} & 1823.30 & 279.42 & \textbf{137.98} & 2074.96\\
\addlinespace
1.000 & 1 & \textbf{27.12} & 27.55 & 38.67 & 173.04 & \textbf{27.76} & 28.41 & 317.49\\
1.000 & 5 & 43.72 & \textbf{39.02} & 46.07 & 325.28 & 46.44 & \textbf{41.45} & 476.20\\
1.000 & 10 & 60.36 & \textbf{47.40} & 50.77 & 458.62 & 63.85 & \textbf{49.35} & 599.74\\
1.000 & 50 & 109.11 & 67.33 & \textbf{66.02} & 1074.41 & 135.79 & \textbf{73.06} & 1218.93\\
1.000 & 100 & 116.60 & 69.23 &\textbf{ 67.59} & 1288.45 & 157.47 & \textbf{75.55} & 1466.28\\
\addlinespace
2.000 & 1 & \textbf{7.58} & 7.65 & 12.93 & 43.03 & \textbf{7.53} & 7.73 & 142.00\\
2.000 & 5 & 12.92 & \textbf{11.12} & 13.80 & 82.41 & 12.95 & \textbf{11.33} & 203.48\\
2.000 & 10 & 16.46 & \textbf{13.16} & 14.80 & 117.67 & 16.30 & \textbf{13.48} & 250.44\\
2.000 & 50 & 31.85 & 18.37 & \textbf{18.08} & 315.56 & 34.85 & \textbf{18.70} & 451.59\\
2.000 & 100 & 34.28 & 18.97 & \textbf{18.34} & 422.46 & 41.93 & \textbf{19.37} & 493.39\\
\bottomrule
\end{tabular}}
\end{table}

\section{Application to NBA Plus-Minus Data}\label{sec4_2_NBA}
In this section, we illustrate our methodology to detect a change in the Cleveland Cavaliers (an NBA team) Plus-Minus score from season 2010-11 to 2017-18. This example was proposed in \cite{shin2022detectors}.
We use this example for two reasons.
    First, it highlights that our computational framework is also applicable to CUSUM Exponential baseline e-detectors proposed in \cite{shin2022detectors} which in essence are a non parametric generalization of generalized likelihood ratio rules.
    Second, it illustrates how modelling not only the signal's mean but also its variance can simplify the calibration of the detection threshold, without necessarily resorting to a non-parametric approach.

We recovered the Plus-Minus score using the \texttt{nbastatR} package \cite[]{nbastatr} available at \url{https://github.com/abresler/nbastatR}. The Cavaliers Plus-Minus is represented in Figure \ref{fig:NBA_2010_2018}, and is a time series of the difference in score of the Cavaliers and their opponents over successive games.
We consider detecting a changepoint in this data online.
\begin{figure*}[!ht]%
\begin{center}
\includegraphics[width= 0.75\linewidth]{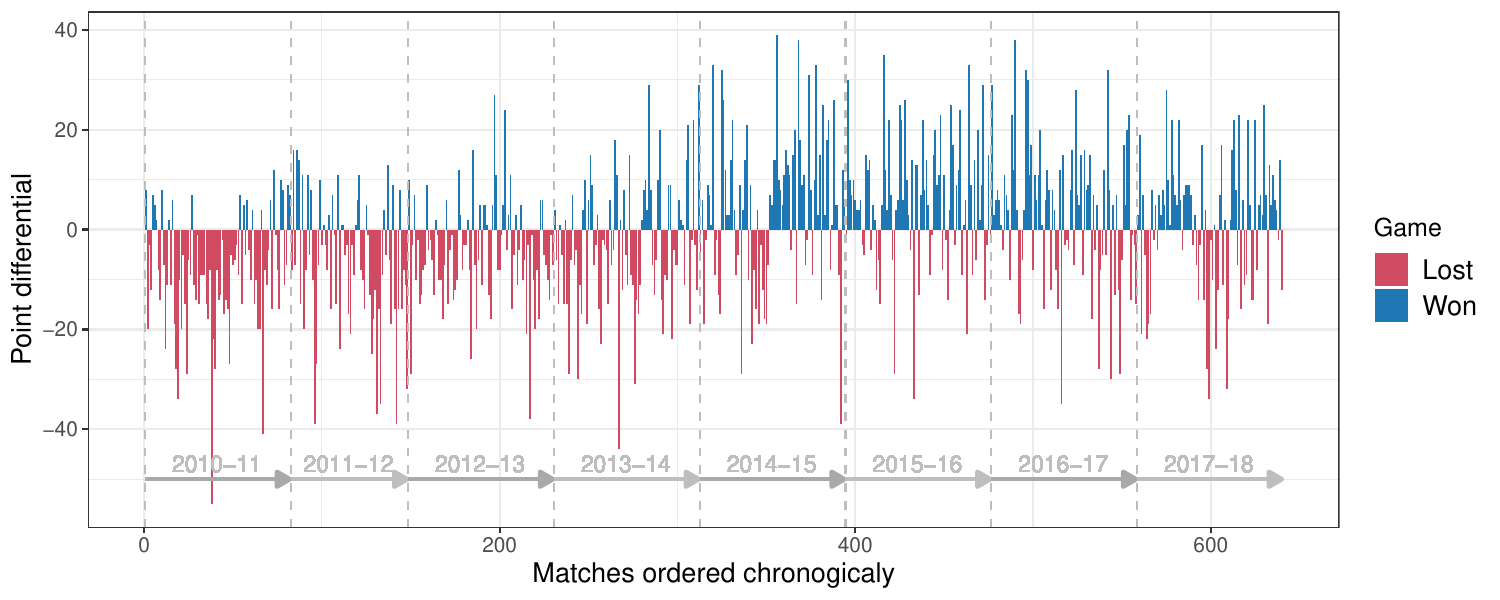}
\caption{
Plus-Minus score of the Cavaliers from season 2010-11 to season 2017-18 as a bar plot. Losses are in dark red and wins are in dark blue. Ends of seasons are represented with vertical dashed lines and seasons are shown on the bottom with grey arrows. 
}
\label{fig:NBA_2010_2018}
\end{center}
\end{figure*}

As detailed in \cite{shin2022detectors}, it is nontrivial to fit this problem into commonly used parametric sequential changepoint detection procedures for two main reasons. 
    First, it is not easy to choose a parametric model to fit the score. In particular, we would like to take into account that we have integer-valued data, and we expect some form of dependency.
    Second, assuming we have chosen a model or likelihood function, calibrating the threshold to detect a changepoint is not trivial. A small change in the distribution we use as a reference can substantially change the threshold and as a consequence increase the detection time or lead to false detections.
E-detectors \cite[]{shin2022detectors}, are a non-parametric solution to these two problems.

\subsection{Convex hull for CUSUM e-detectors} \label{sec4_connect_CH_edetectors}
Following Equations (27) and (41) of \cite{shin2022detectors} 
we can write their update of the e-detector CUSUM statistics for a given $\lambda$ as
\begin{equation*}
M_n^{CU}(\lambda) = \exp{(\lambda s(x_n) - \psi(\lambda)v(x_n))} \max\{ M_{n-1}^{CU}(\lambda), 1 \},
\end{equation*}
where $\lambda$ is in a subset of $\mathbb{R}$, $s$ and $v$ are real-valued functions, and $\psi$ is a finite and strictly convex function. Taking logs and considering all possible changepoints prior to $n$ we recover with induction that
\begin{equation*}
\log(M_n^{CU}(\lambda)) = \max_{\tau < n} \left\{\sum_{t=\tau+1}^n(\lambda s(x_t) - \psi(\lambda)v(x_t)) \right\}.
\end{equation*}

By subtracting $\sum_{t=1}^n (s(x_t) \lambda - \psi(\lambda)v(x_t))$ in the previous bracket, we do not change the maximiser, and we recover our functional pruning framework of Equation~\eqref{eq:FP_equation} with $a_\tau = \sum_{t=1}^{\tau} v(x_t)$, and $2 b_\tau=\sum_{t=1}^{\tau} s(x_t)$.
Applying Theorem~\ref{th:convexhull} we get that the changepoint optimising the e-detector CUSUM score for any $\lambda$ value
corresponds to a vertex on the convex hull of 
\begin{equation}
\label{eq:edetect-hull}
\left(\sum_{t=1}^{\tau} v(x_t),\frac{1}{2}\sum_{t=1}^{\tau} s(x_t)\right)_{\tau <n}.  
\end{equation}
As argued in \cite{shin2022detectors}  it makes sense to consider a mixture of e-detectors, that is different $\lambda$ values. For computational reasons \cite[]{shin2022detectors} considered a finite number of values, chosen 
using their \texttt{computeBaseLine} function. 

We will now compare the baseline numbers of two e-detectors for Plus-Minus score to the number of points on the hull of $\eqref{eq:edetect-hull}$.
\cite{shin2022detectors} considered two e-detector models. In the first, which we call Winning rate, they model the winning rate and consider $v(x_t) = 1$ and $s(x_t) = \mathbbm{1}_{x_t > 0} - p_0$, where $\mathbbm{1}$ is the indicator function and $p_0$ is taken to be $0.49$ in the numerical application. With \texttt{computeBaseLine} they selected 69 $\lambda$ values.
In their second e-detector model, which we call Plus-Minus, they analyse the Plus-Minus score more directly. The score is first normalized between 0 and 1 as $x'_t = (x_t + 80)/160$. Then they define $s(x') = (x'/m - 1)$ and $v(x) = (x'/m - 1)^2$, where $m$ is chosen to be $0.494$ in the numerical application. With \texttt{computeBaseLine} they selected 190 $\lambda$ values. 
For both 
e-detectors we computed the number of points in the convex hull for $n$ between $10$
and $n=640$ (the last match of season 2017-18) and obtained for Winning rate at most $16$ and for Plus-Minus at most $18$ points. 
Therefore out of the, respectively, $69$ and $190$ candidates suggested by \texttt{computeBaseLine} many are redundant as they correspond to the same changepoint.

\subsection{A parametric approach to the Plus-Minus score}\label{sec:data_definition}

Considering a change in the mean of a Gaussian model to fit the Plus-Minus score of the Cavaliers does not take into account the discrete nature of the data and possible dependencies. It is also not clear how one could get a valid estimation of the pre-changepoint parameter. Finally, as argued in \cite{shin2022detectors} calibrating the threshold to detect a changepoint is not trivial. 
As a robust parametric solution to these problems, we consider a change in the mean and variance of a univariate Gaussian signal with unknown pre-changepoint parameters.

For a change in mean and variance using our convex hull framework \eqref{eq:FP_equation} we need to keep track of the vertices on the hull of a 3-dimensional set of points
\begin{equation}\label{eq:hull_mean_and_variance}
\left(\tau, \sum_{t=1}^{\tau} x_t, \sum_{t=1}^{\tau} x_t^2\right)_{\tau < n}.
\end{equation}

\noindent Interestingly, this mean and variance model is invariant to shift and scaling. Based on that, we see that all the points on the convex hull of the Plus-Minus e-detector are on the hull of the change in mean and variance model. 

As the data is discrete, two (or more) consecutive scores may be exactly equal. For segments containing equal scores, the estimated variance would be $0$, and the likelihood would blow up to infinity.
To get around this problem we considered a minimum value for the variance, which we set to $1$ (chosen as substantially below the variance seen for plus-minus statistics for NBA teams).

In Figure \ref{fig:NBA_cumsum_fig} we represent the CUSUM statistics for the change in mean and change in mean and variance model. Although the scales are quite different, the two statistics have very similar trends and importantly both seem to increase around season 2013-14 (starting at match number 312 represented by a vertical dotted line). Although both models reconstruct the same changepoint, we will see in the next subsection that the mean-variance model shows less variability in terms of detection delay, achieving faster detections once a threshold is set to achieve a fixed run length.

\begin{figure*}[!t]%
    \centering

        \centering
        \includegraphics[width=0.35\linewidth]{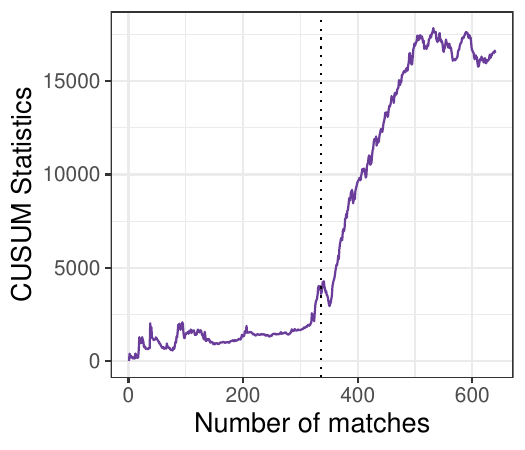}
        \centering
        \includegraphics[width=0.35\linewidth]{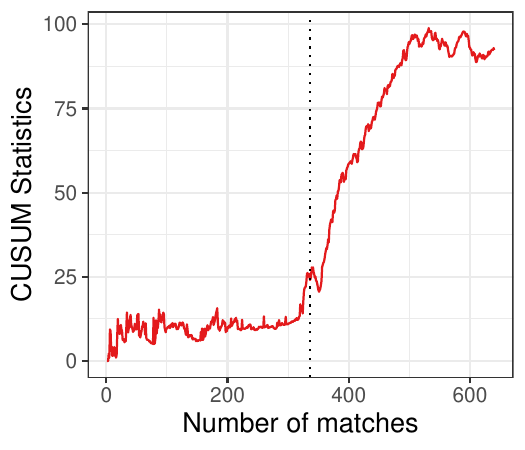}

    \caption{CUSUM statistics for the mean model (left) for the mean and variance model (right). The dotted line corresponds to the beginning of season 2014-15 and match number 312.}\label{fig:NBA_cumsum_fig}
\end{figure*}

Interestingly, at any time-point after 320 (8th match of season 2014-15) the maximum likelihood changepoint is always between matches 279 and 280 (5th and 7th of February 2014)  for both models. 
This is well before the beginning of season 2014-15 and the comeback of Lebron James. However, this matches the arrival of David Griffin as general manager on the 6th of February \cite[]{wiki_Griffin} .



\begin{figure*}[t]
\begin{center}
\includegraphics[width= 0.75\linewidth]{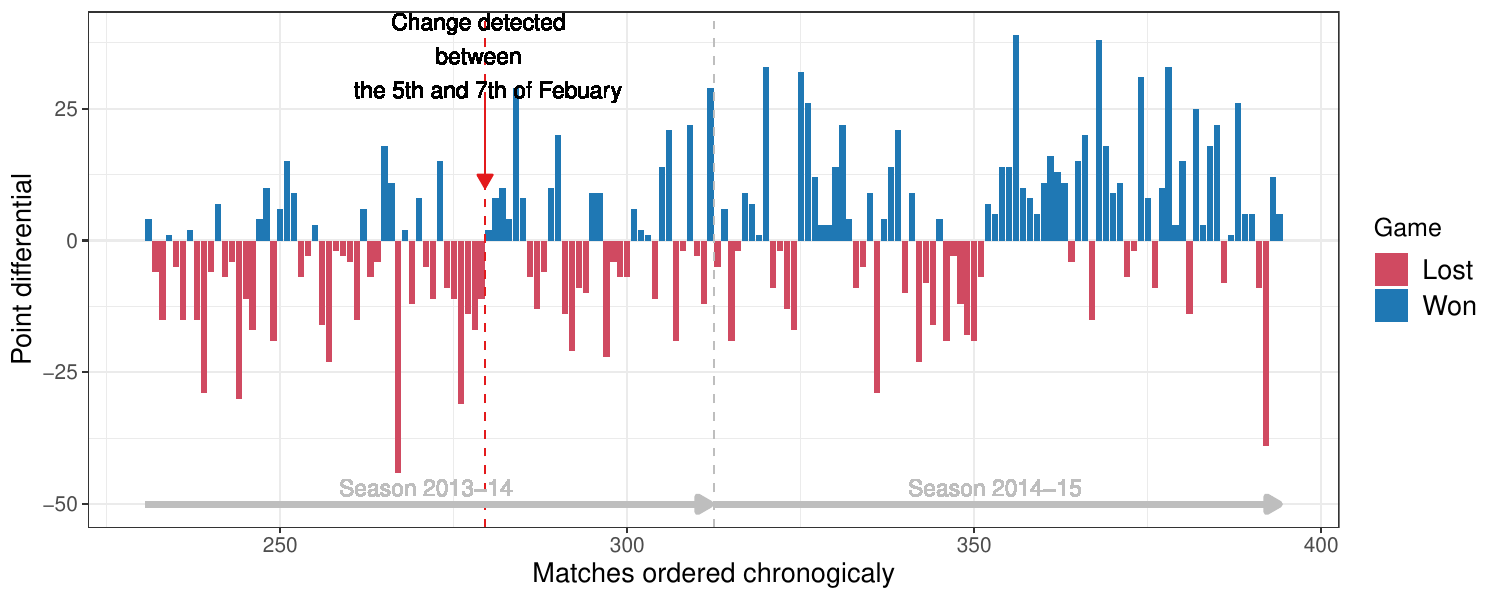}
\caption{
Plus-Minus score of the Cavaliers from seasons 2013-14 to 2014-15 as a bar plot. Losses are in dark red and wins are in dark blue. The ends of seasons are represented with vertical dashed lines (grey) and seasons are shown on the bottom with grey arrows.
}\label{fig:NBA_2013_2014}
\end{center}
\end{figure*}

\subsection{Simulating NBA Plus-Minus scores without changepoint}

Based on Figure \ref{fig:NBA_cumsum_fig} it is clear that for a particular range of the threshold both the mean and mean and variance model with unknown pre-changepoint parameter can detect a changepoint after the beginning of season 2014-15. To calibrate this threshold one typically needs a simulation model. Below we consider different choices of simulation model to see how sensitive the threshold is to this choice for both models. In all cases we will set the CUSUM threshold such that under our null simulation over 1000 matches we would detect a change only in 5\% of the cases.


The simulation models we considered are: (i) the difference of two independent Poisson random variables; (ii) the difference of two negative binomial random variables; (iii) a uniform distribution on $\{-80,\ldots,80\}$ \cite[as suggested in][]{shin2022detectors}; (iv) an independent resampling of past games. For (i) and (ii) we consider a range of parameters, with the mean of the random variables set to be 95, 105 or 115; and with the negative-binomial scale parameter set to 1 or 2. For (iv) we considered either sampling past matches of the Cavaliers or of NBA teams in general, from seasons 1999/2000 to 2009/10. Finally we also considered version of the scenarios in (i) and (ii) where we introduced temporal dependencies by adding to the mean a time-varying function either simulated from an AR(1) model with auto-correlation of 0.6 or using a sin function.

\begin{figure*}[t]%
        \centering
        \includegraphics[width=0.35\linewidth]{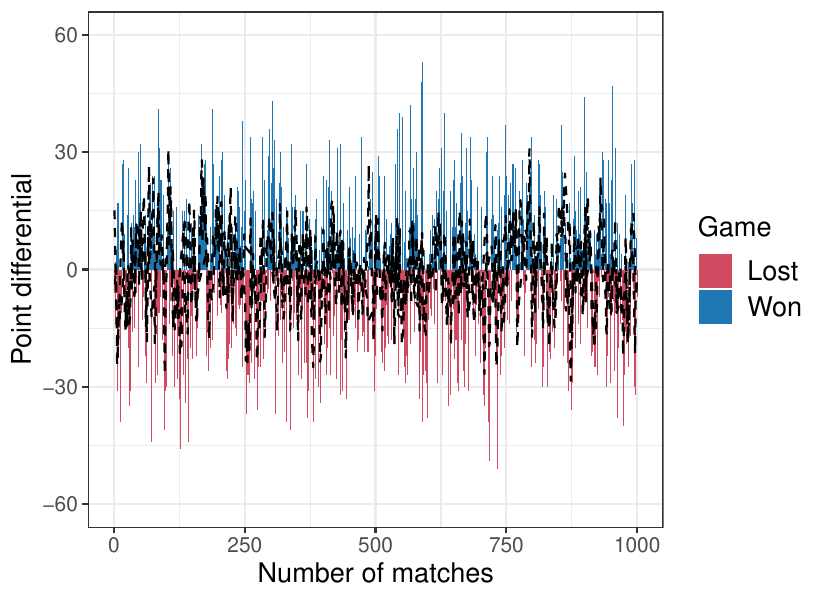}
        \includegraphics[width=0.35\linewidth]{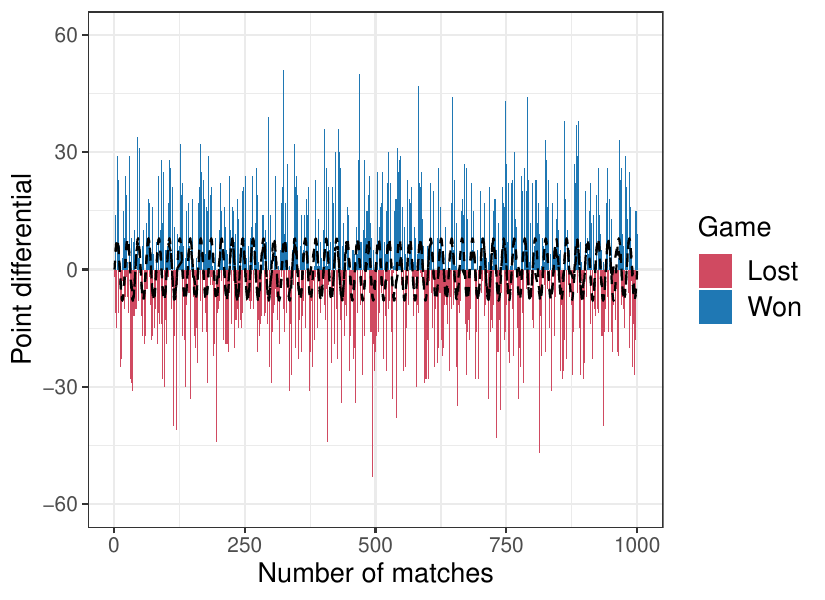}

    \caption{Two Plus-Minus score simulation examples based on the Negative Binomial model with a mean of $115$ and scale parameter of $50$ using two dependency scenarios: an AR(1) model with AR parameter $0.6$ (left) and using sine waves with a period of 20 matches (right).}\label{fig:example_sim}
\end{figure*}

Figure \ref{fig:threshold_and_detectime} shows the various thresholds we get for our two models, and the corresponding detection times. 
The threshold is much more variable for the change in mean model and this translates to a much more variable detection time. To be specific for the mean model, the change is detected at the earliest at match 330 (the 5th of December 2014), but in $21.7\%$ of the cases the threshold is so high that the changepoint is never detected (reported in the histogram as a detection at $1000$). For the mean and variance model, the change is always detected at the earliest at match 339 (the 23rd of December 2014) and at the latest at match 368 (the 20th of February 2015). In fact in $94.2\%$ of the simulation models the change is detected before with the mean and variance model.

\begin{figure*}[t]%
    \centering
        \includegraphics[width=0.4\linewidth]{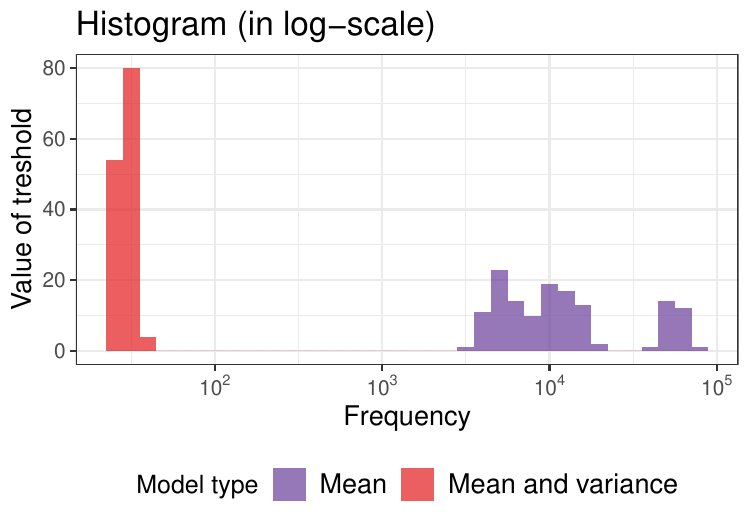}
        \includegraphics[width=0.4\linewidth]{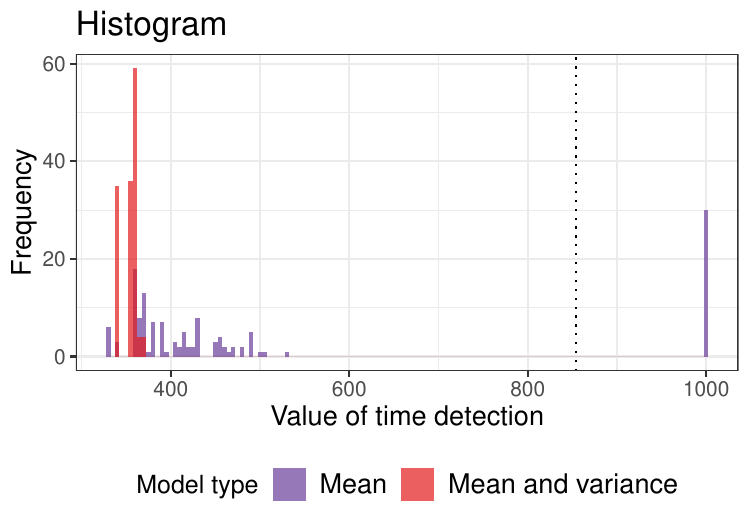}

    \caption{Threshold (left) and detection time (right) obtained for the mean model and the mean and variance model.}\label{fig:threshold_and_detectime}
\end{figure*}

\section{Discussion}
\label{sec_Discussion}

In this paper, we establish a connection between the detection of a single changepoint in $p$-dimensional data streams and a convex hull problem in dimension $p+1$. 
Based on this, we derive a simple yet efficient algorithm, MdFOCuS. This algorithm finds a set of time-points that contain the points associated with all vertices of a convex hull, and that we show contains all possible times that can maximise the likelihood-ratio statistic for a change. The method is general in that in can be used for a wide-range of models from the exponential family, including multivariate data where different streams are modelled by different distributions. It can also deal with the pre-change parameter being known or unknown, and covers tests where we allow only a subset of components of a multivariate data stream change. For all these varieties, the only difference is the part of the algorithm which maximises the function associated with each time-point we have stored.  We also showed that it can be used to speed up the calculation of e-detectors.
MdFOCuS is efficient for $p\leq 5$. For larger $p$ we provide an approximation storing only  $\mathcal{O}(p\log^{\tilde{p}}(n))$ vertices of the convex hull where  $\tilde{p} \leq p.$ 

Improvements to the computational complexity of the procedure could be made to allow the procedure to scale to higher dimensions and longer sequences. For example, a sequential update of the boundary of the convex hull of our points, to improve on the complexity of \texttt{QuickHull}. 
Recently, \cite{ward2023constantperiteration} offered an approach to speed-up the maximisation step. This aims to recycle calculations to give a bound on the test statistic that is much more efficient to calculate. Whilst this idea can directly be applied in the multivariate settings we consider, the bounds are too loose to give any noticeable speed-up. A less conservative bound could be the object of future study. 

More generally, the ideas in this paper which show a connection between changepoint detection and convex hull (or half-space intersection) problems sheds light on the geometric nature of the changepoint detection problem and is a promising line for future research.

\bibliographystyle{abbrvnat}
\bibliography{reference}

\newpage
\begin{appendices}

\section*{Supplementary Material for `Multivariate Maximum Likelihood Changepoint Detection via Half-Space Intersection and Convex Hull'}

\section{Overview of Code}

The implementation of the MdFOCuS algorithm, the Plus-Minus score data, the computer code for the simulation study, and the applications discussed in this article can be found in the following GitHub repositories:
\begin{itemize}
    \item \url{https://github.com/guillemr/Focused/tree/main} is a link to the \texttt{R} implementation of the MdFOCuS 
    and code for simulation studies.
    \item \url{https://github.com/lpishchagina/focus} is a link to the \texttt{R}/\texttt{C++} implementation of the MdFOCuS algorithm.
    \item \url{https://github.com/grosed/changepoint_online} is a link to the Python implementation of the MdFOCuS algorithm.
    \item \url{https://github.com/lpishchagina/dyadicMdFOCuS} is a link to the \texttt{R}/\texttt{C++} implementation of the MdFOCuS algorithm with dyadic update. It is a different repository from the MdFOCuS algorithm because the dyadic version is interesting mostly from a theoretical perspective to bound the expected complexity but empirically much slower than MdFOCuS;
    \item \url{https://github.com/abresler/nbastatR} is the \texttt{R} package \cite{nbastatr} we used to recover the Plus-Minus score data.
\end{itemize}

\section{Proofs}

\subsection{Proof of Theorem \ref{th:equality}}\label{app:proof_Th2}
Consider that $\tau$ is in $\mathcal{G}_{\mathcal{T}}$. By definition, there exists $(\lambda_0,\mu_0) \in Im(A)\times\mathcal{D}_A =  \mathbb{R}^+ \times \RR^p$ such that for all $\tau' \neq \tau$ in $\mathcal{T}$  we have: $g_\tau(\lambda_0, \mu_0) > g_{\tau'}(\lambda_0, \mu_0)$. Multiplying by $\alpha >0$ we get by linearity
\begin{equation*}
g_\tau(\alpha \lambda_0, \alpha \mu_0) = \alpha  g_\tau(\lambda_0,  \mu_0) > \alpha g_{\tau'}( \lambda_0,  \mu_0) = g_{\tau'}(\alpha \lambda_0, \alpha \mu_0).
\end{equation*}
Therefore, the optimally of index $\tau$ is true on the half-line $\{\alpha(\lambda_0,\mu_0), \alpha > 0\}$. Note that we can exclude the cases $\|\mu_0\| = 0$ or $\lambda_0 = 0$ as, by a continuity argument, the optimal set for index $\tau$ in  $\mathcal{G}_{\mathcal{T}}$ (if non empty) is an open set. 

We now search for $\alpha_0 \in (0, +\infty)$ such that for any index $\tau'' \ \in \ \mathcal{T}$, $g_{\tau''}(\alpha_0 \lambda_0, \alpha_0 \mu_0) = f_{\tau''}(\alpha_0 \mu_0)$. If such $\alpha_0$ exists, we have
\begin{equation*}
a_{\tau''} \alpha_0\lambda_0 - 2 \langle \alpha_0\mu_0, b_{\tau''} \rangle = a_{\tau''} \|\alpha_0\mu_0\|^q - 2 \langle \alpha_0\mu_0, b_{\tau''} \rangle\,,
\end{equation*}
leading to solution $\alpha_0 = \frac{\lambda_0^{\frac{1}{q-1}}}{\|\mu_0\|\frac{q}{q-1}}$. $\alpha_0$ is well defined as we excluded $\|\mu_0\| = 0$ and $\lambda_0 = 0$. Further, note that the result is still true for $a_\tau = 0$ or $a_{\tau'} = 0$. 

\subsection{Proof of Theorem \ref{th:convexhull}}\label{app:proof_Th4}
Consider any $\tau$ in $\mathcal{G}_{\mathcal{T}}$. Using Theorem \ref{th:halfspace_intersection} we have that $h_\tau(\kappa, \lambda, \mu) =0$ and for all $\tau' \neq \tau$ $h_{\tau'}(\kappa, \lambda, \mu) < 0$. It can not be that $\lambda=0$ and $\mu=0$ as in that case $h_\tau(\kappa, 0, 0)=0$ and then $\kappa = 0$, and $h_{\tau'}(\kappa, 0, 0)=0$ contradicting $h_{\tau'}(\kappa, \lambda, \mu) < 0$.

Now we have the following equivalence:
\begin{enumerate}
\item $\exists \kappa, \lambda, \mu, \ \text{with} \ (\lambda, \mu) \neq 0$ such that 
$h_\tau(\kappa, \lambda, \mu) =0$ and $\forall \tau' \neq \tau\quad
h_{\tau'}(\kappa, \lambda, \mu) <0$.
\item $\exists \kappa, \lambda, \mu, \ \text{with} \ (\lambda, \mu) \neq 0$ such that
$\left\langle (a_\tau, b_\tau), \left(\lambda, -2\mu\right) \right\rangle = \kappa$ and $\forall \tau' \neq \tau \quad  \left\langle (a_{\tau'}, b_{\tau'}), \left(\lambda, -2\mu\right) \right\rangle < \kappa$.
\item $(a_\tau, b_\tau)$ is on the convex hull $\{ (a_\tau, b_\tau) \}_{\tau \in \mathcal{T}}$. 
\end{enumerate}

To go from point 2 to point 3 we note that a point is a part of the convex hull if there is a hyperplane (defined by a non-zero vector and a constant) going through this point and such that all other points are strictly on one side of the hyperplane.

\subsection{Proof of Theorem \ref{th:changepoint_andhull}}\label{app:proof_Th5}
First consider the case $s=p$.
Starting from  \eqref{eq:loglikelihood_eta}, we get to \eqref{eq:likelihood_change_variables} as described in Section \ref{sec_1_2_ExpFamily} and we recover our functional pruning problem of equation \eqref{eq:Set_Of_FPLin_Minimizers}  as described in Section \ref{sec2_1_General_formulation}, taking $\mathcal{T} = \{1, \ldots n-1\}$, $a_\tau = \tau$ and $b_\tau=\sum_{t=1}^\tau x_t$. Thus applying Theorem \ref{th:convexhull}, and then Theorem \ref{th:inclusion} we get the desired result for $\hat{\tau}(\eta_1)$.

The index maximising the likelihood and its value do depend on $\eta_1$, however, the set of points we need to build the convex hull does not depend on the value of $\eta_1$. 
Considering $\eta_1 = \arg\max_{\eta_1} \left(\max_{\eta_2}\ell_{\hat{\tau}(.),n}(\eta_1, \eta_2) \right)$ we have
$\hat{\tau}(.) = \hat{\tau}(\eta_1)$ and we recover the desired result for $\hat{\tau}(.)$.

Let us now consider the $s$ first coordinates of $x_t$. We call $x'_t$ the vector of $\mathbb{R}^{s}$ containing the first $s$ coordinates of $x_t$ and define $P'(\tau) = \left(\tau, \sum_{t=1}^{\tau} x'_t\right)$. Now, using the previous result, the change maximizing the likelihood restricted to the first $s$ coordinates is on the hull of $P'(\tau)$ and therefore also on the hull of $P(\tau)$. The same argument applies to any other choice of $s$ coordinates, and thus will apply to the set that maximises the sparse log-likelihood.

\subsection{Proof of Theorem \ref{th:boundnumbervertexes}}\label{app:proof_thm6}

The proof comes from applying results of \cite{kabluchko2017convex}. To do so we need to prove that the random walk $\{P(\tau) \}_{\tau \in \{1, \ldots, n-1\}} = \{ (\tau, \sum_{t=1}^\tau x_t) \}_{\tau \in \{1, \ldots, n-1\}}$ satisfies two properties, $(Ex)$ and $(GP)$ defined below.
\begin{itemize}
\item ($Ex$) For any permutation $\sigma$ of the set $\{1,\dots, n-1\}$, we have the distributional equality 
\begin{equation*}
\{ (1, x_\tau) \}_{\tau \in \{1, \ldots, n-1\}} \stackrel{D}{=} \{ (1, x_{\tau'}) \}_{\tau' \in \sigma}    
\end{equation*}
\item($GP$) Any $p+1$ vectors of the form $(\tau, \sum_{t=1}^\tau x_t)$ are almost surely linearly independent.
\end{itemize}
We show this in the following lemma.
\begin{lemma}
\label{lem:RW_conditions}
Assuming that all $x_\tau^d$ are independent and identically distributed with a continuous distribution, the properties $(Ex)$ and $(GP)$ of \cite{kabluchko2017convex} are satisfied. 
\end{lemma}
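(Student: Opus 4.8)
The plan is to check the two conditions $(Ex)$ and $(GP)$ separately; the first is essentially immediate from the i.i.d.\ assumption, so the substance lies in $(GP)$.

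For $(Ex)$, I would note that the increments of the walk are $P(\tau)-P(\tau-1) = (1, x_\tau)$ (with the convention $P(0)=0$). The first coordinate is the deterministic constant $1$, while $x_1,\dots,x_{n-1}$ are i.i.d.\ vectors in $\RR^p$ (each coordinate $x_\tau^d$ being i.i.d.\ and continuous), and are therefore exchangeable. Appending a constant coordinate preserves exchangeability, so for any permutation $\sigma$ the joint law of $((1,x_{\sigma(1)}),\dots)$ equals that of $((1,x_1),\dots)$, which is exactly $(Ex)$.

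For $(GP)$, fix distinct indices $\tau_1 < \cdots < \tau_{p+1}$ in $\{1,\dots,n-1\}$ and let $M$ be the $(p+1)\times(p+1)$ matrix whose $i$-th row is $P(\tau_i)=(\tau_i,\sum_{t=1}^{\tau_i} x_t)$. Linear independence is equivalent to $\det M \neq 0$, and $\det M$ is a polynomial in the scalar variables $\{x_t^d\}$. First I would show this polynomial is not identically zero: subtracting consecutive rows (which leaves the determinant unchanged) turns row $i\geq 2$ into $(\tau_i-\tau_{i-1},\Delta_i)$ with $\Delta_i=\sum_{t=\tau_{i-1}+1}^{\tau_i} x_t$, and row $1$ into $(\tau_1,\Delta_1)$ with $\Delta_1=\sum_{t=1}^{\tau_1}x_t$. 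Since $\tau_1 < \cdots < \tau_{p+1}$, the index blocks defining $\Delta_1,\dots,\Delta_{p+1}$ are disjoint and nonempty, so one can realize any prescribed values of the $\Delta_i \in \RR^p$ by a suitable choice of the $x_t$; taking $\Delta_1 = 0$ and $\Delta_i = e_{i-1}$ for $i\geq 2$ makes the reduced matrix lower triangular with diagonal $(\tau_1,1,\dots,1)$, hence with determinant $\tau_1 \neq 0$. Therefore $\det M$ is a nonzero polynomial.

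It then remains to pass from ``not identically zero'' to ``zero with probability one''. I would invoke the standard lemma that a polynomial which does not vanish identically vanishes only on a null set for independent non-atomic coordinates, proved by induction on the number of variables together with Fubini: expand the polynomial in one variable with polynomial coefficients in the remaining variables, apply the inductive hypothesis to a nonzero such coefficient, and use that a one-variable polynomial has finitely many roots, which a continuous law avoids almost surely. A union bound over the finitely many tuples $\tau_1<\cdots<\tau_{p+1}$ in $\{1,\dots,n-1\}$ then yields $(GP)$. I expect this last passage to be the main obstacle --- not because it is deep, but because ``continuous distribution'' only guarantees non-atomicity rather than absolute continuity, so one cannot simply cite that a proper algebraic variety is Lebesgue-null; the inductive Fubini argument exploiting independence is precisely what makes non-atomicity sufficient.
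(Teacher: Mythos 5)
Your proof is correct, and its skeleton matches the paper's: both treat $(Ex)$ as an immediate consequence of the i.i.d.\ assumption on the increments $(1,x_\tau)$, and both reduce $(GP)$ to the increments over the disjoint blocks $(\tau_{i-1},\tau_i]$ (the paper via the invertible map $S_d=\sum_{k\le d}M_k$, you via determinant-preserving row subtractions). Where you genuinely diverge is the final probabilistic step. The paper argues geometrically: in the transposed increment matrix, the $p$ columns coming from the data coordinates are independent random vectors with continuous entries, hence almost surely span a $p$-dimensional subspace of $\RR^{p+1}$, and the fixed integer column (the block lengths) lies in that random subspace with probability zero. You argue algebraically: $\det M$ is a polynomial in the variables $x_t^d$, not identically zero (your triangular evaluation with $\Delta_1=0$, $\Delta_i=e_{i-1}$, giving determinant $\tau_1\neq 0$), and a nonzero polynomial in independent non-atomic coordinates vanishes with probability zero, by induction plus Fubini. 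The two routes buy different things. The paper's version is shorter, but its two probabilistic assertions (almost-sure linear independence of the random columns, and almost-sure avoidance of a random hyperplane by a fixed point) are stated without proof and in fact require precisely the condition-on-all-but-one-coordinate induction that you spell out. Your version isolates that induction into a single reusable lemma and is more careful about the hypothesis: the statement assumes only a continuous (non-atomic) distribution, so one cannot simply invoke that a proper algebraic variety is Lebesgue-null, and the paper's phrase ``drawn independently with a continuous density'' quietly assumes more than the lemma does; your Fubini argument is exactly what closes that gap.
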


\begin{proof}
In the case where all $x_\tau^d$ are independent and identically distributed, the property $(Ex)$ holds. To show that property $(GP)$ holds, consider time $n \geq p+2$ and choose any indices $\tau_1, \dots, \tau_{p+1}$ such that $0 < \tau_1 < \cdots < \tau_{p+1} < n$. We define the following variables for $d=1,\ldots,p+1$:
\begin{equation*}
 S_{d} = \left(\tau_d, \sum_{t=1}^{\tau_d} x_t\right)\in \RR^{p+1}\,,\quad M_d = S_{d} - S_{d-1} \,, \quad \text{ with }S_0 = 0\,.   
\end{equation*}
It can be observed that $S_{d} = \sum_{k \leq d} M_{k}$ for any $d = 1,\dots, p+1$. Therefore, the linear independence of $S_1, \dots, S_{p+1}$ is equivalent to the linear independence of $M_1, \dots, M_{p+1}$. We denote the matrix of size $(p+1) \times (p+1)$ where its $d$-th column is  $M_{d}$ as $M$. Consider its transpose matrix $M^T$. Its first column-vector is filled with integers between $1$ and $n-1$. Linear algebra shows that proving linear independence for column vectors in $M$ or in $M^T$ is equivalent. The last $p$ column vectors in $M^T$ are drawn independently with a continuous density and thus are almost surely linearly independent, spanning a hyperplane of dimension $p$. The probability that a point in $\RR^{p+1}$, the first column vector of $M^T$, is in this span is $0$. In conclusion, for any given $n$, there are finitely many matrices $M$ to consider, and their probabilities satisfy $(GP)$, ensuring that property $(GP)$ holds.
\end{proof}

Based on Lemma \ref{lem:RW_conditions} properties $(Ex)$ and $(GP)$ hold and we can apply formula (2) from \cite{kabluchko2017convex} on the expected values of $U_n^ p$ and $V_n^p$. This completes the proof of Theorem \ref{th:boundnumbervertexes}.

\begin{remark}
Theorem \ref{th:boundnumbervertexes} gives the expectation for i.i.d. observations. One might wonder what happens if there is a changepoint in the distribution at $i < n$. Note that the points on the hull $\{P(\tau)\}_{\tau \in \{1, \ldots, n-1\}}$ are necessarily on the hull of $\{P(\tau)\}_{\tau \in \{1, \ldots, i\}}$ or on the hull of $\{P(\tau)\}_{\tau \in \{i+1, \ldots, n-1\}}$. Thus applying the theorem on all points from $1$ to $i$ and from $i+1$ to $n-1$ we expect at most $\mathbb{E}(V_{i+1}^p) + \mathbb{E}(V_{n-i}^p) \leq 2 \mathbb{E}(V_{n}^p)$ points on the hull (as the expectation of $V_i^p$ is increasing with $i$). Thus if there is a changepoint (and assuming we have not stopped because the GLR statistics is lower than the specified threshold) the expected number of candidate changepoints is still in $\mathcal{O}(\log(n)^p)$.
\end{remark}

\subsection{Empirical validation of the bound of Theorem \ref{th:boundnumbervertexes}}\label{append4a_GaussianMd}
We empirically tested the validity of Theorem \ref{th:boundnumbervertexes} for $p\leq 5$. We use the Gaussian distribution as an example of a distribution with continuous density and independent dimensions.
To be specific, for $p$ from $1$ to $5$, we simulated time series $x_{1:n}$, with $x_t \sim \mathcal N_p(0, I_p)$ i.i.d. We varied the time series length from $2^{10}+1 (\approx 10^3)$ to $2^{23}+1(\approx8\times10^6)$ and simulated $100$ data sets for each length. We used the \texttt{QuickHull} algorithm \cite{Barber1996} implemented in the R package \texttt{geometry} \cite{geometry_R_package} to evaluate the number of faces and vertices of the convex hull of $\{P((\tau)\}_{\tau \in \{1, \ldots, n-1\}}$. It can be seen in Figure \ref{Figure_GM_CH_estimations} that the observed number of faces and vertices are close to their theoretical expectations (from Theorem \ref{th:boundnumbervertexes}). Additional details on the calculations of the Stirling numbers are provided in Supplementary Material~\ref{append3_Stirling}.

\begin{figure}[!t]
     \centering
     \includegraphics[width=0.95\linewidth]{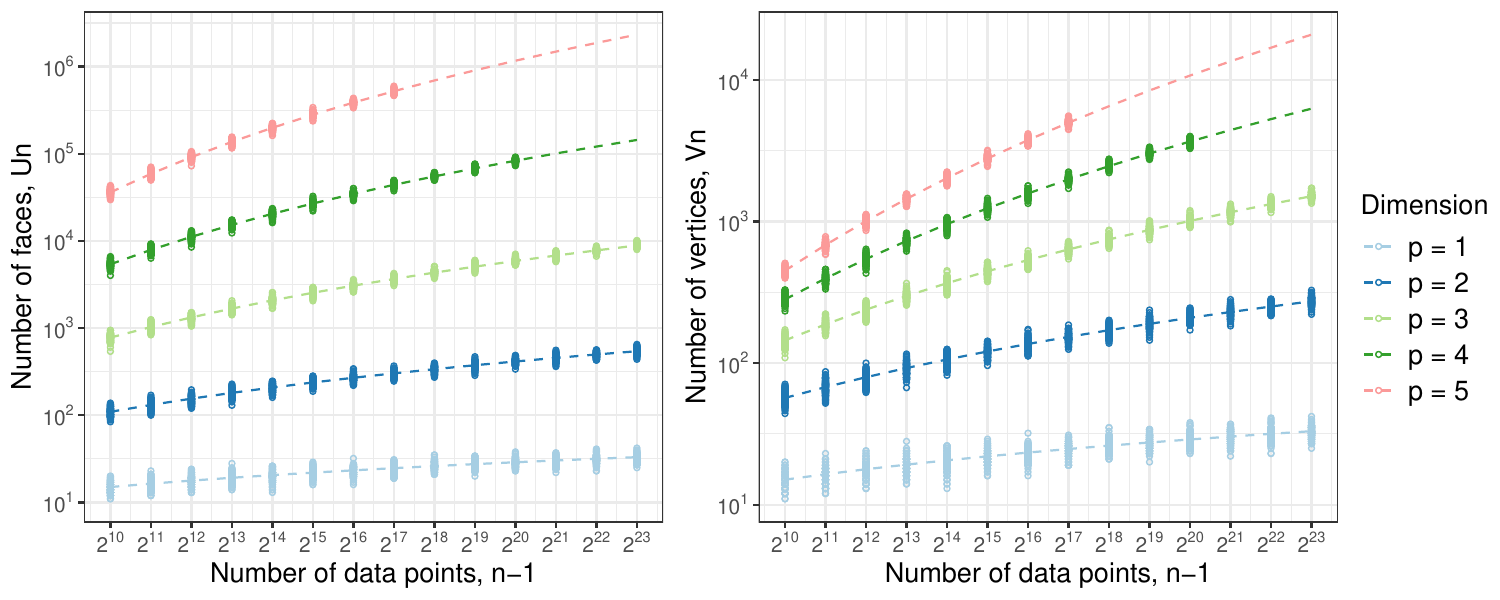}
     \caption{The number of faces (left) and vertices (right) of the convex hull of $\{P(\tau)\}_{\tau \in \{1\dots,n-1\}}$ for dimension $1\le p \le 5$. We simulated $100$ i.i.d. Gaussian data $\mathcal N_p(0, I_p)$ for $n$ from $(2^{10}+1)$ to $(2^{23}+1)$. Dashed lines correspond to the expected number of faces (left) and vertices (right) presented in Theorem \ref{th:boundnumbervertexes}. We consider a maximum running time of $40$ minutes for the \texttt{QuickHull} algorithm. This is why some results for $p=4$ and $5$ for large number of observations are missing.}
\label{Figure_GM_CH_estimations}
   \end{figure}

The proof of Theorem~\ref{th:boundnumbervertexes} depends on the fact that the distribution is continuous. We thus empirically tested the robustness of the expectations to this assumption by considering time series drawn with a discrete i.i.d. Poisson distribution. The results closely resemble the Gaussian case and details can be found in Section~\ref{append5_PoissonMd}.

\section{Exponential Family Models}

\begin{table}[t]
  \centering
  \caption{\label{tab1_ExpFamily}Modelling a change in the mean of a $p$-variate data with independent Gaussian (known variance), Poisson, Binomial, Exponential, and Pareto type-I errors. Examples of  distributions from the natural exponential family with a parameter $\theta$ in $\mathbb R^{p}$ and the corresponding forms of the functions $s$, $r$, $A'$, and $B'$. Without loss of generality, the variance of the Gaussian model is assumed to be $1$ for all dimensions. The number of trials is assumed to be $m$ for all dimensions in the Binomial model. For the Pareto type-I model, the minimum value for each dimension, denoted $y_m$, is assumed to be known.}
  \label{tab:expfam}
  \resizebox{\linewidth}{!}{%
    \begin{tabular}{lcccc}
      \toprule
      Distribution & $\eta := r(\theta)$ & $x := s(y)$ & $A'(\eta)$ & $B'(x)$ \\
      \midrule
      Gaussian (changes in mean) & $\theta$ & $y$ & $\|\eta\|^2$ & $\|x\|^2 + \log (2\pi)^p$ \\
      Poisson                    & $\log\theta$ & $y$ & $2\left\langle e^{\eta}, \mathbf{1}_p\right\rangle$ & $2\left\langle \log x!, \mathbf{1}_p\right\rangle$ \\
      Binomial                   & $\log\frac{\theta}{1-\theta}$ & $y$ & $2m\langle \log(1+e^\eta), \mathbf{1}_p\rangle$ & $-2\left\langle \log \begin{pmatrix} m \\ x \end{pmatrix}, \mathbf{1}_p\right\rangle$ \\
      Exponential                & $-\theta$ & $y$ & $-2\langle \log (-\eta), \mathbf{1}_p\rangle$ & 0 \\
      Pareto type-I              & $-\theta-1$ & $\log(y)$ & $-2\langle \log (-1-\eta), \mathbf{1}_p\rangle + 2\langle 1+\eta, \log(y_m)\rangle$ & 0 \\
      \bottomrule
    \end{tabular}%
  }
\end{table}

Examples of models and change-types covered by our setting are given in Table \ref{tab1_ExpFamily}. Our algorithms can also deal with data where different components of a multivariate stream are from different models.

Our setting also covers the change in  mean ($\theta_1$) and variance ($\theta_2$) (so $p''=2$) of a Gaussian signal in $\RR$ (so $p'=1$). In this slightly more complex case we have $p' \neq p=2$  and we can define $s$, $r$, $A'$ and $B'$ as follows 
\begin{align*}
\theta  =  (\theta_1, \theta_2),~
    \eta   =  \left(\frac{\theta_1}{\theta_2}, -\frac{1}{2\theta_2}\right), ~
    s(y)  =  (y, y^2),~
     A'(\eta)   =  \frac{\theta_1^2}{\theta_2} - \log(\theta_2), ~
    B'(x)   =  \log(2\pi). 
\end{align*}

\section{Index Set Equality}\label{append1_index_Set}
The equality $\mathcal{F}_{\mathcal{T}} = \mathcal{G}_{\mathcal{T}}$ can be proved with weaker conditions on $A$ than those of Theorem~\ref{th:equality}. This equality has a simple geometric interpretation: to be true, any half-line in $Im(A)\times\mathcal{D}_A$ has to intersect the manifold defined by $\{(A(\mu),\mu), \mu \in \mathcal{D}_A\}$.
\begin{theorem}\label{th:equality_generalCase}
If $A$ is continuous and defined on a cone with $Im(A) = \mathbb{R}^+$ such that $A(0) = 0$, $\nabla A(0) = 0$  and $\lim_{\|\mu\| \to \infty}\frac{A(\mu)}{\|\mu\|} = +\infty$  we have $\mathcal{F}_{\mathcal{T}} = \mathcal{G}_{\mathcal{T}}$.
\end{theorem}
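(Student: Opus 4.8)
The plan is to prove the reverse inclusion $\mathcal{G}_{\mathcal{T}} \subseteq \mathcal{F}_{\mathcal{T}}$, since Theorem~\ref{th:inclusion} already supplies $\mathcal{F}_{\mathcal{T}} \subseteq \mathcal{G}_{\mathcal{T}}$. The argument parallels the proof of Theorem~\ref{th:equality}, but the explicit formula for $\alpha_0$ there is replaced by an intermediate-value argument tailored to the weaker hypotheses. Fix $\tau \in \mathcal{G}_{\mathcal{T}}$; by definition there is a point $(\lambda_0, \mu_0) \in Im(A) \times \mathcal{D}_A$ with $g_\tau(\lambda_0, \mu_0) > g_{\tau'}(\lambda_0, \mu_0)$ for every $\tau' \neq \tau$. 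Because each $g_\tau$ is linear in $(\lambda, \mu)$, scaling by any $\alpha > 0$ preserves all these strict inequalities, so $\tau$ is strictly optimal along the entire half-line $\{\alpha(\lambda_0, \mu_0) : \alpha > 0\}$.

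First I would dispose of the degenerate cases $\lambda_0 = 0$ or $\mu_0 = 0$. The strict-optimality region $R_\tau = \{(\lambda, \mu) : g_\tau(\lambda,\mu) > g_{\tau'}(\lambda,\mu) \ \forall \tau' \neq \tau\}$ is a finite intersection of open half-spaces, hence open. Since $\lambda_0 \geq 0$ and $R_\tau$ is open, I can increase $\lambda$ slightly to obtain a nearby point of $R_\tau$ with $\lambda > 0$, and (using that $\mathcal{D}_A$ is a cone with nonempty interior) perturb $\mu$ within $\mathcal{D}_A$ to also secure $\mu \neq 0$. Thus I may assume $\lambda_0 > 0$ and $\mu_0 \neq 0$ from now on.

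The heart of the proof is to show that this half-line meets the manifold $\{(A(\mu), \mu) : \mu \in \mathcal{D}_A\}$, i.e. to produce $\alpha_0 > 0$ with $A(\alpha_0 \mu_0) = \alpha_0 \lambda_0$. I would study $\phi(\alpha) = A(\alpha \mu_0) - \alpha \lambda_0$ on $(0, \infty)$, which is continuous because $A$ is. As $\alpha \to 0^+$, the condition $\nabla A(0) = 0$ together with $A(0) = 0$ forces $A(\alpha \mu_0) = o(\alpha)$ (most cleanly via the directional derivative of the convex function $A$ at its minimiser), so $\phi(\alpha) \sim -\alpha \lambda_0 < 0$ for small $\alpha$. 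As $\alpha \to \infty$, writing $\nu = \alpha \mu_0$ gives $\phi(\alpha)/\alpha = \|\mu_0\| \, A(\nu)/\|\nu\| - \lambda_0 \to +\infty$ by the superlinear growth hypothesis, hence $\phi(\alpha) \to +\infty$. The intermediate value theorem then yields some $\alpha_0 \in (0, \infty)$ with $\phi(\alpha_0) = 0$; this is precisely the geometric assertion that every such half-line crosses the graph of $A$.

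Finally I would transfer the strict optimality down onto the manifold. Set $\mu^* = \alpha_0 \mu_0$, which lies in $\mathcal{D}_A$ because the domain is a cone. Since $A(\mu^*) = \alpha_0 \lambda_0$, for every index $\tau''$ one has $f_{\tau''}(\mu^*) = a_{\tau''} A(\mu^*) - 2\langle b_{\tau''}, \mu^* \rangle = g_{\tau''}(\alpha_0 \lambda_0, \alpha_0 \mu_0)$. As $(\alpha_0 \lambda_0, \alpha_0 \mu_0)$ lies on the half-line where $\tau$ strictly dominates, the strict inequalities pass from the $g$'s to the $f$'s, giving $f_\tau(\mu^*) > f_{\tau'}(\mu^*)$ for all $\tau' \neq \tau$, and hence $\tau \in \mathcal{F}_{\mathcal{T}}$. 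I expect the main obstacle to be the careful handling of the boundary behaviour of $\phi$ and of the degenerate cases: making rigorous that $\nabla A(0) = 0$ yields $A(\alpha\mu_0) = o(\alpha)$ near the origin and that the superlinear growth controls the limit at infinity, all while ensuring every perturbed point remains inside $Im(A) \times \mathcal{D}_A$.
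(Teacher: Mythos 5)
Your proposal is correct and follows essentially the same route as the paper's own proof: both establish strict optimality of $\tau$ along the half-line $\{\alpha(\lambda_0,\mu_0),\ \alpha>0\}$, dispose of the degenerate cases $\lambda_0=0$ or $\mu_0=0$ by openness of the strict-optimality region, and then show via the local expansion $A(\alpha\mu_0)=o(\alpha)$ (from $A(0)=0$, $\nabla A(0)=0$) together with the superlinear growth at infinity that the half-line must cross the manifold $\{(A(\mu),\mu),\ \mu\in\mathcal{D}_A\}$, transferring the strict inequalities from the $g_{\tau}$'s to the $f_{\tau}$'s at the crossing point. Your explicit use of the intermediate value theorem on $\phi(\alpha)=A(\alpha\mu_0)-\alpha\lambda_0$ is just a cleaner phrasing of the paper's continuity argument on the ratio $A(\alpha\mu_0)/\|\alpha\mu_0\|$.
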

\begin{proof}
Consider $\tau$ in $\mathcal{G}_{\mathcal{T}}$. For all $\tau' \neq \tau$, the inverse image of the open set $(0,+\infty)$ by linear functions $g_\tau - g_{\tau'}$ is a non-empty intersection of half-spaces. Moreover, there exists $(\lambda_0,\mu_0) \in Im(A)\times\mathcal{D}_A$ such that for all $\tau' \neq \tau$ in $\mathcal{T}$  we have: $g_\tau(\lambda_0, \mu_0) > g_{\tau'}(\lambda_0, \mu_0)$.  For $\alpha >0$ we get by linearity
\begin{equation*}
g_\tau(\alpha \lambda_0, \alpha \mu_0) = \alpha  g_\tau(\lambda_0,  \mu_0) > \alpha g_{\tau'}( \lambda_0,  \mu_0) = g_{\tau'}(\alpha \lambda_0, \alpha \mu_0).
\end{equation*}
Therefore, as all points of the half-line $\{\alpha(\lambda_0,\mu_0), \alpha > 0\}$ belong to this set, it shows that index $\tau$ is optimal on an open polyhedral cone.

Suppose that for the considered $(\lambda_0,\mu_0)$,  there exists $\alpha_0 >0$ such that $\lambda_0 = \frac{A(\alpha_0 \mu_0)}{\alpha_0}$ then $\alpha_0\mu_0$ is a value in the cone $\mathcal{D}_A$ verifying all the constraints $f_\tau(\alpha_0\mu_0) > f_{\tau'}(\alpha_0\mu_0)$, so that $\tau$ also belongs to $\mathcal{F}_{\mathcal{T}}$. In geometric terms, this means that any half-line in the polyhedral cone associated with index $\tau$ intersects the manifold defined by $\{(A(\mu),\mu), \mu \in \mathcal{D}_A\}$.

It remains to prove that the proposed assumptions over $A$ are sufficient to find such an alpha proving the set equality. We consider the same half-line  $\{\alpha(\lambda_0,\mu_0), \alpha > 0\}$. By a linear approximation about $0$, we have $A(\epsilon) = A(0) + \langle \epsilon, \nabla A(0) \rangle + o(\|\epsilon\|) = o(\|\epsilon\|)$ and locally, we get $ A(\alpha\mu_0) =  o(\|\alpha\mu_0\|) < \alpha\lambda_0$ for small $\alpha$. Notice that as $Im(A) = \mathbb{R}^+$ and the cone is open, we have $\lambda_0 > 0$ and can choose $\|\mu_0\| > 0$. The half-line is therefore above the manifold ("$(\alpha\lambda_0,\alpha\mu_0) > ( A(\alpha\mu_0),\alpha\mu_0)$").

Using condition $\lim_{\alpha \to \infty}\frac{A(\alpha\mu_0)}{\|\alpha\mu_0\|} = +\infty$, continuity of $A$ and previous local result $\frac{A(\alpha\mu_0)}{\|\alpha\mu_0\|} = o(1)$ for small $\alpha$, it shows that there exists $\alpha_0$ such that $\frac{A(\alpha_0\mu_0)}{\|\alpha_0\mu_0\|} = \frac{\lambda_0}{\|\mu_0\|}$ (as $\frac{\lambda_0}{\|\mu_0\|} > 0$). Thus $(\alpha_0\lambda_0, \alpha_0\mu_0) = (A(\alpha_0\mu_0), \alpha_0\mu_0)$ which leads to $f_\tau(\alpha_0\mu_0) > f_{\tau'}(\alpha_0\mu_0)$ and index $\tau$ is in $\mathcal{F}_{\mathcal{T}}$.
\end{proof}
\section{Stirling numbers of the first kind in terms of the harmonic series}
\label{append3_Stirling}
To obtain numerical values of $U^p_n$ and $V_n^p$ of the convex hull of $\{P (\tau)\}_{\tau\in \{1,\dots,n-1\}}$  for $1\le p\le 5$ we need to know the expressions of $\begin{bmatrix}
n \\
m\\
\end{bmatrix}$ for $m=0,\dots,6$. As shown in \cite{Adamchik1997OnSN}, the general formula for the Stirling numbers of the first kind in terms of the harmonic series is
\begin{equation}
    \label{eq-Stirling}
    \begin{bmatrix}
n\\
m
\end{bmatrix} = \frac{(n-1)!}{(m-1)!} \omega(n, m-1)\,.
\end{equation}
The $\omega$-sequence is defined recursively by 
\begin{equation*}
\omega(n,m) = \mathbbm{1}_{m=0} +\sum_{k=0}^{m-1} (1-m)_k \sigma_{k+1} \omega (n,m-1-k)\,,    
\end{equation*}
where $\sigma_{k+1} =\sum_{i=1}^{n-1} \frac{1}{i^{k+1}}$ is the partial sums of the harmonic series $ \left \{ \frac{1}{i^{k+1}}\right\}_{i=1,2,\dots}$ and $(1-m)_k$ are the Pochhammer symbols. 
Applying (\ref{eq-Stirling}) for $m=0,\dots, 6$, we get 
$$
\begin{bmatrix}
n \\
0\\
\end{bmatrix}  = 1\,, \quad
\begin{bmatrix}
n \\
1\\
\end{bmatrix}  = (n-1)!\,, 
\\
\begin{bmatrix}
n \\
2\\
\end{bmatrix}  = (n-1)!\sigma_1\,,
$$

$$
\begin{bmatrix}
n \\
3\\
\end{bmatrix}  = \frac{(n-1)!}{2} (\sigma_1^2 - \sigma_2)\,,
\quad
\begin{bmatrix}
n \\
4\\
\end{bmatrix}  = (n-1)! \left ( \frac{\sigma_1^3}{6} -\frac{\sigma_1 \sigma_2}{2}+ \frac{\sigma_3}{3}\right )\,,
$$
$$
\begin{bmatrix}
n \\
5\\
\end{bmatrix}  = (n-1)! \left (  \frac{\sigma_1^4}{24} -\frac{\sigma_1^2\sigma_2}{4} + \frac{\sigma_1\sigma_3}{3}+\frac{\sigma_2^2}{8} - \frac{\sigma_4}{4}\right )\,,
$$
$$
\begin{bmatrix}
n \\
6\\
\end{bmatrix}  = (n-1)!\left (  \frac{\sigma_1^5}{120}-\frac{\sigma_1^3\sigma_2}{12} + \frac{\sigma_1^2\sigma_3}{6}+\frac{\sigma_1\sigma_2^2}{8} - \frac{\sigma_1\sigma_4}{4}- \frac{\sigma_2\sigma_3}{6} + \frac{\sigma_5}{5}\right ).
$$

\section{Additional Empirical evaluations}\label{App:additional_simulation_studies} 

\subsection{Multidimensional Poisson Model}
\label{append5_PoissonMd}
We simulated time series with each data-point simulated a $p$ independent Poisson random variables with mean 1. We varied $p$ and $n$ as in 
Section~\ref{append4a_GaussianMd}.
Using the same procedure as in Section \ref{append4a_GaussianMd}, we evaluate the number of faces and vertices of the convex hull $\{P(\tau)\}_{\tau \in \{1\dots,n-1\}}$. The results are presented in Figure~\ref{Figure_PM_CH_estimations}. We note that the results are similar to the Gaussian case (see Figure \ref{Figure_GM_CH_estimations}).

Next, using the same procedure as in Section \ref{sec-EmpiricalRunTimeGauss}, we evaluated the run times of MdFOCuS for a known or unknown pre-change parameter for Poisson distribution. The results are presented in Figure~\ref{PM_Runtime}. We note that the results are similar to the Gaussian case (see Figure~\ref{GM_Runtime}).

\begin{figure}[ht]%
     \centering
     \includegraphics[width=0.95\linewidth]{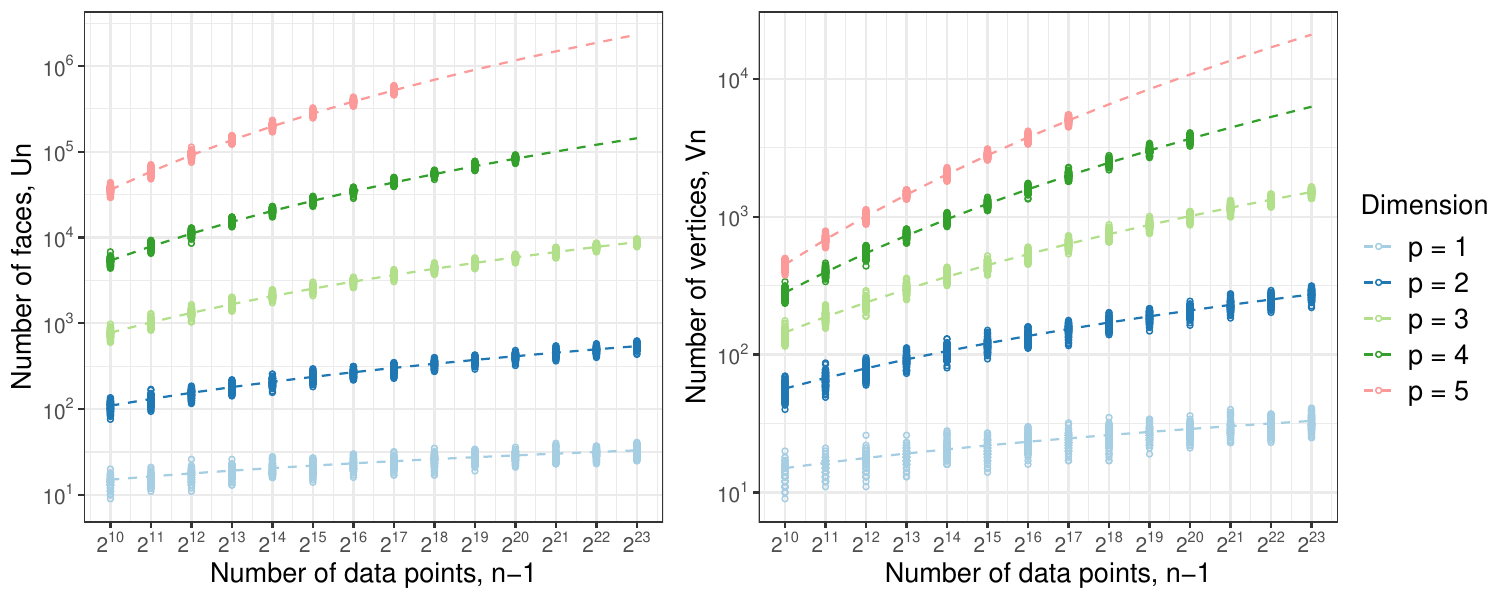}
     \caption{The number of faces (left) and vertices (right) of the convex hull built from the set of points $\{P(i)\}_{i \in \{1\dots,n-1\}}$ for dimension $1\le p \le 5$. Results averaged over 100 realisations of independent $p$-dimensional Poisson random variables with mean 1 data. Dashed lines correspond to the expected number of faces (left) and vertices (right) presented in Theorem \ref{th:boundnumbervertexes}. We consider a maximum running time of 40 minutes for the \texttt{Quickhull} algorithm, hence no data points for $p=4$ and $5$ for large numbers of observations.}
\label{Figure_PM_CH_estimations}
\end{figure}
\begin{figure}[ht]%
     \centering
     \includegraphics[width=0.8\linewidth]{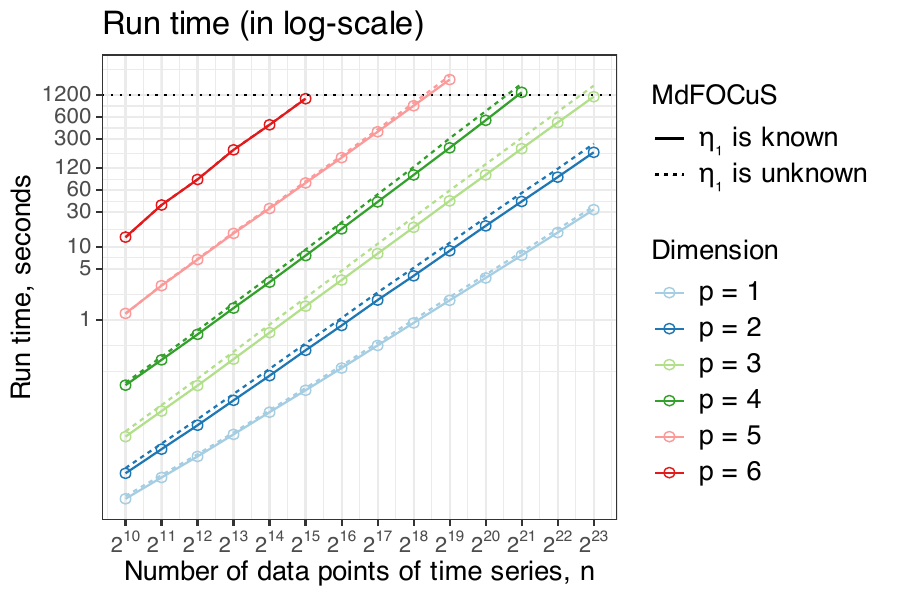}\caption{
     Run times in seconds of MdFOCuS with known (dashed line) and unknown (full line) pre-change parameter $\eta_1$, both with $\alpha = 2$, $\beta = 1$ and $maxSize =14$, in dimension $p=1,\dots, 6$ using time series $x_{1:n}$ simulated from the independent Poisson model. Run times are averaged over $100$ data sets. We considered a maximum running time of $20$ minutes (horizontal dotted black line) for the algorithms, hence no run times for $p=4,5$ and $6$ for large numbers of observations. The slope of the curves was obtained using a simple linear regression: 
      for known $\eta_1$: $\approx 1.013$ ($p=1$), $\approx 1.126$ ($p=2$), $\approx 1.192$ ($p=3$), $\approx 1.210$ ($p=4$), $\approx 1.172$ ($p=5$) and $\approx 1.250$ ($p=6$); 
     for unknown $\eta_1$: $\approx 1.018$  ($p=1$), $\approx 1.138$ ($p=2$), $\approx 1.213$ ($p=3$), $\approx 1.240$ ($p=4$), $\approx 1.190$ ($p=5$) and $\approx 1.254$ ($p=6$).
}\label{PM_Runtime}
\end{figure}

\subsection{Run time as a function of a slope parameter 
}
\label{append4_Runtime_alpha}
We consider the empirical run times of the MdFOCuS algorithm with a known pre-change parameter $\eta_1$ as a function of parameter $\alpha$. Using the implementation of  Algorithm \ref{MdFOCuS_algo} for a change in the mean of multi-dimensional Gaussian signal (see Table \ref{tab1_ExpFamily}) in R/C++ using the \texttt{qhull} library we studied how the $\alpha$  parameter affects its run time. To do this, we generate $100$ time series with $n=10^5$ data points, with each data point an independent realisation of a standard $p$-dimensional Gaussian random variable, in dimension $p = 1,2$ and $3$. In Figure \ref{GM_alpha_runtime} we report the average run time as a function of $\alpha$ for MdFOCuS for the pre-change parameter known case. The value of \texttt{maxSize} was always initialised to $p+2$ and $\beta$ was always set to $1$. We observe a minimum run time for $\alpha$ in $[2, 4]$. As a consequence, in the rest of the simulations we always set $\alpha$ to $2$.
\begin{figure*}[ht]%
\centering
\makebox{\includegraphics[width=.65\linewidth]{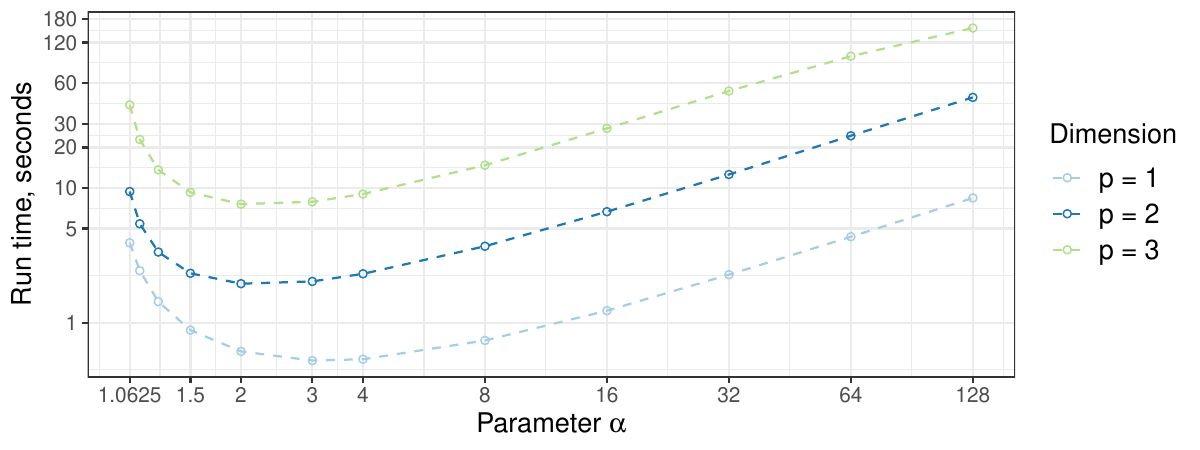}}
\caption{Run time of MdFOCuS as a function of $\alpha$ for $p$-variate time series with $n=10^5$ data points in dimension $p = 1,2$ and $3$. Results averaged over 100 simulations.}
\label{GM_alpha_runtime}
\end{figure*}

\subsection{50-dimensional time series, controlled by false positive rate}\label{sec:approx_comparison}

We present, in Table \ref{tab:tab:dd}, an additional simulation study to compare the simple one-dimentional multivariate implementation proposed by \cite{romano2022fast} (here called FOCuS), with the high-dimensional approximation for MdFOCuS using the two-dimensional approximation of the convex hull alongside the partial likelihood ratio statistic (see  Section \ref{sec3_3_AlgoMR} and subsection \ref{heuristic_section}). To facilitate a fair comparison, we tune the thresholds of all methods to achieve the same pre-change false positive rate of 0.01. We applied a Bonferroni correction to cope with multiple thresholds. The simulation scenarios mirror those of the main study but with the changepoint occurring at time 10000. This is to emphasize the impact of the approximations. We expect the two-dimensional approximation to outperform the one-dimensional approximation proposed by \cite{romano2022fast}, particularly for small-magnitude changes.  However, as observed in the main study, the original FOCuS method is still superior when observing a change in one dimension, as, despite in that case the statistics being equal, its threshold is slightly lower (because it considers fewer statistics in total).


\begin{table}[!h]
\centering
\caption{\label{tab:tab:dd}Average detection delay with 50-dimensional time series for a change at time $t = 10000$. Pre-change parameter is known on the left, and unknown on the right. Best results per row are highlighted in bold. Results are by controlling the false positive rate. FOCuS (est 500) estimates the pre-change parameter from training data of length 500.}
\centering
\resizebox{\ifdim\width>\linewidth\linewidth\else\width\fi}{!}{
\fontsize{12}{14}\selectfont
\begin{tabular}[t]{rrrrrrr}
\toprule
\multicolumn{2}{c}{ } & \multicolumn{2}{c}{Pre-change Known} & \multicolumn{3}{c}{Pre-change Unknown} \\
\cmidrule(l{3pt}r{3pt}){3-4} \cmidrule(l{3pt}r{3pt}){5-7}
magnitude & sparsity & FOCuS  & MdFOCuS & FOCuS (est 500) & FOCuS & MdFOCuS\\
\midrule
0.125 & 1 & \textbf{2347.52} &  2426.00 & 9980.48 & \textbf{3161.26} &  3295.96\\
0.125 & 5 & 3524.53 & \textbf{ 3380.42} & 9981.21 & 6088.01 &  \textbf{5375.96}\\
0.125 & 10 & 3997.59 & \textbf{3814.82} & 9980.60 & 7299.97 &  \textbf{6362.57}\\
0.125 & 25 & 4288.27 & \textbf{4224.50} & 9979.48 & 8474.92 &  \textbf{7479.11}\\
0.125 & 50 & 4509.18 & \textbf{4362.75} & 9982.43 & 8874.72 &  \textbf{7853.60}\\
\addlinespace
0.250 & 1 & \textbf{584.56} & 599.49 & 8932.53 & \textbf{627.85} &  644.16\\
0.250 & 5 & 977.95 & \textbf{871.53} & 9779.57 & 1167.59 &  \textbf{970.06}\\
0.250 & 10 & 1159.92 & \textbf{996.14} & 9845.31 & 1504.27  & \textbf{1117.59}\\
0.250 & 25 & 1299.23 & \textbf{1096.57} & 9893.83 & 1925.51  & \textbf{1322.48}\\
0.250 & 50 & 1350.10 & \textbf{1163.00} & 9893.97 & 2216.25  & \textbf{1400.04}\\
\addlinespace\textbf{}
0.500 & 1 &\textbf{ 147.73} & 151.25 & 2711.61 & \textbf{150.70} &  153.69\\
0.500 & 5 & 245.45 & \textbf{212.26} & 4818.17 & 259.12 &  \textbf{215.56}\\
0.500 & 10 & 310.56 & \textbf{247.15} & 6139.80 & 357.20 &  \textbf{257.50}\\
0.500 & 25 & 358.34 & \textbf{274.32} & 7015.92 & 458.89 &  \textbf{290.81}\\
0.500 & 50 & 385.03 & \textbf{290.86} & 7316.34 & 534.77 &  \textbf{309.12}\\
\addlinespace
0.750 & 1 &\textbf{ 66.55 } & 67.78 & 1194.15 & \textbf{66.86 }&  68.33\\
0.750 & 5 & 109.84 & \textbf{96.61 }& 2208.06 & 112.88 &  \textbf{97.46}\\
0.750 & 10 & 146.07  & \textbf{111.96} & 3022.77 & 159.63  & \textbf{113.25}\\
0.750 & 25 & 173.78  &\textbf{ 123.47 }& 3713.07 & 208.40  & \textbf{128.58}\\
0.750 & 50 & 185.83  &\textbf{ 129.10 }& 3977.72 & 239.54  & \textbf{134.76}\\
\addlinespace
1.000 & 1 & \textbf{37.90 } & 39.03 & 668.06 & \textbf{38.16}  & 39.06\\
1.000 & 5 & 63.03  & \textbf{55.14 }& 1254.92 & 64.23  & \textbf{55.10}\\
1.000 & 10 & 84.73  &\textbf{ 63.52 }& 1759.00 & 89.73  & \textbf{63.96}\\
1.000 & 25 & 102.40  & \textbf{73.13} & 2226.95 & 120.41  & \textbf{75.12}\\
1.000 & 50 & 109.77  & \textbf{75.73} & 2427.92 & 139.52  & \textbf{78.38}\\
\addlinespace
2.000 & 1 & \textbf{10.22} &  10.52 & 166.65 & \textbf{10.20}  & 10.52\\
2.000 & 5 & 17.22 & \textbf{14.74 }& 314.93 & 17.38  & \textbf{14.75}\\
2.000 & 10 & 23.24  &\textbf{ 17.06} & 456.38 & 24.11  & \textbf{17.10}\\
2.000 & 25 & 28.69 & \textbf{18.96} & 596.50 & 32.33  & \textbf{19.40}\\
2.000 & 50 & 31.41 &\textbf{ 19.90} & 662.81 & 38.29  & \textbf{20.43}\\
\bottomrule
\end{tabular}}
\end{table}

\subsection{Additional Plug-in estimated methods}\label{app:extra_training_data}

The following Tables \ref{tab:dd3_extra}, \ref{tab:dd5_extra} and \ref{tab:dd100_extra} show detection delays for various amount of training data prior to the monitoring period, and can be used to aid the comparison of the methods in the main simulation study.
\begin{table}[!h]
\centering
\caption{\label{tab:dd3_extra}Average detection delay with 3-dimensional time series for a change at time $t = 1000$. Comparison of pre-change mean known methods with plug-in oracle value (ora) against plug-in mean estimate for differing training data sizes (500, 250 and 100).}
\centering
\resizebox{\ifdim\width>\linewidth\linewidth\else\width\fi}{!}{
\fontsize{12}{14}\selectfont
\begin{tabular}[t]{rrrrrrrrrr}
\toprule
magnitude & sparsity & FOCuS (ora)& FOCuS (est 500) & FOCuS (est 250) & FOCuS (est 100) & ocd (ora) & ocd (est 500) & ocd (est 250) & ocd (est 100)\\
\midrule
0.125 & 1 & 786.67 & 1162.32 & 1470.76 & 2253.81 & 1227.57 & 1721.56 & 1783.41 & 2490.39\\
0.125 & 2 & 791.36 & 1327.92 & 1693.06 & 2302.31 & 1421.54 & 2043.79 & 1987.04 & 2415.50\\
0.125 & 3 & 859.03 & 1352.66 & 1723.76 & 2329.12 & 1496.18 & 2046.77 & 2097.23 & 2337.57\\
\addlinespace
0.250 & 1 & 250.58 & 325.38 & 470.69 & 1251.62 & 227.18 & 330.05 & 430.33 & 1102.87\\
0.250 & 2 & 253.56 & 364.98 & 556.86 & 1375.63 & 268.86 & 389.39 & 516.28 & 1173.38\\
0.250 & 3 & 273.92 & 401.56 & 630.94 & 1433.94 & 300.69 & 468.61 & 559.66 & 1204.47\\
\addlinespace
0.500 & 1 & 65.85 & 80.59 & 113.75 & 312.64 & 56.46 & 71.71 & 83.90 & 177.84\\
0.500 & 2 & 71.85 & 92.22 & 132.99 & 354.26 & 65.18 & 81.19 & 96.50 & 181.77\\
0.500 & 3 & 79.21 & 106.81 & 155.61 & 396.17 & 74.50 & 92.76 & 103.97 & 169.63\\
\addlinespace
0.750 & 1 & 31.80 & 37.30 & 50.73 & 133.10 & 28.95 & 34.93 & 39.46 & 67.11\\
0.750 & 2 & 34.66 & 42.69 & 61.46 & 154.01 & 32.52 & 38.90 & 43.47 & 70.93\\
0.750 & 3 & 37.19 & 49.00 & 71.75 & 172.62 & 35.43 & 42.38 & 46.40 & 68.02\\
\addlinespace
1.000 & 1 & 18.51 & 21.83 & 29.18 & 73.23 & 18.87 & 21.98 & 24.16 & 35.99\\
1.000 & 2 & 20.30 & 25.32 & 34.55 & 86.22 & 20.45 & 23.68 & 25.45 & 37.82\\
1.000 & 3 & 21.68 & 27.79 & 40.03 & 98.67 & 21.52 & 24.70 & 26.23 & 37.56\\
\addlinespace
2.000 & 1 & 5.35 & 6.18 & 8.06 & 19.03 & 6.40 & 7.06 & 7.12 & 9.14\\
2.000 & 2 & 5.97 & 7.18 & 9.46 & 22.10 & 6.64 & 7.10 & 7.25 & 9.64\\
2.000 & 3 & 6.51 & 7.89 & 10.62 & 24.63 & 6.80 & 7.32 & 7.41 & 9.48\\
\bottomrule
\end{tabular}}
\end{table}

\begin{table}[!h]
\centering
\caption{\label{tab:dd5_extra}Average detection delay with 5-dimensional time series for a change at time $t = 1000$. Comparison of  pre-change mean known methods with plug-in oracle value (ora) against plug-in mean estimate for differing training data sizes (500, 250 and 100).}
\centering
\resizebox{\ifdim\width>\linewidth\linewidth\else\width\fi}{!}{
\fontsize{12}{14}\selectfont
\begin{tabular}[t]{rrrrrrrrrr}
\toprule
magnitude & sparsity & FOCuS (ora) & FOCuS (est 500) & FOCuS (est 250) & FOCuS (est 100) & ocd (ora) & ocd (est 500) & ocd (est 250) & ocd (est 100)\\
\midrule
0.125 & 1 & 833.11 & 1488.18 & 2219.97 & 2706.40 & 911.17 & 1583.01 & 2458.66 & 3081.32\\
0.125 & 2 & 892.82 & 1534.50 & 2221.24 & 2685.51 & 1106.68 & 1599.44 & 2441.70 & 3074.78\\
0.125 & 3 & 925.08 & 1615.20 & 2255.65 & 2691.14 & 1101.38 & 1706.65 & 2493.92 & 3053.87\\
0.125 & 4 & 835.44 & 1472.64 & 2102.15 & 2629.56 & 1059.06 & 1472.04 & 2296.26 & 3005.94\\
0.125 & 5 & 953.85 & 1584.83 & 2228.96 & 2701.14 & 1297.37 & 1621.73 & 2448.01 & 3046.09\\
\addlinespace
0.250 & 1 & 259.75 & 435.84 & 871.35 & 1775.82 & 229.64 & 387.14 & 1064.11 & 2268.01\\
0.250 & 2 & 281.77 & 493.33 & 963.88 & 1839.07 & 253.79 & 424.68 & 1084.21 & 2281.61\\
0.250 & 3 & 297.41 & 517.02 & 1004.43 & 1924.93 & 287.10 & 446.80 & 1077.31 & 2336.42\\
0.250 & 4 & 260.23 & 471.25 & 918.63 & 1819.93 & 265.40 & 365.10 & 852.33 & 2140.46\\
0.250 & 5 & 304.13 & 541.09 & 1023.24 & 1988.96 & 309.16 & 439.40 & 1000.05 & 2350.21\\
\addlinespace
0.500 & 1 & 72.98 & 107.78 & 206.83 & 537.16 & 64.74 & 96.04 & 247.08 & 763.99\\
0.500 & 2 & 84.75 & 128.09 & 243.50 & 610.94 & 73.93 & 108.04 & 253.56 & 828.08\\
0.500 & 3 & 91.31 & 139.43 & 267.82 & 655.26 & 83.54 & 110.27 & 252.69 & 846.27\\
0.500 & 4 & 80.89 & 131.85 & 250.83 & 608.11 & 75.94 & 93.22 & 199.82 & 686.05\\
0.500 & 5 & 95.41 & 154.27 & 293.72 & 703.96 & 89.48 & 111.43 & 235.87 & 841.18\\
\addlinespace
0.750 & 1 & 33.44 & 47.69 & 91.82 & 228.48 & 32.98 & 44.57 & 108.72 & 341.20\\
0.750 & 2 & 39.21 & 58.27 & 113.10 & 267.93 & 36.22 & 48.39 & 113.74 & 380.14\\
0.750 & 3 & 41.29 & 64.95 & 124.88 & 294.45 & 38.61 & 49.04 & 110.91 & 381.23\\
0.750 & 4 & 37.03 & 61.29 & 116.62 & 277.83 & 34.71 & 41.39 & 87.09 & 305.25\\
0.750 & 5 & 45.65 & 74.44 & 137.55 & 326.77 & 42.24 & 50.95 & 104.84 & 366.27\\
\addlinespace
1.000 & 1 & 19.63 & 27.77 & 51.75 & 128.48 & 21.37 & 27.25 & 62.33 & 209.80\\
1.000 & 2 & 22.88 & 33.47 & 63.86 & 151.84 & 22.88 & 28.27 & 62.99 & 220.30\\
1.000 & 3 & 24.67 & 36.48 & 71.35 & 167.23 & 23.65 & 28.63 & 62.24 & 216.61\\
1.000 & 4 & 22.45 & 35.42 & 67.41 & 156.69 & 20.32 & 23.52 & 49.88 & 167.50\\
1.000 & 5 & 27.18 & 43.05 & 81.02 & 186.18 & 25.07 & 29.31 & 61.01 & 204.52\\
\addlinespace
2.000 & 1 & 5.79 & 7.72 & 13.40 & 31.86 & 7.25 & 7.97 & 15.78 & 56.68\\
2.000 & 2 & 6.88 & 9.16 & 16.57 & 38.91 & 7.47 & 8.06 & 15.66 & 55.90\\
2.000 & 3 & 7.21 & 10.14 & 18.62 & 42.92 & 7.33 & 8.15 & 15.74 & 53.75\\
2.000 & 4 & 6.64 & 10.04 & 17.92 & 40.14 & 6.10 & 6.66 & 12.99 & 42.46\\
2.000 & 5 & 7.91 & 11.92 & 21.30 & 48.72 & 7.40 & 8.10 & 15.55 & 52.15\\
\bottomrule
\end{tabular}}
\end{table}

\begin{table}[!h]
\centering
\caption{\label{tab:dd100_extra}Average detection delay with 100-dimensional time series for a change at time $t = 1000$. Comparison of pre-change mean known methods with plug-in oracle value (ora) against plug-in mean estimate for differing training data sizes (500, 250 and 100).}
\centering
\resizebox{\ifdim\width>\linewidth\linewidth\else\width\fi}{!}{
\fontsize{12}{14}\selectfont
\begin{tabular}[t]{rrrrrrrrrr}
\toprule
magnitude & sparsity & FOCuS (ora) & FOCuS (est 500) & FOCuS (est 250) & FOCuS (est 100) & ocd (ora) & ocd (est 500) & ocd (est 250) & ocd (est 100)\\
\midrule
0.125 & 1 & 1417.15 & 3269.06 & 3459.08 & 3615.97 & 1171.63 & 3361.86 & 3546.43 & 3696.77\\
0.125 & 5 & 1881.10 & 3429.84 & 3497.94 & 3604.75 & 1627.00 & 3490.18 & 3595.24 & 3691.58\\
0.125 & 10 & 2084.84 & 3450.20 & 3501.48 & 3607.64 & 1940.41 & 3510.97 & 3595.46 & 3695.33\\
0.125 & 50 & 2252.48 & 3473.50 & 3494.67 & 3611.70 & 2386.98 & 3547.64 & 3591.96 & 3704.31\\
0.125 & 100 & 2323.59 & 3505.76 & 3507.66 & 3610.74 & 2431.51 & 3564.11 & 3599.24 & 3705.24\\
0.250 & 1 & 389.11 & 1552.28 & 2659.50 & 3436.24 & 335.95 & 1643.28 & 2841.37 & 3542.47\\
0.250 & 5 & 641.18 & 2363.35 & 3054.36 & 3510.27 & 505.39 & 2477.35 & 3232.02 & 3607.01\\
0.250 & 10 & 789.27 & 2590.60 & 3178.29 & 3518.29 & 595.10 & 2706.71 & 3334.21 & 3622.86\\
0.250 & 50 & 1057.30 & 3026.93 & 3323.17 & 3549.36 & 827.30 & 3142.73 & 3453.27 & 3662.49\\
0.250 & 100 & 1080.26 & 3101.34 & 3346.50 & 3544.06 & 891.09 & 3200.56 & 3476.58 & 3650.38\\
0.500 & 1 & 105.79 & 359.11 & 721.53 & 1798.04 & 104.79 & 411.98 & 859.71 & 2132.81\\
0.500 & 5 & 169.93 & 661.68 & 1297.63 & 2580.08 & 139.23 & 708.99 & 1449.56 & 2809.88\\
0.500 & 10 & 232.04 & 900.72 & 1653.97 & 2868.20 & 170.03 & 958.71 & 1850.22 & 3077.24\\
0.500 & 50 & 357.90 & 1730.82 & 2480.88 & 3231.38 & 239.08 & 1868.74 & 2719.69 & 3408.29\\
0.500 & 100 & 372.34 & 1916.38 & 2611.86 & 3237.55 & 244.07 & 2093.59 & 2841.33 & 3407.16\\
0.750 & 1 & 47.78 & 158.37 & 310.89 & 761.25 & 57.56 & 228.43 & 461.51 & 1138.50\\
0.750 & 5 & 75.34 & 297.90 & 580.05 & 1389.39 & 73.47 & 354.74 & 715.73 & 1704.61\\
0.750 & 10 & 104.45 & 408.77 & 798.60 & 1831.94 & 84.16 & 458.03 & 925.56 & 2114.73\\
0.750 & 50 & 184.09 & 958.18 & 1626.21 & 2651.16 & 114.29 & 1036.67 & 1836.28 & 2889.70\\
0.750 & 100 & 193.77 & 1156.00 & 1823.30 & 2730.10 & 115.21 & 1283.48 & 2074.96 & 2956.24\\
1.000 & 1 & 27.12 & 89.54 & 173.04 & 419.58 & 38.67 & 158.25 & 317.49 & 772.93\\
1.000 & 5 & 43.72 & 168.56 & 325.28 & 791.73 & 46.07 & 237.96 & 476.20 & 1142.06\\
1.000 & 10 & 60.36 & 235.29 & 458.62 & 1094.25 & 50.77 & 298.13 & 599.74 & 1422.86\\
1.000 & 50 & 109.11 & 580.49 & 1074.41 & 2036.18 & 66.02 & 630.09 & 1218.93 & 2284.94\\
1.000 & 100 & 116.60 & 747.72 & 1288.45 & 2190.83 & 67.59 & 823.90 & 1466.28 & 2453.87\\
2.000 & 1 & 7.58 & 22.69 & 43.03 & 103.23 & 12.93 & 71.01 & 142.00 & 341.47\\
2.000 & 5 & 12.92 & 42.04 & 82.41 & 199.48 & 13.80 & 101.81 & 203.48 & 485.40\\
2.000 & 10 & 16.46 & 60.00 & 117.67 & 283.82 & 14.80 & 124.40 & 250.44 & 594.86\\
2.000 & 50 & 31.85 & 162.50 & 315.56 & 734.61 & 18.08 & 226.70 & 451.59 & 1021.58\\
2.000 & 100 & 34.28 & 222.17 & 422.46 & 920.47 & 18.34 & 255.33 & 493.39 & 1104.39\\
\bottomrule
\end{tabular}}
\end{table}

\newpage
\section{Dyadic MdFOCuS : an algorithm with a quasi-linear expected complexity}
\label{dyalic_section}

In this appendix we present an algorithm for online changepoint detection. In essence, as in the MdFOCuS algorithm of the main text, it maintains a set of points that is slightly larger than the set of points on the hull. It does so applying the \texttt{QuickHull} (or any convex hull) algorithm on successive and non-overlapping deterministic chunks of the points.
This deterministic choice of chunks is likely sub-optimal but it allows for a precise quantification of the expected complexity of the algorithm under the assumption of Theorem \ref{th:boundnumbervertexes}
and under the assumption that the convex hull algorithm is at worst quadratic. To be specific we obtain a $\mathcal{O}(n\log(n)^{p+1})$ bound on the expected time complexity.

We begin with two preliminary results. The first is a crude bound on the expectation of the square of the number of points on the convex hull of $n$ points  (that is $(V_n^p)^2$ following the notation of Section \ref{sec3_2_Bound}) under the assumptions of Theorem \ref{th:boundnumbervertexes}.
The second result bounds the expected complexity of computing the hull of $2n$ points if we already know the points on the hull of the first $n$ points and the last $n$ points. 
In all that follows we will assume that Assumptions \ref{assump:Vn_complexity} and \ref{assump:Time_complexity} described in the main text are true. For ease of reading, we repeat them here.

\setcounter{assumption}{0}
\begin{assumption}
In any dimension $p\ge1$, there are positive constants $c_1$ and $c_2$ for which the expected number of vertices, $\mathbb{E}[V_n^p]$, can be bounded as follows:
\begin{equation*}
\mathbb{E}[V_n^p] \leq c_1 \log(n)^p + c_2.   
\end{equation*}
\end{assumption}

This is true under the conditions of 
Theorem \ref{th:boundnumbervertexes} and Corollary \ref{col:boundnumbervertexes}.


Remember we define $Time(n)$ to be the worst-case time complexity of finding the convex hull of $n$ points.
\begin{assumption}
For our convex hull algorithm we are using there exist two positive constants $c_3$ and $c_4$ such that $Time(n) \leq c_3n^2 + c_4$. 
\end{assumption}

The assumption on the quadratic complexity is true for {\texttt{QuickHull}} for $p\le2$ \cite[]{Barber1996} and using the algorithm of \cite{Chazelle1993} it is true for $p\le3$.
We now state our bound on the expectation of the square of the number of points on the hull $V_n^p$.
\begin{lemma}\label{lemmma:bound_expectedsquare}
\begin{equation}\label{eq:bound_expectedsquare}
\mathbb{E}[(V_n^p)^2] \le c_1 n\log(n)^p + c_2n.
\end{equation}
\end{lemma}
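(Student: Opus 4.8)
The plan is to exploit the fact that the quantity $V_n^p$ we are squaring is, by its very definition, deterministically bounded by the number of available points. The convex hull in question is built from the $n-1$ points $\{P(\tau)\}_{\tau \in \{1,\ldots,n-1\}}$, and the number of its vertices can never exceed the number of points fed into it. Hence we have the almost sure (indeed deterministic) inequality $V_n^p \le n-1 < n$. This crude observation is the whole engine of the proof and explains why the authors describe the bound as ``crude'': we make no attempt to control the correlation structure or higher moments of $V_n^p$, but instead trade one factor of $V_n^p$ against its trivial upper bound $n$.

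Concretely, the first step is to record the pointwise domination
\begin{equation*}
(V_n^p)^2 = V_n^p \cdot V_n^p \le n \cdot V_n^p,
\end{equation*}
which holds on every sample path since one of the two factors is at most $n$. The second step is simply to take expectations of both sides, using monotonicity of the expectation, to obtain $\mathbb{E}[(V_n^p)^2] \le n\, \mathbb{E}[V_n^p]$. The final step is to invoke Assumption \ref{assump:Vn_complexity}, which gives $\mathbb{E}[V_n^p] \le c_1 \log(n)^p + c_2$; multiplying through by $n$ yields
\begin{equation*}
\mathbb{E}[(V_n^p)^2] \le n\left(c_1 \log(n)^p + c_2\right) = c_1 n \log(n)^p + c_2 n,
\end{equation*}
which is exactly \eqref{eq:bound_expectedsquare}.

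There is essentially no hard part here: the only thing to be careful about is matching the index convention, namely that $V_n^p$ refers to the hull of $n-1$ points rather than $n$, so that the deterministic bound $V_n^p \le n$ is safe. The subtlety, if any, is conceptual rather than technical, in that this bound is deliberately loose — a sharper estimate would try to establish concentration of $V_n^p$ around its mean of order $\log(n)^p$ and would give $\mathbb{E}[(V_n^p)^2]$ of order $\log(n)^{2p}$ — but such sharpness is not needed for the downstream complexity analysis, where the extra factor of $n$ is harmless after the dyadic summation.
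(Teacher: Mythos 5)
Your proof is correct and follows essentially the same route as the paper: both arguments trade one factor of $V_n^p$ for the trivial deterministic bound $V_n^p \le n-1 < n$, giving $\mathbb{E}[(V_n^p)^2] \le n\,\mathbb{E}[V_n^p]$, and then apply Assumption \ref{assump:Vn_complexity}. The paper merely writes this same step as a sum over the probability mass function rather than as a pointwise domination before taking expectations.
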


\begin{proof}
    $V_n^p$ is bounded by $0$ and $n-1$. It follows that 
\begin{equation}
\mathbb{E}[(V_n^p)^2] = \sum_{i=0}^{n} \mathbb{P}(V_n^p=i) i^2  \le n \sum_{i=0}^{n}\mathbb{P}(V_n^p=i) i \le n \mathbb{E}[V_n^p] \le c_1 n\log(n)^p + c_2n.
\end{equation}
We get the last inequality with Assumption \ref{assump:Vn_complexity}.
\end{proof}

We now state our bound on the complexity of merging the hull of two successive chunks.

\begin{lemma}\label{lemma:bound_fusion}
    Consider we have $2n$ data points. Assume we know $\mathcal{W}_1$ the index of the points on the hull of $\{P(\tau)\}_{1 \leq \tau \leq n}$ and $\mathcal{W}_2$ the index of the points on the hull of $\{P(\tau)\}_{n+1 \leq \tau \leq 2n}$. Under the condition of Corollary \ref{col:boundnumbervertexes} and Assumption \ref{assump:Time_complexity} computing the index of the points on the hull of $\{P(\tau)\}_{1 \leq \tau \leq 2n}$ can be done in less than $4c_1c_3 n (\log n)^p + 4c_2c_3 n+c_4$ time.
\end{lemma}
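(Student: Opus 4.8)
The plan is to reduce the merge to a single convex-hull computation carried out on only the stored candidate vertices, and then bound its cost in expectation via Lemma~\ref{lemmma:bound_expectedsquare}. The geometric heart of the argument is that every vertex of the hull of all $2n$ points already lies in $\mathcal{W}_1 \cup \mathcal{W}_2$. Indeed, any $P(\tau)$ with $1 \le \tau \le n$ that is not indexed by $\mathcal{W}_1$ lies in the interior of $\mathrm{conv}\{P(\tau')\}_{1 \le \tau' \le n}$, hence in the interior of the superset $\mathrm{conv}\{P(\tau')\}_{1 \le \tau' \le 2n}$, and so cannot be a vertex of the full hull; the same reasoning applies to the second chunk. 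It follows that $\mathrm{conv}(\mathcal{W}_1 \cup \mathcal{W}_2)$ shares the vertex set of the full hull, so it is enough to run the convex-hull algorithm once, on the points indexed by $\mathcal{W}_1 \cup \mathcal{W}_2$.

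Because the two index ranges $\{1,\dots,n\}$ and $\{n+1,\dots,2n\}$ are disjoint, the union is disjoint and $|\mathcal{W}_1 \cup \mathcal{W}_2| = |\mathcal{W}_1| + |\mathcal{W}_2|$ exactly, with no deduplication needed. By Assumption~\ref{assump:Time_complexity} the cost of this single hull call is at most
\begin{equation*}
Time(|\mathcal{W}_1| + |\mathcal{W}_2|) \le c_3\,(|\mathcal{W}_1| + |\mathcal{W}_2|)^2 + c_4.
\end{equation*}
By translation invariance of the hull combinatorics (the points of the second chunk are a constant shift of a fresh length-$n$ partial-sum process), each of $|\mathcal{W}_1|$ and $|\mathcal{W}_2|$ is distributed as the vertex count of a convex hull of $n$ points, and hence obeys the bound of Lemma~\ref{lemmma:bound_expectedsquare}.

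Taking expectations and using the elementary inequality $(a+b)^2 \le 2a^2 + 2b^2$, I would bound
\begin{equation*}
\mathbb{E}\big[(|\mathcal{W}_1| + |\mathcal{W}_2|)^2\big] \le 2\,\mathbb{E}\big[|\mathcal{W}_1|^2\big] + 2\,\mathbb{E}\big[|\mathcal{W}_2|^2\big] \le 4\big(c_1 n (\log n)^p + c_2 n\big),
\end{equation*}
where the last step applies Lemma~\ref{lemmma:bound_expectedsquare} to each term. Substituting into the cost bound gives an expected merge cost of at most $c_3 \cdot 4(c_1 n (\log n)^p + c_2 n) + c_4 = 4 c_1 c_3\, n (\log n)^p + 4 c_2 c_3\, n + c_4$, which is precisely the claimed bound.

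The one genuinely substantive step is the containment of the full hull's vertex set in $\mathcal{W}_1 \cup \mathcal{W}_2$, established by the interior-point argument of the first paragraph, since it is what allows the merge to touch only $O(|\mathcal{W}_1| + |\mathcal{W}_2|)$ points rather than all $2n$. Everything after that is a mechanical combination of Assumption~\ref{assump:Time_complexity} with Lemma~\ref{lemmma:bound_expectedsquare}; the only points to keep straight are that the bound is on expected (not worst-case) time, with the $(\log n)^p$ factor coming entirely from the expectation in Lemma~\ref{lemmma:bound_expectedsquare}, and that the lower-order cost of forming the concatenation $\mathcal{W}_1 \cup \mathcal{W}_2$ is absorbed into the hull call.
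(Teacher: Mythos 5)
Your proof is correct and follows essentially the same route as the paper's: run the hull algorithm once on $\mathcal{W}_1 \cup \mathcal{W}_2$, bound its cost via Assumption~\ref{assump:Time_complexity}, apply $(a+b)^2 \le 2a^2+2b^2$, and use the equidistribution of $|\mathcal{W}_1|$ and $|\mathcal{W}_2|$ together with Lemma~\ref{lemmma:bound_expectedsquare} to conclude. The only difference is that you make explicit the containment of the full hull's vertex set in $\mathcal{W}_1 \cup \mathcal{W}_2$, which the paper asserts implicitly; that is a welcome clarification, not a divergence.
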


\begin{proof}
We will compute those points as the points on the hull of $\{P(\tau)\}_{\tau \in \mathcal{W}_1 \cup \mathcal{W}_2}$.
By Assumption \ref{assump:Time_complexity} the time complexity of processing these $|\mathcal{W}_1|+|\mathcal{W}_2|$ points is 
\begin{equation*}
Time \left(|\mathcal{W}_1|+|\mathcal{W}_2|\right) \le c_3\left(|\mathcal{W}_1|+|\mathcal{W}_2|\right)^2 + c_4 \leq 2 c_3 \left(|\mathcal{W}_1|^2 + |\mathcal{W}_2|^2\right) + c_4.    
\end{equation*}
Now taking the expectation with respect to the number of points in $\mathcal{W}_1$ and $\mathcal{W}_2$, and noting that under the assumption of Corollary \ref{col:boundnumbervertexes} the distribution of $|\mathcal{W}_1|$ is the same as the distribution of $|\mathcal{W}_2|$ we get 
\begin{equation*}
\mathbb{E}\left(Time \left(|\mathcal{W}_1|+|\mathcal{W}_2|\right)\right) \le 4c_3\mathbb{E}\left(|\mathcal{W}_1|^2\right) + c_4.  
\end{equation*}
Using \eqref{eq:bound_expectedsquare} 
we get that the expected complexity is
\begin{equation}
\label{eq:bound_fusion}
\mathbb{E}\left(Time \left(|\mathcal{W}_1|+|\mathcal{W}_2|\right)\right) \le 4c_1c_3 n (\log n)^p + 4c_2c_3 n+c_4.
\end{equation}
\end{proof}

Now, before presenting the dyadic algorithm, we explain the set of candidates it considers and how this set is updated.

\subsection{Dyadic algorithm: Set of candidates}

Recall that at time $n$ we would ideally only store the vertices on the hull of $\{P(\tau)\}_{\tau < n}$. In the following algorithm we store the vertices on the hull of chunks of the form $\left \{k 2^q +1, \dots, (k+1) 2^q\right\}$ for all $q \geq q_{min}$. 
To be specific, we consider the base 2 description of $n$: $n = \sum_{q=0}^{\lfloor \log_2(n) \rfloor} \overline{n}_q 2^q,$ with $\overline{n}_q=0$ or $1$, and define for $q$ in $\{0, \ldots, \lfloor \log_2(n-1) \rfloor\}$
the subset $\mathcal{U}_n^{q}$ of $\{1, \ldots, n-1\}$ as
\begin{equation*}
   \mathcal{U}_n^{q} = 
\{ \tau | \sum_{q'=q+1}^{\lfloor \log_2(n-1) \rfloor} \overline{(n-1)}_{q'} 2^{q'} 
<\tau \leq 
\sum_{q'=q}^{\lfloor \log_2(n-1) \rfloor} \overline{(n-1)}_{q'} 2^{q'} \},    
\end{equation*}
with the convention that for $q  = \lfloor \log_2(n-1) \rfloor$  the sum $\sum_{q'=q+1}^{\lfloor \log_2(n-1) \rfloor} \overline{(n-1)}_{q'} 2^{q'}$ is equal to $0$.

We provide a few examples of these sets for $n=20$ to $n=31$ in Table \ref{tab:exampleUset}. It can be observed that often $\mathcal{U}_n^q$ differs from $\mathcal{U}_{n+1}^q$ only for small values of $q$.

\begin{table} [!ht]
\caption{\label{tab:exampleUset}A few $\mathcal{U}_n^q$ sets.}
\centering
\begin{tabular}{cccccc} 
    \hline 
           $n$   & $q=4$ & $q=3$ & $q=2$ & $q=1$ & $q=0$ \\ \hline
    31 & $\{1, \ldots 16\}$ & $\{17, \ldots 24 \}$ & $\{25, \ldots 28 \}$ & $\{ 29, 30\}$ & $\emptyset$ \\ 
    30 & $\{1, \ldots 16\}$ & $\{17, \ldots 24 \}$ & $\{25, \ldots 28 \}$ & $\emptyset$ & $\{ 29\}$ \\ 
    29 & $\{1, \ldots 16\}$ & $\{17, \ldots 24 \}$ & $\{25, \ldots 28 \}$ & $\emptyset$ & $\emptyset$ \\ 
    28 & $\{1, \ldots 16\}$ & $\{17, \ldots 24 \}$ & $\emptyset$ & $\{25, 26 \}$ & $\{27\}$ \\ 
    27 & $\{1, \ldots 16\}$ & $\{17, \ldots 24 \}$ & $\emptyset$ & $\{25, 26 \}$ & $\emptyset$ \\ 
    26 & $\{1, \ldots 16\}$ & $\{17, \ldots 24 \}$ & $\emptyset$ & $\emptyset$ & $\{25\}$ \\ 
    25 & $\{1, \ldots 16\}$ & $\{17, \ldots 24 \}$ & $\emptyset$ & $\emptyset$ & $\emptyset$ \\ 
    24 & $\{1, \ldots 16\}$ & $\emptyset$ & $\{17,\ldots 20\}$ & $\{21, 22\}$ & $\{23\}$ \\ 
    23 & $\{1, \ldots 16\}$ & $\emptyset$ & $\{17,\ldots 20\}$ & $\{21, 22\}$ & $\emptyset$ \\ 
    22 & $\{1, \ldots 16\}$ & $\emptyset$ & $\{17,\ldots 20\}$ & $\emptyset$ & $\{21\}$ \\ 
    21 & $\{1, \ldots 16\}$ & $\emptyset$ & $\{17,\ldots 20\}$ & $\emptyset$ & $\emptyset$ \\ 
    20 & $\{1, \ldots 16\}$ & $\emptyset$ & $\emptyset$ & $\{17, 18\}$ & $\{19\}$ \\ \hline
\end{tabular}
\end{table}

For any $q_{min} \leq \lfloor \log_2(n-1) \rfloor$  \begin{equation*}
\bigcup_{q=q_{min}}^{\lfloor \log_2(n-1) \rfloor} \mathcal{U}_n^{q} \quad \cup \quad \{\tau | \sum_{q'=q_{min}}^{\lfloor \log_2(n-1) \rfloor} \overline{(n-1)}_{q'} 2^{q'}
<\tau \leq n-1 \}, 
\end{equation*}
is a partition of $\{1, \ldots, n-1\}$. The right-side set of this partition has at most $2^{q_{min}}$ elements.
We use this partition to update the set
of candidates. To be specific our algorithm maintains

\begin{equation*}
\mathcal{T}_n  =  \bigcup_{q=q_{min}}^{\lfloor \log_2(n-1) \rfloor} \mathcal{W}_n^q  \quad \cup 
\quad  \left(
\{ P(\tau) | \sum_{q'=q_{min}}^{\lfloor \log_2(n-1) \rfloor} \overline{(n-1)}_{q'} 2^{q'}
< \tau \leq n-1 \} 
\right)\,,
\end{equation*}
with $\mathcal{W}_n^q = \text{\textsc{Hull}} \left( \{P(\tau)\}_{\tau \in \mathcal{U}_n^{q} }\right)$.
Note that $\mathcal{T}_n$ contains all the points (or index of the points) on the hull of $\{P(\tau)\}_{\tau \in \{1,\dots, n-1\}}$. 

In the next paragraphs, we bound the expected number of points in $\mathcal{T}_n$ and then explain how we compute $\mathcal{T}_{n+1}$ from $\mathcal{T}_{n}$.


\begin{lemma}\label{lemma:bound_numberofcandidate}
    For a fixed $q_{min}$, under Assumption \ref{assump:Vn_complexity} we have
    \begin{equation}
\label{eq:bound_numberofcandidate}
\mathbb{E}(|\mathcal{T}_n|) \leq  2c_1 (\log n)^{p+1} +   2c_2\log n + 2^{q_{min}}.
\end{equation}
\end{lemma}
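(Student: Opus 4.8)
The plan is to bound $|\mathcal{T}_n|$ term by term. By construction $\mathcal{T}_n$ is the union of the dyadic hulls $\mathcal{W}_n^q$, for $q_{min} \le q \le \lfloor \log_2(n-1)\rfloor$, together with the short ``tail'' block $\{\tau \mid \sum_{q'=q_{min}}^{\lfloor\log_2(n-1)\rfloor}\overline{(n-1)}_{q'}2^{q'} < \tau \le n-1\}$, which contains at most $2^{q_{min}}$ indices. Since the cardinality of a union is at most the sum of cardinalities, taking expectations gives
\begin{equation*}
\mathbb{E}(|\mathcal{T}_n|) \;\le\; \sum_{q=q_{min}}^{\lfloor\log_2(n-1)\rfloor} \mathbb{E}(|\mathcal{W}_n^q|) \;+\; 2^{q_{min}}.
\end{equation*}
The problem therefore splits into bounding a single $\mathbb{E}(|\mathcal{W}_n^q|)$ and counting the number of scales $q$.

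The crux is the per-scale bound, and this is where the i.i.d.\ assumption enters. Each $\mathcal{U}_n^q$ is a block of consecutive integers $\{a+1,\dots,a+m\}$ with $m = |\mathcal{U}_n^q| \le 2^q$. Writing $S_\tau = \sum_{t=1}^\tau x_t$ and decomposing $P(\tau) = (a, S_a) + (\tau - a, S_\tau - S_a)$, the first summand is a deterministic translation and the second traces out a random walk of length $m$ whose increments $x_{a+1},\dots,x_{a+m}$ are i.i.d.\ with the common distribution. Because translating a finite point set leaves its convex hull, and hence its vertex count, unchanged, $|\mathcal{W}_n^q|$ has the same distribution as $V_{m+1}^p$. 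Since $m \le 2^q \le 2^{\lfloor\log_2(n-1)\rfloor} \le n-1$, we have $m+1 \le n$, and Assumption \ref{assump:Vn_complexity} --- valid here by Theorem \ref{th:boundnumbervertexes}, Corollary \ref{col:boundnumbervertexes} and Lemma \ref{lem:RW_conditions} --- yields
\begin{equation*}
\mathbb{E}(|\mathcal{W}_n^q|) \;=\; \mathbb{E}(V_{m+1}^p) \;\le\; c_1 (\log(m+1))^p + c_2 \;\le\; c_1 (\log n)^p + c_2,
\end{equation*}
the bound holding trivially also when $\mathcal{U}_n^q = \emptyset$.

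It then remains to count scales and collect terms. The number of admissible $q$ is $\lfloor\log_2(n-1)\rfloor - q_{min} + 1 \le \lfloor\log_2(n-1)\rfloor + 1 \le 2\log n$, where the last inequality can be verified directly for every $n \ge 2$. Multiplying the uniform per-scale bound by this count and adding the tail contribution gives
\begin{equation*}
\mathbb{E}(|\mathcal{T}_n|) \;\le\; 2\log(n)\,\big(c_1 (\log n)^p + c_2\big) + 2^{q_{min}} \;=\; 2c_1 (\log n)^{p+1} + 2c_2 \log n + 2^{q_{min}},
\end{equation*}
which is the claim. The main obstacle is the distributional identity of the second paragraph: one must argue carefully that the vertex count of the hull over a consecutive sub-block of the walk matches that of a freshly started walk of the same length, so that Assumption \ref{assump:Vn_complexity} applies to each chunk and not merely to the full prefix $\{P(\tau)\}_{\tau \le n-1}$. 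Everything else --- subadditivity of cardinality, the deterministic tail bound, and the scale count --- is routine bookkeeping.
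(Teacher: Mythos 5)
Your proof is correct and follows essentially the same route as the paper: decompose $\mathcal{T}_n$ into the per-scale hulls $\mathcal{W}_n^q$ plus the tail of at most $2^{q_{min}}$ indices, bound each $\mathbb{E}(|\mathcal{W}_n^q|)$ by $c_1(\log n)^p + c_2$ via Assumption \ref{assump:Vn_complexity}, and multiply by the at most $2\log n$ scales. The only difference is that you make explicit the translation/stationarity argument showing that the hull of a consecutive chunk of the walk has the same vertex-count distribution as a freshly started walk, a step the paper's proof leaves implicit when it applies the assumption "for all $\mathcal{U}_n^q$".
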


\begin{proof}
For a fixed $q_{min}$ using Assumption \ref{assump:Vn_complexity} for all $\mathcal{U}_n^q$ we get that 
\begin{eqnarray*}
\mathbb{E}(|\mathcal{T}_n|) & \leq & \sum_{q=q_{min}}^{\lfloor \log_2(n-1) \rfloor} \left(c_1 (\log 2^q)^{p} + c_2\right) +  2^{q_{min}} \\
& \leq & \sum_{q=q_{min}}^{\lfloor \log_2(n-1) \rfloor} \left(c_1 (\log n)^{p} + c_2\right) +  2^{q_{min}} \\
& \leq &  2c_1 (\log n)^{p+1} +   2c_2\log n + 2^{q_{min}}.
\end{eqnarray*}
We get the last inequality using the fact that we sum over at most $\log_2(n)$ terms and $1/\log(2) \leq 2$.
\end{proof}

\subsection{Dyadic algorithm: Updating the set of candidates}

To compute $\mathcal{T}_{n+1}$ from scratch we need to apply the hull algorithm on all $\mathcal{U}_{n+1}^q$ for $q\ge q_{min}$. Assuming we have already computed $\mathcal{T}_n$ computing $\mathcal{T}_{n+1}$ is a bit easier. This is because it is often the case that $\mathcal{U}_{n+1}^q =\mathcal{U}_{n}^q$.
Therefore we directly get the indices of  the point on the hull of $\mathcal{U}_{n+1}^q$ as $\mathcal{U}_{n}^q \cap \mathcal{T}_n$ because
\begin{equation*}
    \mathcal{W}_n^q=\text{\textsc{Hull}} \left( \{P(\tau)\}_{\tau \in \mathcal{U}_n^{q} }\right) = \mathcal{U}_n^q \cap \mathcal{T}_n\,. 
\end{equation*}

Algorithm \ref{dyadic_online} formalise this idea and iteratively merge chunks of size $2^q$ to take advantage of Lemma \ref{lemma:bound_fusion}.

\begin{algorithm}[ht]
\caption{Update of the set of candidates}
\label{dyadic_online}
\begin{algorithmic}[1]
\State {\bf Input :} $n$ and $\mathcal T = \mathcal{T}_n$
\State $ \mathcal T \gets \mathcal T \cup \{n \}$ 
    \State $q \gets q_{min}$
     \While{$remainder(n / 2^q) = 0$}
        \State $ \mathcal T' \gets \{ \tau \in \mathcal T | \tau > n - 2^{q}\}$ 
        \State $ \mathcal T'' \gets \Call{Quickhull}{\{P(\tau)\}_{\tau \in \mathcal T'}}$  
        \State $\mathcal T \gets \mathcal T \setminus (\mathcal T' \setminus \mathcal T'')$ \Comment{discard vertices in $\mathcal T'$ not in $\mathcal T''$}
        \State $q \gets q+1$
      \EndWhile
\State \Return $\mathcal{T}= \mathcal{T}_{n+1}$
\end{algorithmic}
\end{algorithm}

\begin{remark}
In Algorithm \ref{dyadic_online} we assume that if the set of points ${\{P(\tau)\}_{\tau \in \mathcal T'}}$ doesn't define a proper volume in dimension $p+1$ then the output of $\Call{Quickhull}{\{P(\tau)\}_{\tau \in \mathcal T'}}$ is simply $\mathcal T'$.
If we choose $q_{min}$ such that $2^{q_{min}}$ is larger than $p+2$ this is unlikely to happen for data drawn from a distribution with a continuous density.
\end{remark}

We now continue with an informal description of Algorithm \ref{dyadic_online}.
Looking at line 4 when we include a new data point $n+1$ two cases can happen.
\begin{itemize}
    \item $n$ is not a multiple of $2^{q_{min}}$. In that case for all $q \ge q_{min}$ we have $\mathcal{U}_{n+1}^q = \mathcal{U}_n^q$. We just need to add $n$ as a possible changepoint to $\mathcal{T}_n$ to recover $\mathcal{T}_{n+1}$. This was done in line 2.
    \item $n$ is a multiple of $2^{q_{min}}$. In that case for some $q \ge q_{min}$ we have that $\mathcal{U}_{n+1}^q$ is different from $\mathcal{U}_{n}^q$. We therefore need to update the set of points using the hull algorithm.
\end{itemize}

For the later case let us define $q_{n} \ge q_{min}$ as the largest integer such that $2^{q_{n}}$ divides $n$. Before we explain the update of lines 5, 6, and 7 note the following facts.
\begin{itemize}
\item In base 2, $n$ ends with a one followed by $q_{n}$ zeros. In other words $\overline{n}_{q_{n}} = 1$ and for all $q < q_{n}$ $\overline{n}_q=0$. Thus all $\mathcal{U}_{n+1}^q$ with $q < q_{n}$ should be empty and $\mathcal{U}_{n+1}^{q_{n}}$ is not.
\item In base $2$, $n-1$  ends with a zero followed by $q_{n}$ ones. In other words $\overline{(n-1)}_{q_{n}} = 0$ and for all $q < q_{n}$ $\overline{(n-1)}_q=1$. Thus all $\mathcal{U}_{n}^q$ with $q < q_{n}$ are not empty. 
\item We know the index of the points on the hull for all sets of indices $\mathcal{U}_{n}^q$ with $q < q_{n}$: 
\begin{equation*}
    \mathcal{W}_n^q = \text{\textsc{Hull}} \left( \{P(\tau)\}_{\tau \in \mathcal{U}_n^{q} }\right) = \mathcal{U}_n^q \cap \mathcal{T}_n.
\end{equation*}
\item Finally, $\mathcal{U}_{n+1}^{q_{n}} = (\cup_{q_{min} \le q < q_{n}} \mathcal{U}_{n}^{q}) \cup \{\tau | n - 2^{q_{min}} < \tau \leq n\}$. 
\end{itemize}

Our goal is to get the point (or indices of these points) on the hull of $P(\tau)$ for $\tau$ in $\mathcal{U}_{n+1}^{q_{n}}$. 
To exploit the merging bound of \eqref{eq:bound_fusion} Algorithm \ref{dyadic_online} iteratively applies the hull algorithm on larger and larger chunks of the data trimming the points that are not on the hull.
In the while loop, from lines 4 to 9, of Algorithm \ref{dyadic_online} we can distinguish two cases.

\begin{enumerate}
\item For $q=q_{min}$ on lines 5 and 6 the algorithm applies the hull algorithm on the $2^{q}$ right-most points: $\text{\textsc{Hull}}(\{P(\tau)\}_{n - 2^{q} < \tau \leq n})$. It then discards on line 7 all indices larger than $n-2^{q}$ that are not on this hull.
\item If $q_{min} < q \le q_{n}$ on lines 5 and 6 the algorithm applies the hull algorithm on indices that are larger than $n-2^{q}$. In essence it is applying the hull algorithm on the points computed during the previous while step (indices larger than $n-2^{q-1}$ and in $\mathcal T$) with those of $\mathcal{W}_n^q = \text{\textsc{Hull}} \left( \{P(\tau)\}_{\tau \in \mathcal{U}_n^{q} }\right) = \mathcal{U}_n^q \cap \mathcal{T}_n$ (indices between $n-2^{q}$ and $n-2^{q-1}$, and in $\mathcal T$). A bit more formally, calling $\mathcal{W}'^{q}_{n+1}$ the output of line 6 ($\mathcal T''$) at step $q$ of the while loop, we are merging $\mathcal{W}_n^q$ and $\mathcal{W}'^{q-1}_{n+1}$ as follows:
\begin{displaymath}
    \mathcal{W}'^{q}_{n+1} = \text{\textsc{Hull}} \left( \{P(\tau)\}_{\tau \in \mathcal{W}_n^{q} \cup \mathcal{W}'^{q-1}_{n+1} }\right).
\end{displaymath}
On line 7 the algorithm is discarding all indices larger than $n-2^{q}$ that are not on this hull $\mathcal{W}'^{q}_{n+1}$.
\end{enumerate}

Proceeding in such a way at each step we merge the points on the hull of two successive chunks of size $2^{q-1}$ to recover the points on the hull of a chunk of size $2^q$. Figure \ref{fig:scheme} is a schematic representation of these successive fusions (applications of the hull algorithm).

\begin{figure*}[ht]%
\centering
\makebox{\includegraphics[width=.9\linewidth]{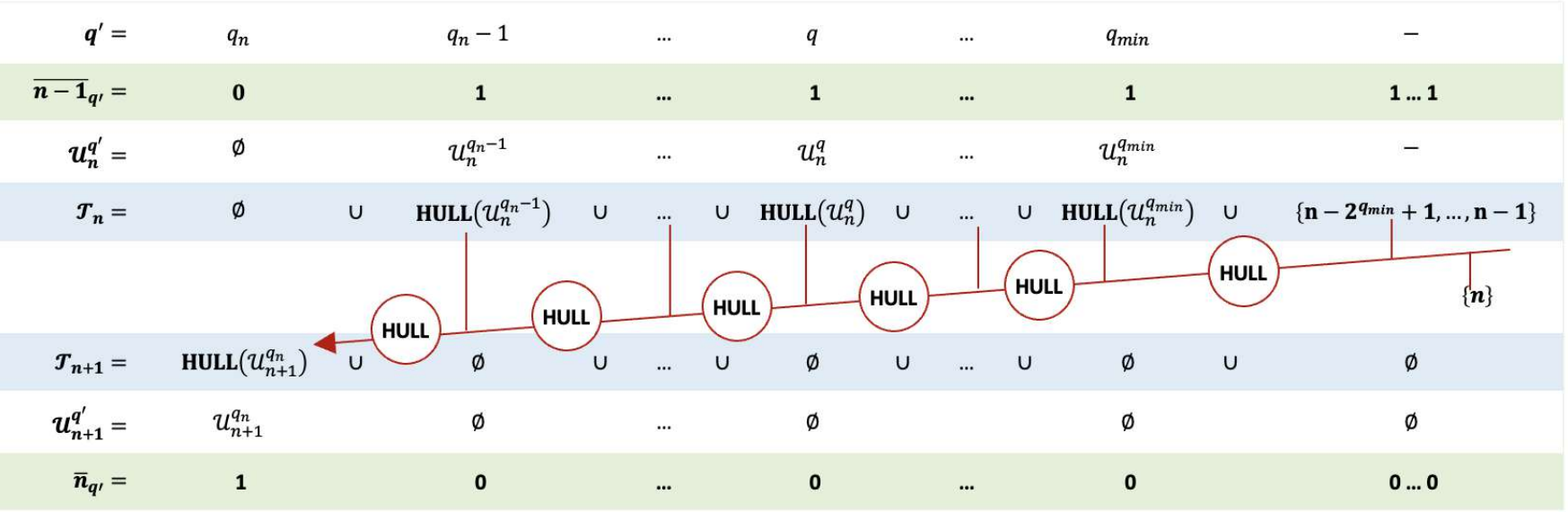}}
\caption{Schematic representation of the while loop of Algorithm \ref{dyadic_online} for $q_{n}$ being the largest $q$ such that $2^q$ divides $n$. First applying the hull algorithm on $\{n - 2^{q_{min}}+1, \ldots, n\}$ (on the right) following the red arrow we iteratively aggregate the result with the hull of points in $\mathcal{U}_n^{q}$  to recover the point on the hull of $\mathcal{U}_{n+1}^{q_{n}}$ (on the left). In such a way we effectively retrieve $\mathcal{T}_{n+1}$ as all $\mathcal{U}_{n+1}^{q}$ are empty for $q < q_{n}$.
}\label{fig:scheme}
\end{figure*}

\begin{example}
In this paragraph we explain how the algorithm proceeds for $n=24$ and $q_{min}=2$.
In base 2, $n-1=10111$. Therefore (see the definition of set $\mathcal{T}_n$)  we already computed the points on the hull of $P(\tau)$ for  $\tau$ in $\mathcal{U}_{24}^4 = \{1, \ldots 16\}$ and $\tau$ in $\mathcal{U}_{24}^2 = \{17, \ldots, 20\}$, and additionally we store all indices $\tau$ larger than $24 - 2^2$ : $\{21, 22, 23\}$.
When we include a new observation $n+1=25$, we need to include a possible change at $n=24$, and here is what we do with the while loop (starting with $q=2$):
\begin{itemize}
    \item for $q=2$: we run \texttt{Quickhull} on points with indices strictly larger than 20 : $\{21, 22, 23, 24\};$
    \item for $q=3$: to recover $\mathcal{U}_{25}^3$ we run \texttt{Quickhull} on points with indices strictly larger than 16 that is the points obtained in the previous step ($q=2$) union those of the hull of $\mathcal{U}_{24}^2$ (that were computed for $n+1=21$ as the point on the hull of $\mathcal{U}_{21}^2$). 
    \item for $q=4$: we stop as the remainder of $24/2^4$ is not 0.
\end{itemize}
\end{example}

We are now ready to give the proof of Lemma \ref{lemma:bound_update_dyadic} in the main text, which we rewrite here to ease reading.
\newcounter{originallemma}
\setcounter{originallemma}{\value{lemma}}
\setcounter{lemma}{0}
\begin{lemma}
Under  Assumptions \ref{assump:Vn_complexity} and \ref{assump:Time_complexity},
the expected time complexity of Algorithm \ref{dyadic_online} is 
   $\mathcal{O}\left(n \log(n)^{p+1}\right)$.
\end{lemma}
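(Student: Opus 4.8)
The plan is to interpret the claim as a bound on the cumulative expected cost of calling the update in Algorithm \ref{dyadic_online} once for each of the $n$ arriving observations (equivalently, the cost of Algorithm \ref{MdFOCuS_algo_dyadic}), and to prove it by a divide-and-conquer accounting. The guiding intuition is that a \texttt{QuickHull} merge at a fixed scale $q$ is \emph{rare}, occurring only once every $2^q$ steps, but is also \emph{cheap}, because by Lemma \ref{lemma:bound_fusion} merging two already-reduced chunks of $2^{q-1}$ points costs only $\mathcal{O}(2^q q^p)$ in expectation. These two effects cancel, leaving a per-scale cost that is linear in $n$ up to a polylogarithmic factor.

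First I would classify the work done during the update at a generic step $m$. Adding the candidate $m$ (line 2) is $\mathcal{O}(1)$, and the while loop executes its body once for each $q$ with $q_{min}\le q\le q_m$, where $q_m$ is the largest exponent with $2^{q_m}\mid m$; equivalently, the scale-$q$ body runs exactly when $2^{q}\mid m$. At the base scale $q=q_{min}$ the call processes the $2^{q_{min}}$ right-most points, costing $Time(2^{q_{min}})=\mathcal{O}(1)$ by Assumption \ref{assump:Time_complexity}. At a scale $q>q_{min}$ the body merges the stored hull $\mathcal{W}_m^{q}$ of a chunk of $2^{q-1}$ consecutive points with the hull of the adjacent chunk of $2^{q-1}$ points produced in the previous iteration, yielding the hull of a block of $2^{q}$ points. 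This is precisely the setting of Lemma \ref{lemma:bound_fusion} with its parameter equal to $2^{q-1}$, so a single such merge has expected cost at most $4c_1c_3\,2^{q-1}(\log 2^{q-1})^p + 4c_2c_3\,2^{q-1} + c_4$.

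Next I would count occurrences and sum by linearity of expectation. As $m$ ranges over $\{1,\dots,n\}$, a scale-$q$ merge occurs exactly $\lfloor n/2^{q}\rfloor\le n/2^{q}$ times, and since the blocks processed at a fixed scale are disjoint and the data is i.i.d.\ under the conditions of Corollary \ref{col:boundnumbervertexes}, each of these merges obeys the per-merge bound above (independence is not needed, only linearity of expectation to add them). Hence the total expected cost from scales $q>q_{min}$ is at most
\begin{equation*}
\sum_{q=q_{min}+1}^{\lfloor \log_2 n\rfloor} \frac{n}{2^{q}}\Big(4c_1c_3\,2^{q-1}(\log 2^{q-1})^p + 4c_2c_3\,2^{q-1} + c_4\Big).
\end{equation*}
In each summand the factor $2^{q}$ cancels against $1/2^{q}$: the first term reduces to $2c_1c_3\,n\,((q-1)\log 2)^p\le 2c_1c_3\,n(\log n)^p$ (since $(q-1)\log 2\le \log n$), the second to $2c_2c_3\,n$, and the third is $\le c_4 n/2^{q}$. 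Summing over the at most $\log_2 n$ scales, and using $\sum_q 2^{-q}\le 1$, gives $\mathcal{O}(n(\log n)^{p+1})$, $\mathcal{O}(n\log n)$, and $\mathcal{O}(n)$ respectively; adding the $\mathcal{O}(n)$ contribution of the base scale $q_{min}$ yields the claimed $\mathcal{O}(n(\log n)^{p+1})$.

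Finally I would verify that the bookkeeping (forming $\mathcal{T}'$ on line 5 and trimming $\mathcal{T}$ on line 7) is dominated: at scale $q$ the set $\mathcal{T}'$ consists of the hull points of $\mathcal{O}(q)$ sub-blocks, of expected total size $\mathcal{O}\!\big(q(\log 2^{q})^p\big)=\mathcal{O}(q^{p+1})$ by Assumption \ref{assump:Vn_complexity}, which is poly-logarithmic and lower order than the hull call it accompanies. The main obstacle is the summation argument itself, namely recognizing that the exponential merge cost $\mathcal{O}(2^{q}q^p)$ is exactly offset by the geometric rarity $n/2^{q}$ of scale-$q$ merges, so that each scale contributes $\mathcal{O}(n\,q^p)$ and the $\mathcal{O}(\log n)$ scales together with $q^p\le(\log n)^p$ produce the final $(\log n)^{p+1}$ factor; a secondary care point is confirming that the halves merged at each scale are genuinely hulls of $2^{q-1}$ i.i.d.\ points, so that Lemma \ref{lemma:bound_fusion} applies verbatim at every merge.
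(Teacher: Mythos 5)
Your proof is correct and follows essentially the same route as the paper's: both arguments account for the cost scale by scale, invoke Lemma \ref{lemma:bound_fusion} for the expected cost of a single merge of two chunks of size $2^{q-1}$, multiply by the $n/2^{q}$ merges occurring at scale $q$, and sum over the at most $\log_2 n$ scales together with an $\mathcal{O}(n)$ contribution from the base scale $q_{min}$. The only addition is your closing paragraph verifying that the bookkeeping on lines 5 and 7 is of lower order, a detail the paper's proof leaves implicit.
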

\setcounter{lemma}{\value{originallemma}}

\begin{proof}
Considering all time steps up to $n+1$ the algorithm used the \texttt{QuickHull} algorithm on all successive (and nonoverlapping) chunks of size $2^q$ for $q$ larger than $q_{min}$ and smaller than $\lfloor \log_2(n) \rfloor$.

For the smallest scale, $n'=2^{q_{min}},$ we have $n/2^{q_{min}}$ chunks and run \texttt{QuickHull} on all points. Using Assumption \ref{assump:Time_complexity}, the worst time per chunk is less than $c_3 (2^{q_{min}})^2+c_4,$ and summing overall chunks we get 
$ n c_3 2^{q_{min}}+\frac{nc_4}{2^{q_{min}}}$ operations. So at the smallest, scale the computational cost is $\mathcal{O}(n)$.

We will now successively consider all larger scales $n'=2^q$ for $q_{min} < q \le \lfloor \log_2(n) \rfloor$.
The hull of a chunk of size $n'=2^q$ is obtained by combining the hulls of two smaller chunks of size $2^{q-1}$.
For any such chunk, we can thus exploit Lemma \ref{lemma:bound_fusion} and  \eqref{eq:bound_fusion} with $|\mathcal{W}_1|=|\mathcal{W}_2|=n'/2=2^{q-1}$ and we get
\begin{equation*}
   \mathbb{E}\left(Time(|\mathcal{W}_1|+|\mathcal{W}_2|) \right)  \leq   4c_1c_3 2^{q-1} (\log 2^{q-1})^p + 4c_2c_3 2^{q-1}+c_4\,.
\end{equation*}
Multiplying by $n/2^q$ to consider all chunks of size $2^q$, we get that the time to process scale $2^q$ is 
\begin{eqnarray*}
   \frac{n}{2^q} \left(4c_1c_3 2^{q-1} (\log 2^{q-1})^p + 4c_2c_3 2^{q-1}+c_4\right) 
    & \leq &   4c_1c_3 \frac{n}{2} (\log 2^{q-1})^p + 4c_2c_3 \frac{n}{2}+c_4\frac{n}{2^q}
    \\
    & \leq &   4c_1c_3 \frac{n}{2} (\log n)^p + 4c_2c_3 \frac{n}{2}+c_4\frac{n}{2^q}\,.
\end{eqnarray*}

So the computational cost at scale $2^q$ is  $\mathcal{O}(n \log(n)^{p})$. Multiplying by $\log_2(n)$ to consider all scales, we get the desired result.
\end{proof}

\subsection{MdFOCuS with dyadic update}

The MdFOCuS Algorithm implementing this update of the set of candidates is presented in Algorithm \ref{MdFOCuS_algo_dyadic}. 
Combining Lemma \ref{lemma:bound_numberofcandidate} and \ref{lemma:bound_update_dyadic} we get an overall complexity that is $\mathcal{O}(n \log(n)^{p+1})$.

\begin{algorithm}[ht]
\caption{MdFOCuS Dyadic algorithm}
\label{MdFOCuS_algo_dyadic}
\begin{algorithmic}[1]
\State {\bf Input 1:}  $\{x_t\}_{t=1,2,\dots}$ $p$-variate independent time series; $thrs$ threshold; $\eta_1$ pre-change parameter (if known); $q_{min}$ min power 2 scale at which we run \texttt{QuickHull}
\State {\bf Outputs:} stopping time $n$ and maximum likelihood change: $\hat{\tau}(.)$ or $\hat{\tau}(\eta_1)$
    \State $\LLR \gets -\infty,\quad \mathcal T \gets \emptyset, \quad n \gets  1$

\While{($\LLR < thrs$)}
     
     \State $n \gets n+1$
     \State $\mathcal T \gets \mathcal T \cup \{n-1\}$

     \State $\hat{m} \gets \max_{\tau \in \mathcal T} \{\max_{\eta_1, \eta_2} \ell_{\tau,n} (\eta_1,\eta_2)\}$ \Comment{ $\eta_1$ is fixed if pre-change known}
    \State $\LLR \gets \hat{m} - \max_{\eta_1, \eta_2} \ell_{n,n} (\eta_1,\eta_2)$  \Comment{ $\eta_1$ is fixed if pre-change known}
    \State $q \gets q_{min}$
     \While{$\Call{remainder}{\frac{n-1}{2^q}} = 0$ }
        \State $ \mathcal T' \gets \{ \tau \in \mathcal T | \tau > n - 2^{q}\}$ 
        \State $ \mathcal T'' \gets \Call{Quickhull}{\{P(\tau)\}_{\tau \in \mathcal T'}}$ 
        \State $\mathcal T \gets \mathcal T \setminus (\mathcal T' \setminus \mathcal T'')$ \Comment{discard vertices in $\mathcal T'$ not in $\mathcal T''$}
        \State $q \gets q+1$
      \EndWhile
\EndWhile    
\State \Return $n$ and $\LLR$
\end{algorithmic}
\end{algorithm}

We implemented Algorithm \ref{MdFOCuS_algo_dyadic} for a change in the mean of multi-dimensional Gaussian signal in R/C++ using the \texttt{qhull} library.
We evaluate the empirical run times of Algorithm \ref{MdFOCuS_algo_dyadic}
with various values of $q_{min}$ using the profiles of Supplementary Material \ref{append4_Runtime_alpha}. In Figure \ref{GM_qmin_runtime} we report the average run time as a function of the dyadic parameter $q_{min}$ for MdFOCuS for a known value of $\eta_1$ with the dyadic update and values of $q_{min}$ in 
${3,\dots,12}$. We observe a minimum run time for $q_{min}=6$ for $p=1$, $q_{min}=7$ for $p=2$ and $q_{min}=8$ for $p=3$. As a consequence in the rest of the simulations we always set $q_{min}$ to $p+5$.

Using the simulation framework of Section \ref{sec-EmpiricalRunTimeGauss} we evaluated the run times of MdFOCuS with dyadic update for dimension $p=1$ to $p=3$ and compared it with run times of MdFOCuS. The average run times are presented in Figure \ref{GM_DyadicRuntime}. Dyadic MdFOCuS is slower than MdFOCuS, by about a factor of 2.

\begin{figure*}[ht]%
   \begin{minipage}{0.48\textwidth}
     \centering
     \includegraphics[width=0.75\linewidth]{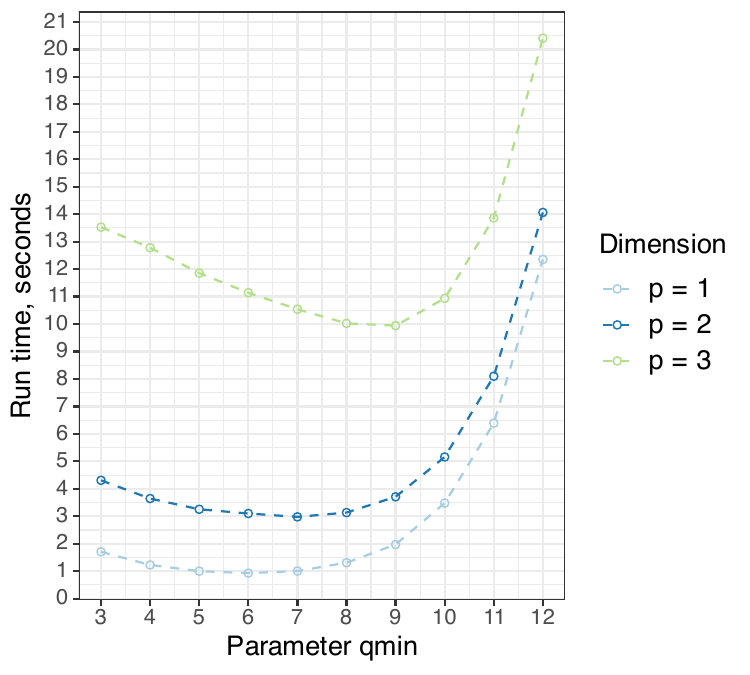}
     \caption{Run time of MdFOCuS with dyadic update as a function of $q_{min}$ for $p$-variate time series with $n=10^5$ data points in dimension $p =1, 2$ and $3$. We simulated $100$ i.i.d. Gaussian data $\mathcal N_p(0,I_p)$ and report the average run time.
}
\label{GM_qmin_runtime}
   \end{minipage} \hfill
   \begin{minipage}{0.48\textwidth}
     \centering
     \includegraphics[width=0.95\linewidth]{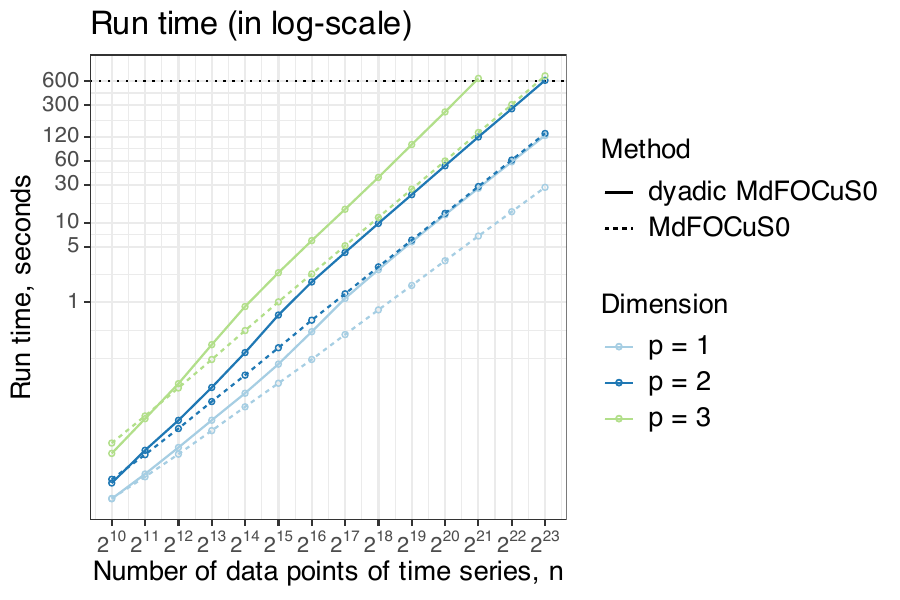}\caption{Run times in seconds of MdFOCuS (dotted) and MdFOCuS with dyadic update (full) both with $q_{min} = 5+p$ in dimension $p=1,2$ and $3$ using time series $x_{1:n}$ (without change, $x_t \sim \mathcal N_p (0,I_p)$ i.i.d.). Run times are averaged over $100$ data sets. We considered a maximum running time of $10$ minutes for the algorithms. That is why we do not report run times for $p=3$ for large signals. }\label{GM_DyadicRuntime}
   \end{minipage}
\end{figure*}

\section{An approximation for large $p$}\label{heuristic_section}

Based on Theorem \ref{th:boundnumbervertexes} and its corollary at step $n$ we expect Algorithm \ref{MdFOCuS_algo} to store $\mathcal{O}(\log^p(n))$ candidates. Considering the calculation of the full and sparse maximum likelihood in lines 7 and 8 should thus have a complexity of  $\mathcal{O}(p\log^p(n)).$
The power $\log(n)^p$ is problematic making the algorithm impractical for $p$ larger than $6$, as exemplified in Figure \ref{GM_Runtime}. 

Here we propose a heuristic solution to this problem. Our idea is to get a subset of the point on the hull considering projections on at most $\tilde{p}$ variates.

For any ordered subset $S=\{S_1, \ldots, S_{|S|}\}$ of $\{1, \ldots, p\}$, with $|S|$ the size of $S$, we define $x_t^{[S]}$ the vector in $\mathbb{R}^{|S|}$ such that the $i-th$ coordinate of $x_t^{[S]}$ is the $S_i$-coordinate of $x_t$. We then define $P^{[S]}(\tau)$ accordingly:
\begin{equation*}
    P^{[S]}(\tau) = \left(\tau, \sum_{t=1}^{\tau} x_t^{[S]}\right).
\end{equation*}
Similarly, we call $\mathcal{T}^{[S]}_n$ the set of index $\tau$ such that $P^{[S]}(\tau)$ is on the hull of $\{P^{[S]}(\tau)\}_{\tau \in\{1, \ldots, n-1\}}.$
We now consider $\mathcal{S}$ a subset of all subsets of $\{1, \ldots, p\}$ (i.e $\mathcal{P}(\{1, \ldots, p\})$ or $2^{\{1, \ldots, p\}}$) and we get an inner approximation of $\mathcal{T}_n$ taking a union over all $\mathcal{T}^{[S]}_n$: $$ \cup_{S \in \mathcal{S}}\mathcal{T}^{[S]}_n \subseteq \mathcal{T}_n.$$

In our simulations as a minimalist implementation of this approximation we considered $\tilde{p} = 2$ and subsets of the form $\{\tilde{p}j+1, \ldots, \tilde{p}j+\tilde{p}\}$ for $j \leq p/\tilde{p}.$ We have $p/\tilde{p}$ such sets and under the condition of Theorem \ref{th:boundnumbervertexes} we thus expect to store $\mathcal{O}(p\log^{\tilde{p}}(n))$ candidate changepoints rather than $\mathcal{O}(\log^p(n))$ with the exact Algorithm \ref{MdFOCuS_algo}.
For $p=100$ and $n=10000$, our approximation with $\tilde{p}=2$, implemented in R runs on average in less than 100 seconds. More generally it can be seen in Figure \ref{fig:runtime_heuristic} that run times of our R implementation of MdFOCuS, and our 2d approximation for larger $p$ are comparable to \texttt{ocd} \cite{chen2020highdimensional}.

\begin{algorithm}[ht]
\caption{MdFOCuS $\tilde{p}$d approximation}
\label{MdFOCuS_approx}
\begin{algorithmic}[1]
\State {\bf Input 1:} $p$-variate independent time series $\{x_t\}_{t=1,2,\dots}$
\State {\bf Input 2:} threshold $thrs$ and $\eta_1$ if known pre-change parameter
\State {\bf Input 3:} $\mathcal{S}$ a subset of all subsets of $\{1, \ldots, p\}$, with for all $S \in \mathcal{S}$ :  $|S| \leq \tilde{p}$
\State {\bf Input 4:} For all $S \in \mathcal{S}$ a list of index $\mathcal{T}^{[S]}$ 
\State {\bf Input 5:} For all $S \in \mathcal{S}$ a $S$ specific pruning index $maxSize^{[S]}$ 
\State {\bf Output:} stopping time $n$ and approximated maximum likelihood change : $\hat{\tau}(.)$ or $\hat{\tau}(\eta_1)$
    \State $\LLR \gets -\infty,\quad \text{For all } $S$ \quad \mathcal{T}^{[S]} \gets  \emptyset, \quad n \gets  0$ 
\While{($\LLR < thrs$)}  
     \State $n \gets n+1$
     \For {all $S \in \mathcal{S}$}
     \State $\mathcal T^{[S]} \gets \mathcal T^{[S]} \cup \{n-1\}$
     \EndFor
     \State $\hat{m} \gets \max_{\tau \in \cup_{S \in \mathcal{S}}\mathcal T^{[S]}} \{\max_{\eta_1, \eta_2} \ell_{\tau,n} (\eta_1,\eta_2)\}$ \Comment{$\eta_1$ is fixed if pre-change known}
    \State $\LLR \gets \hat{m} - \max_{\eta_1, \eta_2} \ell_{n,n} (\eta_1,\eta_2)$  \Comment{$\eta_1$ is fixed if pre-change known}
     \For{all $S \in \mathcal{S}$}
        \If{($|\mathcal{T}^{[S]}| > maxSize^{[S]}$)}
        \State $ \mathcal{T}^{[S]} \gets \Call{Quickhull}{\{P^{[S]}(\tau)\}_{\tau \in \mathcal T^{[S]}}}$
         \State $maxSize^{[S]} \gets   \lfloor \alpha|\mathcal{T}^{[S]}| + \beta \rfloor $ 
        \EndIf
        \EndFor
\EndWhile
\State \Return $n$ and  $\LLR$
\end{algorithmic}
\end{algorithm}


\section{On the probability of false alarms, average-run-length and average detection delay in the multivariate Gaussian case}\label{app:stat_control}

\subsection{Definitions and problem statement} \label{app:stat_control_definition}
In this section, we consider the multivariate Gaussian case. That is, we consider a sequence $\{y_t\}_{t=1,2, \ldots}$ with elements
in $\mathbb{R}^p$, with $y_t= \mu_t + \varepsilon_t$,  and $\varepsilon_t$ i.i.d $\mathcal{N}_p(0, I_p)$.
At time $n$, for any possible change $\tau < n$, we consider two types of statistics: 
\begin{itemize}
    \item Ranked-based statistics, $\RankS{s}{\tau}{n}$, considering the $s$ dimensions with the most evidence for a change at $\tau$;
    \item Threshold-based statistics, $\ThreS{a}{\tau}{n}$, considering, as in \cite{chen2020highdimensional}, only the dimensions for which the evidence for a change exceeds the threshold $a$.
\end{itemize}

To formally present these statistics, we first define the unsorted per-dimension evidence for a change at $\tau$ at time $n$ knowing the pre-change mean to be 0, $\Evid{i}{\tau}{n}(0)$ or not knowing the pre-change mean, $\Evid{i}{\tau}{n}(.)$:
\begin{eqnarray*}
   \Evid{i}{\tau}{n}(0)  =   \frac{1}{\sqrt{n-\tau}}  \sum_{t=\tau+1}^n \varepsilon^i_t  ,\qquad 
   \Evid{i}{\tau}{n}(.)  =   \sqrt{\frac{\tau(n-\tau)}{n}}\left(\frac{1}{n-\tau}  \sum_{t=\tau+1}^n \varepsilon^i_t-\frac{1}{\tau}  \sum_{t=1}^\tau \varepsilon^i_t  \right) ,\\
   \end{eqnarray*}
where $\varepsilon^i_t$ is the $i-$th coordinate of $\varepsilon_t$ in $\RR^p$. We also define the sorted per-dimension evidence for a change at $\tau$ at time $n$, $\RankEvid{j}{\tau}{n}(0)$ as the $j$-th highest value out of all $\Evid{i}{\tau}{n}(0)$. Similarly, for the unknown pre-change mean scenario, we define $\RankEvid{j}{\tau}{n}(.)$ as the $j$-th highest value out of all $\Evid{i}{\tau}{n}(.)$. Based on that, we now define our rank and threshold-based statistics knowing the pre-change mean to be 0 as:
\begin{equation}\label{eq:rankAndThresholdStats}
       \RankS{s}{\tau}{n}(0)  =  \sum_{j=1}^s (\RankEvid{i}{\tau}{n}(0))^2 \quad \text{and} \quad
   \ThreS{a}{\tau}{n}(0) =  \sum_{i=1}^p (\Evid{i}{\tau}{n}(0))^2 \mathbbm{1}_{|\Evid{i}{\tau}{n}(0)| \geq a}\, ,
\end{equation}
where $\mathbbm{1}$ denotes the indicator function.
We define similar statistics if the pre-change mean is not known:
\begin{equation}\label{eq:rankAndThresholdStats2}
       \RankS{s}{\tau}{n}(.)  =  \sum_{j=1}^s (\RankEvid{i}{\tau}{n}(.))^2 \quad \text{and} \quad
   \ThreS{a}{\tau}{n}(.) =  \sum_{i=1}^p (\Evid{i}{\tau}{n}(.))^2 \mathbbm{1}_{|\Evid{i}{\tau}{n}(.)| \geq a}\,.
\end{equation}

For simplicity, when this does not cause confusion, we will omit the $(0)$ and $(.)$ notations. It will sometimes make sense to consider separately the ranked-1 statistics $\RankS{1}{\tau}{n}$. We will refer to $\RankS{p}{m}{n} = \ThreS{a=0}{m}{n}$ as the dense statistic.

First we summarise our results:
\begin{enumerate}
    \item For all ranked and thresholded statistics, we provide, in Section \ref{app-sec:falsealarm}, time-varying thresholds to control their false alarm rate. To be specific, considering a generic ranked or thresholded statistic $\tilde S_{\tau,n}$, we provide a varying threshold $c_n(\alpha)$ for which:
    \[ \mathbb{P}( \exists \ \tau < n \in \mathbb{N} \ \text{such that }\, \tilde S_{\tau,n} \geq  c_n(\alpha) ) \ \leq \ \alpha.\]
    Importantly, our thresholds are also valid if the pre-change mean is unknown.
    As a byproduct, we can calibrate a fixed threshold to control the Average Run Length, $\gamma$. That is, we can define a threshold $c(\gamma)$ such that if all $\mu_t$ are equal to $0$ then $E(\tilde T) \geq \gamma, $ where $\tilde T$ is the first time our generic statistic passes the threshold.
    
    \item Second, in Section \ref{app-sub-ADD}, assuming the pre-change mean is known, and for a fixed threshold $c$, we control using the optional stopping time theorem, the Average Detection Delay (ADD). To be specific, assuming $\mu_t = 0$ for all $t \leq  \tau^*$ and $\mu_t = \delta$ otherwise we get a bound on 
    $$\sup_{y_1,\ldots,y_{\tau^*}} \mathbb{E}( \tilde T-\tau^*|\tilde T\geq \tau^*),$$
    involving only $\lVert \delta \rVert$ and the threshold.
   For thresholded statistics, the same result holds by replacing $c$ by $c+pa^2$ (see remark \ref{rem:addthreshold}).
   
    \item Finally, in Section \ref{app-sub-ADD_Bound}, for all our ranked and thresholded statistics, assuming we know the pre-change mean, we provide probabilistic bounds on the Detection Delay.
    To be specific, again assuming $\mu_t = 0$ for all $t \leq  \tau^*$ and $\mu_t = \delta$ otherwise, we bound with probability $1- \alpha$ the detection time, $\tilde T - \tau^*$, by a term involving only $\lVert \delta \rVert$, the threshold, and $\alpha$.
\end{enumerate}


\subsection{Probability of false alarms for all statistics}\label{app-sec:falsealarm}
\begin{proposition}\label{coro:H0_all}
     For a sequence without change : $y_t=\varepsilon_t$ with $\varepsilon_t$ i.i.d. $\mathcal{N}_p(0, I_p)$.
    Considering a time-varying $x_{n}(\alpha) = 4\log(n) - \log(\alpha),$ we can control the false alarm rate of ranked and thresholded statistics at level $\alpha$. 
    That is for a generic statistic $\GenS{\tau}{n}$ we control:
$$\mathbb{P}( \exists \ \tau < n \in \mathbb{N} \ \text{such that }\, \GenS{\tau}{n} \geq c_n(\alpha) ) \ \leq \ \alpha,$$
using the thresholds in table \ref{tab:statistics_bounds}.

\begin{table}[h]
    \centering
    \renewcommand{\arraystretch}{2} 
    \begin{tabular}{|c|c|}
        \hline
        Statistic $\GenS{\tau}{n}$ & Threshold $c_n(\alpha)$ \\
        \hline
        $\RankS{1}{\tau}{n}$ &
        $2x_{n}(\alpha) + 2 \log(2p)$ \\
        \hline
        $\RankS{s}{\tau}{n}$ &
         $2(x_{n}(\alpha) + \log{p \choose s})   + 2\sqrt{s(x_{n}(\alpha) + \log{p \choose s})}  + s$ \\
        \hline
        $ \ThreS{a}{\tau}{n}$ &
        $4 x_{n}(\alpha) + 6pe^{-a^2/8}$\\
        \hline
    \end{tabular}
    \caption{Probability bounds for our statistics with $x_{n}(\alpha) = 4\log(n) - \log(\alpha)$}
    \label{tab:statistics_bounds}
\end{table}
\end{proposition}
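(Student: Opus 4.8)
The plan is to reduce all three statistics to functionals of independent standard Gaussians, bound the tail for a single candidate change $\tau$, and then apply a union bound over $\tau \in \{1, \dots, n-1\}$. The thresholds in Table \ref{tab:statistics_bounds} are engineered so that each per-$\tau$ tail probability is at most $e^{-x_n(\alpha)} = \alpha n^{-4}$, whence the union bound gives $(n-1)\alpha n^{-4} < \alpha n^{-3} \leq \alpha$; the surplus powers of $n$ are precisely the slack exploited by the average-run-length argument of Remark \ref{app-rem-ARL}.

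First I would record the key distributional fact. Under $H_0$ we have $y_t = \varepsilon_t$, and for a fixed pair $(\tau,n)$ each coordinate evidence $\Evid{i}{\tau}{n}(0)$ is a normalised sum of independent $\mathcal{N}(0,1)$ variables, hence $\mathcal{N}(0,1)$; a direct variance computation shows the same for the unknown-mean version $\Evid{i}{\tau}{n}(.)$, whose scaling $\sqrt{\tau(n-\tau)/n}$ is chosen exactly to give unit variance. Since distinct coordinates involve disjoint components of the $\varepsilon_t$, the family $\{\Evid{i}{\tau}{n}\}_{i=1}^p$ is i.i.d. $\mathcal{N}(0,1)$ for both variants, which is why one analysis covers the known and unknown pre-change mean. (Across different $\tau$ the evidences are dependent, but this is immaterial: the union bound over $\tau$ requires no independence.)

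For the rank statistics the per-$\tau$ bounds are routine. Writing $Z_i = \Evid{i}{\tau}{n}$, the rank-$1$ statistic is $\max_i Z_i^2$, so $\mathbb{P}(\RankS{1}{\tau}{n} \geq c) \leq p\,\mathbb{P}(|Z_1| \geq \sqrt{c}) \leq 2p\,e^{-c/2}$; substituting $c = 2x_n(\alpha) + 2\log(2p)$ yields exactly $e^{-x_n(\alpha)}$. For general $s$, note $\RankS{s}{\tau}{n} = \max_{|S|=s}\sum_{i\in S} Z_i^2$, a maximum over $\binom{p}{s}$ chi-squared variables with $s$ degrees of freedom; a union bound over subsets combined with the standard one-sided tail $\mathbb{P}(\chi^2_s \geq s + 2\sqrt{st} + 2t) \leq e^{-t}$ taken at $t = x_n(\alpha) + \log\binom{p}{s}$ again produces $\binom{p}{s}\,e^{-t} = e^{-x_n(\alpha)}$ and reproduces the tabulated threshold.

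The thresholded statistic is the main obstacle, since $\ThreS{a}{\tau}{n} = \sum_i Z_i^2\,\mathbbm{1}_{|Z_i|\geq a}$ is a sum of i.i.d. sub-exponential terms with a random number of active summands. Here I would use a Chernoff bound at a fixed $\theta \in (0,1/2)$. By independence across coordinates, $\mathbb{P}(\ThreS{a}{\tau}{n} \geq c) \leq e^{-\theta c}\big(\mathbb{E}[e^{\theta Z_1^2 \mathbbm{1}_{|Z_1|\geq a}}]\big)^p$, and since the summand vanishes on $\{|Z_1| < a\}$, $\mathbb{E}[e^{\theta Z_1^2 \mathbbm{1}_{|Z_1|\geq a}}] = 1 + \mathbb{E}[(e^{\theta Z_1^2}-1)\mathbbm{1}_{|Z_1|\geq a}] \leq 1 + \mathbb{E}[e^{\theta Z_1^2}\mathbbm{1}_{|Z_1|\geq a}]$. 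The truncated-Gaussian integral evaluates to $\frac{1}{\sqrt{1-2\theta}}\,2\bar\Phi\big(a\sqrt{1-2\theta}\big) \leq \frac{1}{\sqrt{1-2\theta}}\,e^{-(1-2\theta)a^2/2}$, so choosing $\theta = 3/8$ (giving $1-2\theta = 1/4$) produces the clean bound $2e^{-a^2/8}$. Using $\log(1+x)\leq x$ and raising to the $p$-th power gives $\mathbb{P}(\ThreS{a}{\tau}{n} \geq c) \leq \exp(-\tfrac38 c + 2p\,e^{-a^2/8})$. Substituting $c = 4x_n(\alpha) + 6p\,e^{-a^2/8}$ yields exponent $-\tfrac32 x_n(\alpha) - \tfrac14 p\,e^{-a^2/8} \leq -x_n(\alpha)$, so the per-$\tau$ probability is again at most $e^{-x_n(\alpha)}$; the constants $4$ and $6$ are chosen just large enough that $\tfrac38\cdot 4 \geq 1$ absorbs the $x_n(\alpha)$ term while $\tfrac38\cdot 6 \geq 2$ dominates the mean-correction $2p\,e^{-a^2/8}$. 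Summing these per-$\tau$ bounds over $\tau \in \{1,\dots,n-1\}$ finishes all three cases.
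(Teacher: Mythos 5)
Your per-pair concentration bounds are all correct: the rank-$1$ and rank-$s$ bounds coincide with what the paper assembles from a union bound over coordinates (resp.\ over $\binom{p}{s}$ subsets plus the Laurent--Massart $\chi^2_s$ tail), and your Chernoff computation at $\theta=3/8$ is a valid, self-contained derivation of the thresholded-statistic tail that the paper simply imports as Lemma 17 of \cite{chen2020highdimensional}. The genuine gap is in the final union bound. In the proposition the time $n$ is existentially quantified along with $\tau$: the event is that \emph{some} pair $(\tau,n)$ with $\tau<n$, $n\in\mathbb{N}$, has $\GenS{\tau}{n}\geq c_n(\alpha)$, i.e.\ a uniform-in-time (infinite-horizon) false-alarm guarantee. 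This is how the paper's own proof treats it --- Lemma \ref{lem:peeling_all_change} peels over dyadic blocks $n\in[2^j,2^{j+1})$, $j=1,2,\dots$, counting roughly $3\cdot 2^{2j-1}$ pairs $(\tau,n)$ per block and summing over all $j$ --- and it is the only reading under which Remark \ref{app-rem-ARL} works: the ARL argument needs ``no alarm at any time $n\leq N$'' to have probability at least $1-\alpha$ from a \emph{single} application of the proposition, using monotonicity of $c_n$. Your union bound runs only over $\tau\in\{1,\dots,n-1\}$ at a fixed $n$, so what you have actually proved is $\sup_n \mathbb{P}\bigl(\exists\,\tau<n:\GenS{\tau}{n}\geq c_n(\alpha)\bigr)\leq\alpha$, which is strictly weaker than the claim. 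Relatedly, your closing remark that the surplus powers of $n$ are ``the slack exploited by the ARL argument'' is a misreading: Remark \ref{app-rem-ARL} never uses that surplus; the surplus is needed inside the proposition's own proof, to pay for the union over $n$.

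The good news is that your bound is exactly strong enough to close the gap in one line: summing your fixed-$n$ bound $(n-1)\alpha n^{-4}<\alpha n^{-3}$ over $n\geq 2$ gives $\alpha\bigl(\zeta(3)-1\bigr)<\alpha$, so a flat union bound over all pairs $(\tau,n)$ finishes all three cases. With that addition your argument is correct and is arguably simpler than the paper's route: the dyadic peeling in Lemma \ref{lem:peeling_all_change} organizes the same pair-counting in blocks and actually certifies a slightly tighter admissible threshold, $\tilde x_n(\alpha)=2\log(n)-\log(\alpha)+\log\bigl(\log_2(n)(\log_2(n)+1)\bigr)+\log(3/2)$, which the authors then deliberately relax to $x_n(\alpha)=4\log(n)-\log(\alpha)$ --- the same $4\log(n)$ level your direct union bound hits without any peeling.
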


\begin{proof}
We follow Step 1 and Step 2 of the proof of Theorem 1 in \cite{yu2023note} using the peeling concentration bounds of our Lemma \ref{lem:peeling_all_change}.
\end{proof}

Importantly, when using MdFOCuS, one typically considers several statistics. Denoting $\#Stats$ the number of statistics, we control simultaneously all of them at level $\alpha'$ taking $\alpha = \alpha'/\#Stats$ in the thresholds of Table \ref{tab:statistics_bounds}.

\begin{remark}\label{app-rem-ARL}
Using Proposition \ref{coro:H0_all} we can also control the Average Run Length (ARL) using a fixed threshold, $c$. Consider the Rank-1 statistic, $\RankS{1}{\tau}{n}$, as an example. We note that its threshold $2x_{n}(\alpha) + 2 \log(2p)$ increases with $n$. Thus, using a fixed threshold $c = 2x_{N}(\alpha) + 2 \log(p)$ for all time $n$, the probability of stopping before $N$ is less than $\alpha$, and therefore the ARL is larger than $(1-\alpha)N$. To control the ARL at any desired level $\gamma$, it suffices to choose a sufficiently large $N$ and a sufficiently small probability $\alpha$ such that $(1-\alpha)N\geq \gamma$.
\end{remark}

\subsection{Average detection delay for a fixed threshold and knowing the pre-change mean}\label{app-sub-ADD}

In this section, we assume that a change in mean occurs immediately after time $\tau^*$. Assume that we are using the dense statistic $\RankS{p}{\tau}{n}(0) = \ThreS{0}{\tau}{n}(0)$ with a fixed threshold (following Remark \ref{app-rem-ARL}) and let $T$ be the first time that our test statistic exceeds the threshold. The average detection delay is defined as
\begin{equation*}
\mbox{ADD}=\sup_{y_1,\ldots,y_{\tau^*}} \mathbb{E}( T-\tau^*|T\geq \tau^*).
\end{equation*}
That is, it is the worst-case expected time from the change until it is detected, conditional on no detection before $\tau^*$. 

\begin{lemma}\label{lemma:ADD}
For a change of size $\delta$, with $\lVert \delta \rVert^2$ its Euclidean squared norm, the average detection delay for the dense statistic with a known pre-change mean ($\RankS{p}{\tau}{n}(0) = \ThreS{0}{\tau}{n}(0)$) and a fixed threshold $c$ satisfies
\[
\mbox{ADD}\leq \frac{c+\lVert\delta\rVert\sqrt{8/\pi}}{\lVert\delta\rVert^2}+1.
\]
\end{lemma}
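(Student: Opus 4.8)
The plan is to dominate the detection time by a one-dimensional first-passage time and then bound its expectation with Wald's identity, the only genuinely delicate point being the control of the overshoot. First I would reduce to the single candidate $\tau=\tau^*$. At each time $n$ the running statistic is $\max_{\tau<n}\RankS{p}{\tau}{n}(0)\ge\RankS{p}{\tau^*}{n}(0)$, so the detection time obeys $T\le\tau^*+M$, where $M:=\inf\{m\ge1:\RankS{p}{\tau^*}{\tau^*+m}(0)\ge c\}$. The key observation is that $\RankS{p}{\tau^*}{\tau^*+m}(0)$ is a function of the post-change observations $y_{\tau^*+1},\dots,y_{\tau^*+m}$ alone, hence independent of $(y_1,\dots,y_{\tau^*})$ and of the event $\{T\ge\tau^*\}$, which is measurable with respect to the pre-change data. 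Consequently $\mathbb{E}(T-\tau^*\mid T\ge\tau^*, y_1,\dots,y_{\tau^*})\le\mathbb{E}(M)$ for every pre-change path, and taking the supremum gives $\mathrm{ADD}\le\mathbb{E}(M)$.

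Next I would exhibit a random walk with positive drift lying below the statistic. Writing $m=n-\tau^*$ and $W_m=\sum_{j=1}^m\varepsilon_{\tau^*+j}$, under the change $\Evid{i}{\tau^*}{\tau^*+m}(0)$ equals $\sqrt{m}\,\delta^i+\tfrac{1}{\sqrt m}W_m^i$, so that
\begin{equation*}
\RankS{p}{\tau^*}{\tau^*+m}(0) = m\|\delta\|^2 + 2\langle\delta,W_m\rangle + \tfrac1m\|W_m\|^2 \ \ge\ m\|\delta\|^2 + 2\langle\delta,W_m\rangle =: Z_m .
\end{equation*}
Dropping the nonnegative quadratic term can only delay the crossing, so $M\le M^\ast:=\inf\{m\ge1:Z_m\ge c\}$ and it suffices to bound $\mathbb{E}(M^\ast)$. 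Here $Z_m-m\|\delta\|^2=2\langle\delta,W_m\rangle$ is a mean-zero martingale with i.i.d.\ increments $2X_j$, where $X_j=\langle\delta,\varepsilon_{\tau^*+j}\rangle\sim\mathcal N(0,\|\delta\|^2)$, and $Z_m$ has constant drift $\|\delta\|^2$ per step.

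I would then apply the optional stopping theorem. After checking $\mathbb{E}(M^\ast)<\infty$ — immediate from the strictly positive drift and Gaussian tails, which make $\mathbb{P}(Z_m<c)$ eventually exponentially small — Wald's identity gives $\mathbb{E}\langle\delta,W_{M^\ast}\rangle=0$, hence $\mathbb{E}(Z_{M^\ast})=\|\delta\|^2\,\mathbb{E}(M^\ast)$. Since $Z_{M^\ast-1}<c$,
\begin{equation*}
Z_{M^\ast} \le c + (Z_{M^\ast}-Z_{M^\ast-1}) = c + \|\delta\|^2 + 2X_{M^\ast},
\end{equation*}
so that $\|\delta\|^2\,\mathbb{E}(M^\ast)\le c+\|\delta\|^2+2\,\mathbb{E}(X_{M^\ast})$. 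Dividing by $\|\delta\|^2$ produces the additive $+1$ from the drift term and reduces the whole statement to the bound $2\,\mathbb{E}(X_{M^\ast})\le\|\delta\|\sqrt{8/\pi}$; notice that this target constant is exactly $\mathbb{E}|2\langle\delta,\varepsilon\rangle|$, the expected absolute increment of the martingale part, and that $\mathbb{E}|X_1|=\|\delta\|\sqrt{2/\pi}$.

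The main obstacle is precisely this final-increment (overshoot) estimate. The naive route $\mathbb{E}(X_{M^\ast})=\sum_m\mathbb{E}(X_m\mathbbm{1}_{M^\ast=m})$, after replacing the crossing indicator by $1$, only delivers $\mathbb{E}(X_{M^\ast})\le(\|\delta\|/\sqrt{2\pi})\,\mathbb{E}(M^\ast)$, which is circular and useless. The point is that the crossing constraint must be retained: for a walk with positive drift and symmetric Gaussian fluctuations the undershoot $c-Z_{M^\ast-1}$ stabilises at the increment scale $O(\|\delta\|)$, so the expected final increment stays bounded by its stationary value independently of $c$ (a small-drift sanity check confirms $\mathbb{E}(X_{M^\ast})=O(\|\delta\|)$ even as $\mathbb{E}(M^\ast)\to\infty$). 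I would therefore write $2X_{M^\ast}\le|2X_{M^\ast}|$ and argue, via a renewal/ladder-height (Lorden-type) overshoot inequality specialised to Gaussian steps, that $\mathbb{E}|2X_{M^\ast}|\le\mathbb{E}|2X_1|=\|\delta\|\sqrt{8/\pi}$. Turning this heuristic into the clean constant bound — rather than the easy $\mathbb{E}(M^\ast)$-dependent one — is the delicate part; the remaining pieces (integrability of $M^\ast$, the moment $\mathbb{E}|X_1|=\|\delta\|\sqrt{2/\pi}$, and the closing algebra) are routine.
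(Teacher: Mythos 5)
Your steps 1--3 coincide with the paper's own proof: the reduction to the single candidate $\tau^*$, the linear walk $Z_m$ (which is exactly the paper's statistic $A_t$, since $\max_\delta A_m=\RankS{p}{\tau^*}{\tau^*+m}(0)$), and the optional-stopping identity $\mathbb{E}(Z_{M^\ast})=\lVert\delta\rVert^2\,\mathbb{E}(M^\ast)$. The genuine gap is the step you yourself flag as delicate, and it is not merely unproven: the inequality $\mathbb{E}\lvert 2X_{M^\ast}\rvert\le\mathbb{E}\lvert 2X_1\rvert$ is false. Write $\xi_m=\lVert\delta\rVert^2+2X_m$ for the increments, $U=c-Z_{M^\ast-1}\ge0$ for the undershoot and $O=Z_{M^\ast}-c\ge0$ for the overshoot. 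Your pathwise bound $Z_{M^\ast}\le c+\xi_{M^\ast}$ discards $U$, and
\begin{equation*}
\mathbb{E}(\xi_{M^\ast})=\mathbb{E}(O)+\mathbb{E}(U),
\end{equation*}
i.e.\ the increment straddling the boundary pays for both the overshoot and the undershoot; it is size-biased. Conditionally on crossing from a gap $u$, its mean is $m(u)=\mathbb{E}(\xi\mid\xi\ge u)$, which is increasing in $u$ and strictly exceeds $m(0)=\mathbb{E}(\xi\mid\xi\ge0)$ whenever $u>0$, an event of positive probability. Since $m(0)=\lVert\delta\rVert^2+2\lVert\delta\rVert\phi(\lVert\delta\rVert/2)/\Phi(\lVert\delta\rVert/2)$ (with $\phi,\Phi$ the standard normal density and c.d.f.) already sits within $O(\lVert\delta\rVert^2)$ of your target $\lVert\delta\rVert^2+\lVert\delta\rVert\sqrt{8/\pi}$, while the excess $\mathbb{E}[m(U)]-m(0)$ is of order $\lVert\delta\rVert$, the target fails for small $\lVert\delta\rVert$. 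Quantitatively, for $\lVert\delta\rVert$ small and $c/\lVert\delta\rVert$ large, renewal theory (Siegmund's corrections) gives $\mathbb{E}(O)\approx\mathbb{E}(U)\approx 2\lVert\delta\rVert\rho$ with $\rho\approx0.583$, so $2\,\mathbb{E}(X_{M^\ast})\approx 2.33\,\lVert\delta\rVert$, strictly larger than $\lVert\delta\rVert\sqrt{8/\pi}\approx1.60\,\lVert\delta\rVert$.

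A Lorden-type ladder-height inequality cannot rescue this route either, because it bounds the overshoot alone, not the full final increment: it gives $\mathbb{E}(O)\le\mathbb{E}((\xi^+)^2)/\mathbb{E}(\xi)\le\lVert\delta\rVert^2+4$, hence the valid but different bound $\mbox{ADD}\le(c+4)/\lVert\delta\rVert^2+1$, which does not recover the constant $\sqrt{8/\pi}$ when $\lVert\delta\rVert<\sqrt{2\pi}$. The paper closes the argument precisely by never splitting off the final increment. Conditioning on the pre-crossing value,
\begin{equation*}
\mathbb{E}\bigl(Z_{M^\ast}\mid Z_{M^\ast-1}=A\bigr)=A+m(c-A)=c+e(c-A),\qquad e(u):=\mathbb{E}(\xi-u\mid\xi\ge u),
\end{equation*}
and the mean-excess function $e$ of the (log-concave) Gaussian increment is decreasing, so this conditional expectation is increasing in $A$; as $A<c$ almost surely, $\mathbb{E}(Z_{M^\ast})\le c+e(0)=c+\mathbb{E}(\xi\mid\xi\ge0)\le c+\lVert\delta\rVert^2+2\lVert\delta\rVert\sqrt{2/\pi}$, using $\phi(a)/\Phi(a)\le\sqrt{2/\pi}$ for $a\ge0$. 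Dividing by $\lVert\delta\rVert^2$ gives the lemma. In short: you must bound the overshoot $\mathbb{E}(O)\le e(0)$ through this conditioning/monotonicity argument, rather than the final increment $\mathbb{E}(\xi_{M^\ast})=\mathbb{E}(O)+\mathbb{E}(U)$, which is strictly larger and breaks the stated constant.
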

\begin{proof}
Let $\tau^*$ be the time of the change, and $T$ the time at which the dense statistic first exceeds $c$. Consider  the test statistic for a change at $\tau$ of size $\delta$, which at time $\tau^*+t$, for $t=1,2,\ldots$ is of the form:

\begin{equation} \label{eq:ADD1}
A_t= \sum_{i=1}^{t} \left(2\delta^Ty_{\tau^*+i}-\lVert \delta \rVert^2\right).
\end{equation}

Note that $\max_{\delta} A_t  = \RankS{p}{\tau^*}{\tau^*+t}$. 
Define $A_0=0$.

Now for any data $y_{1:t+\tau^*}$ we have $A_t\leq \max_{\tau=0}^{\tau+t-1} \RankS{p}{\tau}{t+\tau^*}(0)$. Let $T'$ be the time that $A_t$ first exceeds $c$. It follows that conditional on $T>\tau^*$, we have $T'\geq T-\tau^*$. Thus we can bound the $\mbox{ADD}$ by $\mathbb{E}(T')$.

Denote each term in the sum on the right-hand side of (\ref{eq:ADD1}) by $a_i$. We have these terms are independent, normally distributed with mean $\mathbb{E}(a_i)=\delta^T\delta=\lVert\delta\rVert^2$ and variance $4\lVert\delta\rVert^2$. 
For each $t$ it is trivial to see that $\mathbb{E}(|A_t|)<\infty$, and thus $A_t-t\lVert\delta\rVert^2$ is a Martingale. It is also straightforward to show that $T'$ is a stopping time and $\mathbb{E}(T')<\infty$ so we can apply the optional stopping theorem. This gives
\[
\mathbb{E}\left( A_{T'}-T'\lVert\delta\rVert^2\right)=0.
\]
Thus $\mathbb{E}(T')=\mathbb{E}(A_{T'})/\lVert\delta\rVert^2$. We can bound $\mathbb{E}(A_{T'})$ from above by considering the distribution of $A_{T'}$ conditional on $A_{T'-1}$. We have
\[
\mathbb{E}(A_{T'}|A_{T'-1}=A) = \mathbb{E}\left(A+\lVert\delta\rVert^2+\sqrt{4\lVert\delta\rVert^2}\epsilon \ | \ A+\lVert\delta\rVert^2+\sqrt{4\lVert\delta\rVert^2}\epsilon\geq c\right),
\]
where $\epsilon$ has a standard normal distribution.
This expectation is monotone increasing in $A$, and as $A<c$ we can bound $\mathbb{E}(A_{T'})$ by the value of the conditional expectation in the limit as $A\rightarrow c$ from below. This gives
\[
\mathbb{E}(A_{T'}) \leq \mathbb{E}(c+\lVert\delta\rVert^2+\sqrt{2\lVert\delta\rVert^2}|\epsilon|) = c+\lVert\delta\rVert^2+\sqrt{2\lVert\delta\rVert^2}\sqrt{\frac{2}{\pi}}.
\]
The result follows.
\end{proof}

\begin{remark}\label{rem:addranked}
It is simple to extend this result to get a bound on the ADD of any ranked statistics $\RankS{s}{\tau}{n}(0)$, because the same argument applies to any statistic that considers a given subset of $s$ variates changing.
Following the notation introduced in Appendix G (on the approximation), denoting $R$ the subset of $s$ variates that are changing, we would recover the same formula considering $y^{[R]}$ in $\mathbb{R}^s$ and replacing $\lVert \delta \rVert$ by $\lVert \delta^{[R]} \rVert$.
\end{remark}

\begin{remark}\label{rem:addthreshold}
The proof also applies to any thresholded statistics with threshold $a$.  We replace $c$ by $c+pa^2$ in the proof, note that $\RankS{p}{\tau}{n} - pa^2$ is always smaller than $\ThreS{a}{\tau}{n}$ and the bound follows.
\end{remark}

\subsection{Probabilistic bound on the Detection Delay for a fixed  threshold and knowing the pre-change mean}\label{app-sub-ADD_Bound}

In this section, we again assume that a change in mean occurs immediately after time $\tau^*$, calling $T$ the first time our statistic passes the threshold $c_n$ we give probabilistic bounds on the detection delay $T-\tau^*$. We start with ranked statistics and then proceed with thresholded statistics.

\paragraph{}

\begin{proposition}\label{coro:H1_ranked}
     Consider a change of size $\delta$ at time $\tau^*$, that is $y_t=\varepsilon_t$ for $t \leq \tau^*$ and $y_t=\delta + \varepsilon_t$ for $t > \tau^*$ with $\varepsilon_t \sim \mathcal{N}(0, I_p)$ and $\lVert \delta \rVert > 0$. We denote $T$ the first time the dense statistic, $\RankS{p}{\tau}{n}(0)$, passes the fixed positive threshold $c$. 
     With probability at least $(1-\alpha)$ the detection delay, $T-\tau^*$, is less than
     \begin{equation*}
     \frac{2 c - 8 \log(\alpha/2)}{\lVert \delta \rVert^2}
     \end{equation*}

     In addition, if $c>p$, then with probability at least $(1-\alpha)$, 
     \begin{equation*}
     T-\tau^*<\frac{2 (c-p) + 2\sqrt{p\log(3/\alpha)} - 8 \log(\alpha/3)}{\lVert \delta \rVert^2}
     \end{equation*}
\end{proposition}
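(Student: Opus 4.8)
The plan is to reduce the statement about the stopping time $T$ to a one-time-point concentration statement about the dense statistic evaluated at the \emph{true} changepoint $\tau^*$. First I would record the elementary domination fact: because $T$ is the first time the dense statistic $\max_{\tau<n}\RankS{p}{\tau}{n}(0)$ exceeds $c$, if at time $n=\tau^*+t$ we already have $\RankS{p}{\tau^*}{\tau^*+t}(0)\ge c$, then necessarily $T\le \tau^*+t$. Hence the event $\{\RankS{p}{\tau^*}{\tau^*+t}(0)\ge c\}$ is contained in $\{T-\tau^*\le t\}$, so
\[
\mathbb{P}(T-\tau^*\le t)\ \ge\ \mathbb{P}\!\left(\RankS{p}{\tau^*}{\tau^*+t}(0)\ge c\right),
\]
and it suffices to set $t$ equal to the claimed bound and show the right-hand side is at least $1-\alpha$.

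Second, I would identify the law of $\RankS{p}{\tau^*}{\tau^*+t}(0)$ under the change. Writing $y_{\tau^*+j}=\delta+\varepsilon_{\tau^*+j}$ and setting $Z_t=\tfrac{1}{\sqrt t}\sum_{j=1}^t\varepsilon_{\tau^*+j}\sim\mathcal N_p(0,I_p)$, the per-dimension evidence vector at $\tau^*$ is $\sqrt t\,\delta+Z_t$, so $\RankS{p}{\tau^*}{\tau^*+t}(0)=\lVert\sqrt t\,\delta+Z_t\rVert^2$, a noncentral $\chi^2_p$ with noncentrality $t\lVert\delta\rVert^2$. Expanding gives the decomposition $\RankS{p}{\tau^*}{\tau^*+t}(0)=t\lVert\delta\rVert^2+2\sqrt t\,\langle\delta,Z_t\rangle+\lVert Z_t\rVert^2$ that drives both bounds.

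For the first bound I would simply discard the nonnegative term $\lVert Z_t\rVert^2$ and control the Gaussian cross term: $\langle\delta,Z_t\rangle/\lVert\delta\rVert$ is standard normal, so by the tail bound $\mathbb{P}(\mathcal N(0,1)<-x)\le e^{-x^2/2}$ the event $\langle\delta,Z_t\rangle\ge-\lVert\delta\rVert\sqrt{2\log(2/\alpha)}$ holds with probability at least $1-\alpha/2$. Substituting $t$ equal to the claimed value reduces $\RankS{p}{\tau^*}{\tau^*+t}(0)\ge c$ to a quadratic inequality in $u=\sqrt t\,\lVert\delta\rVert$, and the over-bound $(\sqrt a+\sqrt{a+c})^2\le 4a+2c$ (with $a=2\log(2/\alpha)$) produces precisely the numerator $2c-8\log(\alpha/2)$. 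For the second bound, valid when $c>p$, I would instead \emph{keep} $\lVert Z_t\rVert^2$ and lower-bound it by a chi-squared lower-tail (Laurent--Massart) inequality of the form $\mathbb{P}(\lVert Z_t\rVert^2\le p-2\sqrt{p\,x})\le e^{-x}$; combining this with the Gaussian cross-term bound through a union bound that splits $\alpha$ across the events (this is the origin of the $\log(3/\alpha)$ factors) and again solving the quadratic in $\sqrt t$ yields the sharper numerator $2(c-p)+2\sqrt{p\log(3/\alpha)}-8\log(\alpha/3)$.

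The conceptual content — the domination reduction of the first step and the noncentral-$\chi^2$ identity of the second — is routine; I expect the real work, and the main obstacle, to be the bookkeeping in the final step: allocating the failure probability across the cross-term and chi-squared events, applying the two concentration inequalities to the same (dependent) vector $Z_t$, and then solving and appropriately over-bounding the resulting quadratic in $\sqrt t$ so that the closed-form threshold collapses to exactly the advertised expression in each case.
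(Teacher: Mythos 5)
Your proposal follows essentially the same route as the paper: the paper proves this proposition by a one-line reduction to its Lemma \ref{lemma:H1_proba_bound}, and the proof of that lemma is precisely the argument you outline --- evaluate the statistic at the true change $\tau^*$, expand
\[
\RankS{p}{\tau^*}{\tau^*+t}(0)=\bigl\lVert \sqrt{t}\,\delta+Z_t\bigr\rVert^2
= t\lVert\delta\rVert^2+2\sqrt{t}\,\langle\delta,Z_t\rangle+\lVert Z_t\rVert^2,
\]
drop the noise term and use a Gaussian tail for the first claim, keep it and invoke the Laurent--Massart lower tail plus a union bound for the second, then solve the resulting quadratic in $u=\sqrt{t}\,\lVert\delta\rVert$. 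Your domination step $\{\RankS{p}{\tau^*}{\tau^*+t}(0)\geq c\}\subseteq\{T-\tau^*\leq t\}$ matches the paper's, and your derivation of the first bound is correct (your one-sided tail even delivers probability $1-\alpha/2$).

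There is, however, a genuine gap in your second bound, at exactly the step you wave through. Carrying out your own bookkeeping: on the good events you need $u^2-2u\sqrt{a'}\geq c'$ with $a'=2\log(3/\alpha)$ and $c'=(c-p)+2\sqrt{p\log(3/\alpha)}$, and your over-bound $(\sqrt{a'}+\sqrt{a'+c'})^2\leq 4a'+2c'$ then requires
\[
t\lVert\delta\rVert^2 \;\geq\; 2(c-p)+4\sqrt{p\log(3/\alpha)}-8\log(\alpha/3),
\]
i.e.\ coefficient $4$, not $2$, on the $\sqrt{p\log(3/\alpha)}$ term: the factor $2$ multiplying $c'$ doubles the $2\sqrt{p\log(3/\alpha)}$ sitting inside it. You cannot repair this by using the exact root of the quadratic either; matching the stated factor-$2$ numerator would require $(c-p)^2\gtrsim \log(3/\alpha)\sqrt{p\log(3/\alpha)}$, which fails in general. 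To be fair, this is not only your problem: the paper's Lemma \ref{lemma:H1_proba_bound} itself states the threshold with $4\sqrt{p\log(3/\alpha)}$, so the proposition as printed is inconsistent with the very lemma its proof invokes (a typo on one side or the other). Thus your argument, like the paper's, establishes the factor-$4$ version, and your claim that the computation "collapses to exactly the advertised expression" is unsubstantiated. To genuinely obtain a coefficient $2$ on the $\sqrt{p}$ term one would need a different argument, e.g.\ a Birg\'e-type lower-tail bound applied to the whole noncentral $\chi^2_p(t\lVert\delta\rVert^2)$ statistic at once rather than splitting off the cross term, at the cost of slightly different logarithmic terms.
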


\begin{proof}
   We consider the statistic $\RankS{p}{\tau^*}{n}(0)$ for increasing $n$ and apply Lemma \ref{lemma:H1_proba_bound} and find that for any time larger than the provided bound, the statistic would pass the threshold with probability at least $(1-\alpha)$. Hence, the detection delay would be smaller with probability $(1-\alpha)$.
\end{proof}

\begin{remark} \label{rem:DDtos}
It is simple to extend this result to get a bound on the detection delay of any ranked statistic $\RankS{s}{\tau}{n}(0)$. This is because the same argument applies to the statistic that considers any subset of $s$ variates changing.
Following the notation introduced in Appendix G (on the approximation), denoting $R$ the subset of $s$ variates that are changing, we would consider $y^{R}$ in $\mathbb{R}^s$ and recover the same formula replacing $\lVert \delta \rVert$ by $\lVert \delta^{[R]} \rVert$.
\end{remark}

    Using Lemma \ref{lemma:H1_proba_bound} we could get similar bounds for a time-varying threshold assuming it is positive, increasing, and concave thresholds.

Before providing bounds on the detection delay of thresholded statistics, we recall the notion of effective sparsity defined in \cite{chen2020highdimensional}. For any $\delta$ in $\mathbb{R}^p$ such that $\lVert \delta \rVert > 0$ the effective sparsity $z(\delta)$ is the smallest integer in $\{ 2^0, 2^1, \ldots ,2^{\lfloor \log_2(p)} \rfloor  \}$ such that the set $$\mathcal{Z}(\delta) = \{ j \ \text{such that} \ |\delta^j| > \lVert \delta \rVert / (z(\delta)\log_2(p))\},$$ has cardinality at least $z(\delta)$.

\begin{proposition}\label{coro:H1_thresholded}
  Consider a change of size $\delta$ at time $\tau^*$, that is $y_t=\varepsilon_t$ for $t \leq \tau^*$ and $y_t=\delta + \varepsilon_t$ for $t > \tau^*$ with $\varepsilon_t \sim \mathcal{N}(0, I_p)$ and $\lVert \delta \rVert > 0$. We denote $T$ the first time the thresholded statistic, $\ThreS{a}{\tau}{n}(0)$, passes the fixed threshold $c$. 
With probability at least $(1-\alpha)$ the detection delay, $T-\tau^*$, is less than than  
\begin{displaymath}
\log_2(p) \max \left\{ \frac{2c - 8 \log(\alpha) + 8\log(2(z(\delta)+1)))}{\lVert \delta \rVert^2 }, \frac{2a^2 - 8 \log(\alpha) + 8\log(2(z(\delta)+1)))}{\lVert \delta \rVert^2 / z(\delta) }\right\} 
\end{displaymath}

\end{proposition}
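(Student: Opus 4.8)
The plan is to mirror the argument for ranked statistics (Proposition \ref{coro:H1_ranked} together with Remark \ref{rem:DDtos}), but to restrict attention to the effective support $\mathcal{Z}(\delta)$ and then pay the extra price of controlling which coordinates are thresholded into the statistic. Write $z = z(\delta)$ and fix a set $R \subseteq \mathcal{Z}(\delta)$ of exactly $z$ coordinates, each satisfying $|\delta^j| > \lVert\delta\rVert/(z\log_2 p)$ (possible since $|\mathcal{Z}(\delta)|\geq z$). Since every summand of $\ThreS{a}{\tau^*}{n}(0)$ is non-negative,
\[
\ThreS{a}{\tau^*}{n}(0) \;\geq\; \sum_{j\in R} (\Evid{j}{\tau^*}{n})^2\, \mathbbm{1}_{\{|\Evid{j}{\tau^*}{n}|\geq a\}},
\]
so it suffices to push every coordinate of $R$ across the inclusion threshold $a$ and, simultaneously, to make the restricted sum exceed $c$. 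Each $\Evid{j}{\tau^*}{n}$ is Gaussian with unit variance and mean $\sqrt{n-\tau^*}\,\delta^j$, so both requirements are monotone in $n-\tau^*$.

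First I would isolate the two events. For \emph{inclusion}, for each $j\in R$ a one-sided Gaussian tail bound shows that $|\Evid{j}{\tau^*}{n}|\geq a$ fails only with small probability once $\sqrt{n-\tau^*}\,|\delta^j|$ exceeds $a$ by the relevant deviation; using $|\delta^j|^2 > \lVert\delta\rVert^2/(z\log_2 p)^2$ converts this into a time threshold governed by $a^2$ over the per-coordinate signal, which after the effective-sparsity bookkeeping becomes the second entry of the maximum. For \emph{magnitude}, I would apply Lemma \ref{lemma:H1_proba_bound} to the $z$-dimensional non-central chi-square $\sum_{j\in R}(\Evid{j}{\tau^*}{n})^2$, exactly as in the proof of Proposition \ref{coro:H1_ranked} but with $\delta$ replaced by $\delta^{[R]}$; this produces the first entry of the maximum, the one involving $c$.

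The two requirements are then combined by a union bound over the $z$ inclusion events and the single magnitude event, allocating probability $\alpha/(2(z+1))$ to each; this is the origin of the $8\log(2(z+1))$ term, the factor $2$ coming from the two-sided constants in the underlying concentration inequalities. Taking $n-\tau^*$ to be at least the larger of the two time thresholds guarantees both events hold with probability at least $1-\alpha$, and the outer $\log_2 p$ prefactor is exactly what is needed to convert the per-coordinate effective-sparsity guarantee $|\delta^j| > \lVert\delta\rVert/(z\log_2 p)$ into the advertised denominators $\lVert\delta\rVert^2$ and $\lVert\delta\rVert^2/z$.

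I expect the main obstacle to be the coupling between the two events rather than either bound in isolation: I must count on the same coordinates of $R$ both to be included (needing them above $a$) and to carry the bulk of the statistic (needing their squared evidence large), and I must verify that the definition of $z(\delta)$ supplies enough aggregate signal on $R$ — i.e. a usable lower bound on $\lVert\delta^{[R]}\rVert^2$ — for Lemma \ref{lemma:H1_proba_bound} to close the first bound. Matching the numerical constants, and in particular obtaining a single power of $\log_2 p$ rather than its square, is precisely where the effective-sparsity accounting has to be carried out carefully; everything else is a routine repetition of the ranked-statistic argument.
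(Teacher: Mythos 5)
Your decomposition is exactly the paper's: fix a size-$z(\delta)$ subset $\mathcal{Z}'$ of $\mathcal{Z}(\delta)$, discard the non-negative contributions of coordinates outside $\mathcal{Z}'$, require simultaneously that each coordinate of $\mathcal{Z}'$ crosses the hard threshold $a$ and that the restricted sum crosses $c$, and finish with a union bound over these $z(\delta)+1$ events, each controlled by Lemma \ref{lemma:H1_proba_bound} (your direct Gaussian tail for the inclusion events is just that lemma in dimension one, with $c$ replaced by $a^2$). So the architecture matches the paper's proof.

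The genuine gap is the step you explicitly defer, the ``effective-sparsity bookkeeping''. You instantiate the per-coordinate signal as $|\delta^j|^2 > \lVert\delta\rVert^2/(z\log_2 p)^2$, i.e.\ the square of the definition as literally displayed. If you run the union-bound argument with that value, the inclusion events force $n-\tau^* \gtrsim z^2(\log_2 p)^2\, a^2/\lVert\delta\rVert^2$, and the aggregate signal you can certify is only $\lVert\delta^{[\mathcal{Z}']}\rVert^2 \geq \lVert\delta\rVert^2/\bigl(z(\log_2 p)^2\bigr)$, so the magnitude event forces $n-\tau^* \gtrsim z(\log_2 p)^2\, c/\lVert\delta\rVert^2$: both entries of the maximum come out at least a full factor of $\log_2 p$ too large, and the proposition as stated does not follow. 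The paper's proof closes precisely this step by using the squared-signal reading of effective sparsity, namely $|\delta^j|^2 > \lVert\delta\rVert^2/\bigl(z(\delta)\log_2 p\bigr)$ for every $j \in \mathcal{Z}'$, whence $\lVert\delta^{[\mathcal{Z}']}\rVert^2 \geq \lVert\delta\rVert^2/\log_2 p$; applying Lemma \ref{lemma:H1_proba_bound} with $\alpha/(z(\delta)+1)$ allocated to each of the $z(\delta)+1$ events (the factor $2$ inside the logarithm comes from the $\log(\alpha/2)$ already present in the lemma, not from a separate two-sided correction) then yields exactly the two entries of the stated maximum. So your instinct that the single power of $\log_2 p$ is the crux was correct, but the proposal does not resolve it, and it cannot be resolved under the denominator-squared reading you adopted: one must use the convention the paper's own proof (if not its displayed definition) employs.
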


\begin{proof}
We consider a subset $\mathcal{Z'}$ of size $z(\delta)$ of $\mathcal{Z}(\delta)$. All variates $j$ in this subset validate the effective sparsity constraint: $|\delta_j| > \lVert \delta \rVert / (z(\delta)\log_2(p))$. We first lower bound the contribution to $\ThreS{a}{\tau}{n}$ of all variates not in $\mathcal{Z'}$ by $0$. Then we consider a time large enough to ensure that simultaneously \begin{itemize}
    \item the contribution of each variates in $\mathcal{Z'}$ passes the threshold $a$
    \item the sum of these $z(\delta)$ contributions passes the fixed threshold $c$.
\end{itemize} 
To this end we use Lemma \ref{lemma:H1_proba_bound} exactly $z(\delta)+1$ times:
\begin{itemize}
    \item for each variates in $\mathcal{Z'}$ the lemma applies replacing $\lVert \delta \rVert^2$ by $\frac{\lVert \delta \rVert^2}{z(\delta) \log_2(p)}$ and $\alpha$ by $\frac{\alpha}{z(\delta)+1}$ (in preparation for a union bound).
    \item for the sum the lemma applies replacing $\lVert \delta \rVert^2$ by $\frac{\lVert \delta \rVert^2}{\log_2(p)}$ and $\alpha$ by $\frac{\alpha}{z(\delta)+1}$. This is because, by definition of $z(\delta)$ the squared euclidean norm restricted to the variates in $\mathcal{Z'}$ is such that $\lVert \delta^{[\mathcal{Z'}]} \rVert^2 \geq \frac{\lVert \delta \rVert^2}{\log_2(p)}$.
\end{itemize}
We get the desired result on the detection delay by taking a union bound over these $z(\delta)+1$ events.
\end{proof}

Using Lemma \ref{lemma:H1_proba_bound} we could get similar bounds for a time-varying threshold assuming it is positive, increasing, and concave thresholds.

\subsection{Some concentration bounds}

In this subsection, we provide concentration bounds on our ranked and threshold-based statistics, respectively $\RankS{s}{\tau}{n}$ and $\ThreS{a}{\tau}{n}$ defined in \eqref{eq:rankAndThresholdStats} and \eqref{eq:rankAndThresholdStats2}. These bounds are useful to control the run-length and the detection delay

First, exploiting the peeling argument of \cite{yu2023note} we provide bounds considering all possible values of $\tau$ and $n$ which are useful to control the probability of false alarms (see Proposition \ref{coro:H0_all}). These bounds and their proof are valid for both the known and unknown pre-change mean scenarios. This is because if there is no change $\Evid{i}{\tau}{n}(0)$ and $\Evid{i}{\tau}{n}(.)$ are both Gaussian with mean zero and variance one.

\begin{lemma}\label{lem:peeling_all_change} 
Assume $\varepsilon_t$ are i.i.d. Gaussian with identity variance matrix. For $\alpha$ in $(0, 1)$, taking $x_{n}(\alpha) = 4\log(n) - \log(\alpha),$ 
\begin{itemize}
\item {\bf Rank-1 statistic}
\[
    \mathbb{P}\left( \ \exists \ \tau < n \in \mathbb{N} \quad \text{s.t.} \quad \RankS{1}{\tau}{n} > 2x_{n}(\alpha) + 2\log(2p) \ \right) \leq \alpha.
\]
\item {\bf Rank-p statistic}
\[
    \mathbb{P}\left( \ \exists \ \tau < n \in \mathbb{N} \quad \text{s.t.} \quad \RankS{p}{\tau}{n} >  2x_{n}(\alpha)  + 2\sqrt{px_{n}(\alpha)}  + p \ \right) \leq \alpha\,;
\] 
\item {\bf Thresholded statistic}
\[
    \mathbb{P}\left( \ \exists \ \tau < n \in \mathbb{N} \quad  \text{s.t.} \quad \ThreS{a}{\tau}{n} > 4 x_{n}(\alpha) + 6pe^{-a^2/8} \right)\leq \alpha.
\]
\end{itemize}
\begin{itemize}
\item {\bf Rank-s statistic}
\[
     \mathbb{P}\left( \exists \ \tau < n \in \mathbb{N} \ \text{s.t.} \ \RankS{s}{\tau}{n} >  2(x_{n}(\alpha) + \log{p \choose s})  + 2\sqrt{s(x_{n}(\alpha) + \log{p \choose s})}  + s  \right) \leq \alpha\,;
\]
\end{itemize}
\end{lemma}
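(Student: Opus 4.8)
The plan is to establish the rank-$s$ bound (the other three parts follow the same template) by combining a combinatorial union bound over coordinate subsets, a chi-squared deviation inequality, and a peeling/union bound over all admissible pairs $(\tau,n)$. First I would fix a pair $(\tau,n)$ with $\tau<n$. Under the null the coordinates of $\varepsilon_t$ are independent, so for each $i$ the per-dimension evidence $\Evid{i}{\tau}{n}$ is standard normal and the $p$ of them are mutually independent; crucially this holds verbatim for both the known-mean and unknown-mean definitions, since in each case $\Evid{i}{\tau}{n}$ is a centred, unit-variance Gaussian. Writing the rank-$s$ statistic as a maximum over size-$s$ coordinate subsets, $\RankS{s}{\tau}{n}=\max_{|R|=s}\sum_{i\in R}(\Evid{i}{\tau}{n})^2$, a union bound over the $\binom{p}{s}$ subsets reduces the tail of $\RankS{s}{\tau}{n}$ to that of a single $\chi^2_s$ random variable.

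Next I would apply the Laurent--Massart deviation bound $\mathbb{P}(\chi^2_s\ge s+2\sqrt{sx}+2x)\le e^{-x}$ with the inflated level $x=x_n(\alpha)+\log\binom{p}{s}$. With this choice the deviation quantity $u=s+2\sqrt{sx}+2x$ is exactly the threshold in the statement, and the $\binom{p}{s}$ factor from the subset union bound is absorbed by the $\log\binom{p}{s}$ term in $x$, leaving the clean pointwise bound $\mathbb{P}(\RankS{s}{\tau}{n}>u)\le\binom{p}{s}e^{-x}=e^{-x_n(\alpha)}=\alpha\,n^{-4}$.

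Finally I would peel over the pairs $(\tau,n)$. For each fixed $n$ there are at most $n-1$ candidate changepoints, so summing the pointwise bound over $\tau$ gives at most $(n-1)\,\alpha\,n^{-4}\le\alpha\,n^{-3}$, and summing over $n\ge2$ yields $\alpha\sum_{n\ge2}n^{-3}=\alpha(\zeta(3)-1)<\alpha$. The purpose of the $4\log n$ term inside $x_n(\alpha)$ is precisely to force this double series to converge below $1$ after the $\mathcal{O}(n)$-fold union bound at each time, and this is the step where I would mirror Step 1 and Step 2 of the peeling argument of \cite{yu2023note}.

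The main obstacle here is bookkeeping rather than depth: one must check that the subset inflation $\log\binom{p}{s}$ and the temporal inflation $4\log n$ combine so that the aggregate error stays below $\alpha$, and confirm that the Gaussian/chi-squared constants match Laurent--Massart exactly. The rank-$1$ and rank-$p$ bounds are the $s=1$ case (using the sharper Gaussian tail $\mathbb{P}(|Z|>t)\le2e^{-t^2/2}$ in place of $\chi^2_1$) and the $s=p$ case (where no subset union bound is needed) of the same scheme. The genuinely more delicate sibling is the thresholded bound, whose $6pe^{-a^2/8}$ term requires controlling the moment generating function of the truncated variable $(\Evid{i}{\tau}{n})^2\mathbbm{1}_{|\Evid{i}{\tau}{n}|\ge a}$ and summing over the independent coordinates through a Chernoff argument before the same peeling step is applied.
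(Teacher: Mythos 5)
Your proof is correct, but it follows a genuinely different route from the paper's. The paper proves all four bounds via the dyadic peeling scheme of Lemma 8 in \cite{yu2023note}: it partitions the pairs $(\tau,n)$ into blocks $n\in[2^j,2^{j+1})$, bounds the number of pairs in block $j$ by $3\cdot 2^{2j-1}$, uses that the marginal law of each statistic is the same for all $(\tau,n)$ together with the monotonicity of the threshold to dominate every probability in a block by the one at $n=2^j$, and then calibrates $\exp(\tilde x_{2^j}(\alpha))=\tfrac{3}{\alpha}2^{2j-1}j(j+1)$ so that the block contributions sum exactly to $\alpha$; this yields the sharper intermediate level $\tilde x_n(\alpha)=2\log n-\log\alpha+\log(\log_2 n(\log_2 n+1))+\log(3/2)$, which is relaxed only at the last step to $x_n(\alpha)=4\log n-\log\alpha$. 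You instead take a direct double union bound: the per-pair bound $\binom{p}{s}e^{-(x_n(\alpha)+\log\binom{p}{s})}=\alpha n^{-4}$, summed over the at most $n-1$ values of $\tau$ and then over $n\ge 2$, gives $\alpha(\zeta(3)-1)<\alpha$. Because the lemma's stated level $4\log n$ is generous, your elementary counting suffices (indeed $3\log n$ would already do, since $\zeta(2)-1<1$); what the paper's peeling buys is the asymptotically tighter threshold of order $2\log n+2\log\log n$, should one want it. The per-pair concentration inputs coincide in both proofs --- Laurent--Massart at the inflated level for the ranked statistics \cite[]{laurent2000adaptive}, the Gaussian tail for rank-$1$, and, for the thresholded statistic, the bound the paper simply cites as Lemma 17 of \cite{chen2020highdimensional} while you propose re-deriving it by a Chernoff/MGF argument (more work, but legitimate) --- and your preliminary observation that the per-dimension evidences are i.i.d.\ standard normal under the null in both the known and unknown pre-change-mean cases is exactly the fact the paper also relies on.
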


\begin{proof}
For all statistics, the proof follows the peeling argument of Lemma 8 in \cite{yu2023note}. In detail, for some generic statistic $\GenS{\tau}{n}$ and some non-decreasing $b_n$, we can bound the probability of the event $\exists \ \tau < n \in \mathbb{N} \ \text{such that} \ \ \GenS{\tau}{n} > b_n$
using two union bounds:

\begin{eqnarray*}
    \mathbb{P}\left(\ \exists \ \tau < n \in \mathbb{N} \ \text{such that} \ \GenS{\tau}{n} > b_n\right) & \leq & \sum_{j=1}^{\infty} \mathbb{P}\left(\ \exists \ \tau < n \in [2^j, 2^{j+1}) \ \text{such that} \ \GenS{\tau}{n} > b_{n}\right) \\
    & \leq & 3 \sum_{j=1}^{\infty} 2^{2j-1} \max_{\begin{array}{c@{}c@{}c@{}c@{}c@{}} 2^j & \leq & n  & < & 2^{j+1} \\ 1  & \leq  & \tau & < & n\end{array}} \mathbb{P}\left(  \GenS{\tau}{n} > b_{n}\right)\\
      & \leq & 3 \sum_{j=1}^{\infty} 2^{2j-1} \max_{1 \leq   \tau  <  2^j} \mathbb{P}(  \GenS{\tau}{2^j} > b_{2^j}).
\end{eqnarray*}
We get the second inequality bounding the number of couples $(\tau,n)$ such that $\tau < n$ with $n$ in $[2^j, 2^{j+1})$, there are:
$$\frac{(2^{j+1}-2^j)(2^{j+1}-2^j-1)}{2} + (2^{j+1}-2^j)^2 = 2^{j-1}(3\times2^j-1)\,.$$
We obtain the last equality using the fact that $n\to b_n$ is an increasing bound and that all the $\GenS{\tau}{n}$ have the same distribution, meaning that $n \to \mathbb{P}(  \GenS{\tau}{n} > b_{n})$ is non-increasing.
To continue, we consider some appropriate concentration bound on all $\GenS{\tau}{n}$ of the form
$\mathbb{P}\left( \GenS{\tau}{n} \geq c(x) \right) \ \leq \ e^{-x}$.
To be specific:
\begin{itemize}
    \item For the rank-1 statistic we take
    \begin{equation} 
P\left( \RankS{1}{\tau}{n} \geq c(x) \right) \ \leq \ e^{-x},
\end{equation}
    with $c(x) = 2x + 2\log(p) + 2\log(2)$ which we obtain using a union bound over $p$ univariate Gaussian random variable.
 \item For the rank-p and rank-s statistics we take
 \begin{equation} 
P\left( \RankS{s}{\tau}{n} \geq  c(x) \right) \ \leq \ e^{-x},
\end{equation}
with $c(x) = s + 2\sqrt{s(x+\log{p \choose s})} + 2(x+\log{p \choose s})$ which we get from Lemma 1 of \cite{laurent2000adaptive} and a union bound over all ${p \choose s}$ choices of $s$ dimensions out of $p$.
    \item For the threshold statistic we take
    \begin{equation} 
P\left( \ThreS{a}{\tau}{n} \geq c(x) \right) \ \leq \ e^{-x},
\end{equation}
with $c(x) = 6p e^{-a^2/8} + 4x$ which we get from Lemma 17 of \cite{chen2020highdimensional}.
\end{itemize}

Setting $b_{2^j} = c(\tilde x_{2^j}(\alpha))$, we remove the probability term using these concentration bounds to write:
$$\mathbb{P}\left(\ \exists \ \tau < n \in \mathbb{N} \ \text{s.t.} \ \GenS{\tau}{n} > b_n\right) \leq 3 \sum_{j=1}^{\infty} 2^{2j-1} \exp(-\tilde x_{2^j}(\alpha))\,.$$
Considering $\exp(\tilde x_{2^j}(\alpha)) = \frac{3}{\alpha}2^{2j-1}j(j+1)$ we get:
$$3 \sum_{j=1}^{\infty} 2^{2j-1} \exp(-\tilde  x_{2^j}(\alpha)) = \sum_{j=1}^{\infty} \frac{\alpha}{j(j+1)} = \alpha\,.$$
and we have the value for evaluating $c$ given by:
$$\tilde x_n(\alpha) = 2\log(n) - \log(\alpha) + \log (\log_2(n)(\log_2(n) + 1)) + \log(\frac{3}{2}) \le 4\log(n) - \log(\alpha) = x_n(\alpha)\,,$$
using relation $2^{2j} \ge \frac{3}{2}j(j+1)$ on integers for the inequality. As the function $c$ is increasing, we can use the simpler value $x_n(\alpha)$ which leads to the proposed result.


\end{proof}

We now provide some bounds to control the detection delay, assuming we know the pre-change mean and the threshold is fixed or concave increasing.

\begin{lemma}\label{lemma:H1_proba_bound}
    Consider a sequence with a change $\delta$ at $\tau^*$, that is $y_t=\varepsilon_t$ for $t \leq \tau^*$ and $y_t=\delta + \varepsilon_t$ for $t > \tau^*$ and $\lVert \delta \rVert > 0$. 
    \begin{itemize}
        \item Suppose the threshold $c_n=c$ is fixed and positive. Then for any $$n - \tau^* \geq \frac{2 c - 8 \log(\alpha/2)}{\lVert \delta \rVert^2} = \Delta(\delta,c,\alpha)$$ we have
    \begin{displaymath}
         \mathbb{P}\Big( \lVert \frac{\sum_{t=\tau^*+1}^n y_t}{\sqrt{n-\tau^*}} \Vert^2 < c\Big) \leq \alpha.
    \end{displaymath}
    
    \item Suppose the threshold $c_n=c$ is fixed, positive and larger than $p$. Then for any $$n - \tau^* \geq \frac{2 (c-p) + 4\sqrt{p\log(3/\alpha)}- 8 \log(\alpha/3)}{\lVert \delta \rVert^2} = \Delta'(\delta,c,\alpha, p)$$ we have
    \begin{displaymath}
         \mathbb{P}\Big( \lVert \frac{\sum_{t=\tau^*+1}^n y_t}{\sqrt{n-\tau^*}} \Vert^2 < c\Big) \leq \alpha.
    \end{displaymath}
     \item Suppose the threshold $c_n$ is increasing, concave and its derivative $c'_n$ well-defined for all integers $n$. Then for any $n - \tau \geq \Delta(\delta,c_{\tau^*},\alpha) + 2 \frac{c'_{\tau^*}}{\lVert \delta \rVert^2}$ we have
    \begin{displaymath}
         \mathbb{P}\Big( \lVert \frac{\sum_{t=\tau^*+1}^n y_t}{\sqrt{n-\tau^*}} \Vert^2 < c_n\Big) \leq \alpha. 
    \end{displaymath}

    \end{itemize}
\end{lemma}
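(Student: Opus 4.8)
The plan is to reduce each statistic to a non-central chi-type quantity and apply one-dimensional tail bounds. Writing $m = n-\tau^*$ and using that $y_t = \delta + \varepsilon_t$ for $t > \tau^*$,
\[
\frac{\sum_{t=\tau^*+1}^n y_t}{\sqrt m} = \sqrt m\,\delta + Z, \qquad Z := \frac{1}{\sqrt m}\sum_{t=\tau^*+1}^n \varepsilon_t \sim \mathcal N(0, I_p),
\]
so in every case the task is to upper bound $\mathbb P(\|\sqrt m\,\delta + Z\|^2 < c_n)$. The only randomness is the isotropic Gaussian $Z$, while the deterministic signal $\sqrt m\,\delta$ has squared length $m\|\delta\|^2$ growing linearly in the delay; the point is to take $m$ large enough that this signal swamps the fluctuations of $Z$. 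For the first (fixed positive threshold) bound I would project onto the change direction $u=\delta/\|\delta\|$: since $\|\sqrt m\delta + Z\|^2 \ge \langle u,\sqrt m\delta+Z\rangle^2 = (\sqrt m\|\delta\| + G)^2$ with $G=\langle u,Z\rangle\sim\mathcal N(0,1)$, a single Gaussian tail bound gives $\sqrt m\|\delta\|+G \ge \sqrt m\|\delta\| - t$ with $t=\sqrt{-2\log(\alpha/2)}$ on an event of probability at least $1-\alpha$. Requiring $(\sqrt m\|\delta\|-t)^2 \ge c$ and simplifying $(\sqrt c+t)^2$ with $2\sqrt c\,t \le c+t^2$ yields the sufficient condition $m\|\delta\|^2 \ge 2c - 8\log(\alpha/2)$, that is $m \ge \Delta(\delta,c,\alpha)$ (the stated constant being deliberately loose, so there is slack to absorb the AM--GM step).

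For the second bound, exploiting $c>p$ requires keeping the bulk of $Z$ rather than discarding it. I would decompose $\|\sqrt m\delta + Z\|^2 = m\|\delta\|^2 + 2\sqrt m\langle\delta,Z\rangle + \|Z\|^2$ and control the last two terms separately: a Gaussian tail for $\langle\delta,Z\rangle\sim\mathcal N(0,\|\delta\|^2)$ giving $\langle\delta,Z\rangle \ge -\|\delta\|\sqrt{2L}$, and the $\chi^2$ lower tail from Lemma 1 of \cite{laurent2000adaptive} giving $\|Z\|^2 \ge p - 2\sqrt{pL}$, each failing with probability at most $e^{-L}$ for $L=\log(3/\alpha)$. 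On the intersection, $\|\sqrt m\delta+Z\|^2 \ge m\|\delta\|^2 - 2\sqrt{2L}\,\sqrt m\|\delta\| + p - 2\sqrt{pL}$; completing the square in $s=\sqrt m\|\delta\|$ and bounding the cross term $2\sqrt{2L(R+2L)}\le R+4L$ by AM--GM, with $R=c-p+2\sqrt{pL}$, turns the requirement into $m\|\delta\|^2 \ge 2(c-p) + 4\sqrt{pL} + 8L$, which is exactly $\Delta'(\delta,c,\alpha,p)$ after substituting $8L = -8\log(\alpha/3)$. Retaining the $+p$ mass is precisely what makes this sharper than the first bound once $c>p$.

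For the third bound, I would use concavity. An increasing concave $c_n$ lies below its tangent at $\tau^*$, so $c_n \le c_{\tau^*} + c'_{\tau^*}(n-\tau^*)$, whence $\mathbb P(\|\sqrt m\delta+Z\|^2 < c_n) \le \mathbb P(\|\sqrt m\delta+Z\|^2 < c_{\tau^*} + c'_{\tau^*} m)$. Applying the first bound with the inflated threshold $c_{\tau^*}+c'_{\tau^*}m$ (which, for this fixed $n$, is a constant), the linear growth rate $c'_{\tau^*}$ should contribute the additive correction $2c'_{\tau^*}/\|\delta\|^2$ to $\Delta(\delta,c_{\tau^*},\alpha)$, producing the stated threshold.

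The main obstacle is the constant bookkeeping, and specifically the third bound: the inflated threshold $c_{\tau^*}+c'_{\tau^*}m$ itself depends on $m$, so naively plugging it into $\Delta$ leaves $m$ on both sides, and closing this into the clean additive form $\Delta(\delta,c_{\tau^*},\alpha) + 2c'_{\tau^*}/\|\delta\|^2$ requires carefully exploiting the slack already present in $\Delta$ rather than invoking the first bound as a black box. The Gaussian and $\chi^2$ tail inputs together with the AM--GM simplifications are routine; the delicacy lies entirely in propagating the constants so that the growth-rate correction enters additively rather than multiplicatively.
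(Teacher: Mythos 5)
Your first two bullets are correct and follow essentially the same route as the paper. For the fixed positive threshold, you project onto the change direction and apply a one-dimensional Gaussian tail; the paper does exactly this via its variable $u_n$, organising the algebra as $(n-\tau^*)\lVert\delta\rVert^2 \geq 2c + 4u_n^2$ instead of your $(\sqrt{c}+t)^2 \leq 2c+2t^2$, a cosmetic difference (your constant is in fact sharper, so the stated $\Delta$ follows a fortiori). For the second bullet you retain $\lVert Z\rVert^2$, lower-bound it by $p-2\sqrt{p\log(3/\alpha)}$ via Laurent--Massart, and union-bound with the Gaussian tail of the projection, each at level $\alpha/3$ --- this is precisely the paper's argument, and your bookkeeping reproduces $\Delta'(\delta,c,\alpha,p)$ exactly.

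The genuine gap is the third bullet, which you outline but, by your own admission, do not close. The difficulty you flag is real: the tangent-line bound $c_n \leq c_{\tau^*} + (n-\tau^*)c'_{\tau^*}$ combined with the first bullet's sufficient condition yields
\begin{equation*}
(n-\tau^*)\bigl(\lVert\delta\rVert^2 - 2c'_{\tau^*}\bigr) \;\geq\; 2c_{\tau^*} - 8\log(\alpha/2),
\end{equation*}
a correction entering \emph{multiplicatively} through the denominator $\lVert\delta\rVert^2 - 2c'_{\tau^*}$. The stated additive delay $\Delta(\delta,c_{\tau^*},\alpha) + 2c'_{\tau^*}/\lVert\delta\rVert^2$ only supplies the extra slack $2c'_{\tau^*}$, whereas the displayed condition requires covering $2(n-\tau^*)c'_{\tau^*}$; subtracting the two, the implication holds only when $2c'_{\tau^*}\bigl(1-(n-\tau^*)\bigr)\geq 0$, i.e.\ when $n-\tau^*\leq 1$. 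The slack hidden in $\Delta$ (your factor $-4\log(\alpha/2)$ versus the stated $-8\log(\alpha/2)$) rescues this only when $(n-\tau^*)c'_{\tau^*}$ is of constant order, e.g.\ $c_n \propto \log n$ with delay at most proportional to $\tau^*$, not in general.

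You should know, however, that this is not a failure of insight on your part relative to the source: the paper's own proof of the third bullet performs exactly this tangent-line substitution, arrives at the requirement $-8\log(\alpha/2)+2\bigl(c_{\tau^*}+(n-\tau^*)c'_{\tau^*}\bigr) \leq (n-\tau^*)\lVert\delta\rVert^2$, and then asserts the additive form of the lemma with only an informal closing remark about $c_n$ being proportional to $\log(n)$; the reduction from the multiplicative to the additive condition is never carried out there either. So your diagnosis that the delicacy lies entirely in propagating the growth-rate correction additively is precisely correct --- it is a delicacy the paper itself leaves unresolved, and as stated the third bullet appears to need an additional hypothesis (e.g.\ $c'_{\tau^*}$ small relative to $\lVert\delta\rVert^2$, or a bound on the delay relative to $\tau^*$) to be true.
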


\begin{proof}
We define $u_n = \sum_{t=\tau+1}^n \langle \frac{\delta}{\lVert \delta \rVert}, \varepsilon_t \rangle$. Note that $u_n$ is $\mathcal{N}(0, 1)$.

We get
    \begin{eqnarray*}
    \left\lVert \frac{\sum_{t=\tau+1}^n y_t}{\sqrt{n-\tau}} \right\Vert^2 & = &\left\lVert \sqrt{n-\tau}\delta + \frac{\sum_{t=\tau+1}^n \varepsilon_t}{\sqrt{n-\tau}} \right\rVert^2 \\
    & = & (n-\tau) \lVert\delta\rVert^2 + 2 \lVert \delta \lVert \sum_{t=\tau+1}^n \langle \frac{\delta}{\lVert \delta \rVert}, \varepsilon_t \rangle + \left\lVert \frac{\sum_{t=\tau+1}^n \varepsilon_t}{\sqrt{n-\tau}} \right\rVert^2 \\
    & \geq & (n-\tau) \lVert\delta\rVert^2 + 2 \lVert \delta \lVert u_n \\
    & \geq & (\sqrt{n-\tau}\lVert\delta\rVert + u_n)^2 - u_n^2\,.
\end{eqnarray*}

Thus, we have:
\begin{equation}\label{eq:H1_1st_imply}
\begin{aligned}
   (n-\tau)\lVert\delta\rVert^2 \ge \left(\sqrt{c_n + u_n^2} - u_n\right)^2 & \iff & (\sqrt{n-\tau}\lVert\delta\rVert + u_n)^2 - u_n^2 \geq c_n \\
   & \implies & \left\lVert \frac{\sum_{t=\tau+1}^n y_t}{\sqrt{n-\tau}} \right\Vert^2 \geq c_n\,.
   \end{aligned}
\end{equation}
Using the fact that for any $(a, b)$ in $\mathbb{R}^2$, $2a^2 + 2b^2 \geq (a-b)^2$ we get
$$(n-\tau)\lVert\delta\rVert^2\ge 2c_n + 4u_n^2 \implies (n-\tau)\lVert\delta\rVert^2 \ge \left(\sqrt{c_n + u_n^2} - u_n\right)^2\,.$$
In terms of probability events, we are looking for a delay $(n-\tau)$ such that the probability of detecting a change is greater than $1-\alpha$, this can be written as:
$$1-\alpha \le \mathbb{P}\Big((n-\tau)\lVert\delta\rVert^2\ge 2c_n + 4u_n^2\Big) \le \mathbb{P}\Big( \left\lVert \frac{\sum_{t=\tau+1}^n y_t}{\sqrt{n-\tau}} \right\Vert^2 \ge c_n\Big)\,.$$
It is known that $u_n$ is $\mathcal{N}(0,1)$ and that $1-\alpha \le \mathbb{P}(-2\log(\alpha/2) > u_n^2)$ so we solve for a fixed threshold $c_n = c$: $-8\log(\alpha/2) \le (n-\tau)\lVert\delta\rVert^2- 2c$, leading to the relation:
$$ n-\tau \ge  \frac{2 c - 8\log(\alpha/2)}{\lVert \delta \rVert^2}\,.$$

For the second result, we proceed similarly but including a lower bound on $\lVert \frac{\sum_{i=\tau+1}^n \varepsilon_i}{\sqrt{n-\tau}} \rVert^2$ using \cite{laurent2000adaptive} 
$$\left\lVert \frac{\sum_{i=\tau+1}^n \varepsilon_i}{\sqrt{n-\tau}} \right\rVert^2 \geq p - 2\sqrt{p\log(3/\alpha)}\,. $$

For the third result, we use the concavity property: $c_n \le c_{n_0} + (n-n_0)c'_{n_0}$ in order to write:
$$-8\log(\alpha/2)+2c_n \le -8\log(\alpha/2)+2( c_{n_0}+(n-n_0)c'_{n_0}) \le (n-\tau)\lVert\delta\rVert^2\,.$$
If the second inequality is true, the full line is true, which leads to the result with $n_0 = \tau$. Notice that the bound $c_n$ is often chosen proportional to $\log(n)$ such that the delay found with constant $c_n$ is shifted by a constant when $\lVert\delta\rVert^2 \ge \frac{1}{n_0}\frac{n-n_0}{n-\tau}$.

\end{proof}

\section{Acknowledgements and Funding}
This work was supported by EPSRC grants EP/Z531327/1, EP/W522612/1 and EP/N031938/1 and an ATIGE grant from Génopole. The IPS2 benefit from the support of Saclay Plant Sciences-SPS [ANR-17-EUR-0007]

\end{appendices}

\end{document}